\definecolor{orcidlogocol}{HTML}{A6CE39}
\tikzset{
	orcidlogo/.pic={
		\fill[orcidlogocol] svg{M256,128c0,70.7-57.3,128-128,128C57.3,256,0,198.7,0,128C0,57.3,57.3,0,128,0C198.7,0,256,57.3,256,128z};
		\fill[white] svg{M86.3,186.2H70.9V79.1h15.4v48.4V186.2z}
		svg{M108.9,79.1h41.6c39.6,0,57,28.3,57,53.6c0,27.5-21.5,53.6-56.8,53.6h-41.8V79.1z M124.3,172.4h24.5c34.9,0,42.9-26.5,42.9-39.7c0-21.5-13.7-39.7-43.7-39.7h-23.7V172.4z}
		svg{M88.7,56.8c0,5.5-4.5,10.1-10.1,10.1c-5.6,0-10.1-4.6-10.1-10.1c0-5.6,4.5-10.1,10.1-10.1C84.2,46.7,88.7,51.3,88.7,56.8z};
	}
}
\newcommand\orcidicon[1]{\href{https://orcid.org/#1}{\mbox{\scalerel*{
				\begin{tikzpicture}[yscale=-1,transform shape]
					\pic{orcidlogo};
				\end{tikzpicture}
			}{|}}}}
\numberwithin{equation}{section}
\newcommand{\id}{\mathrm{id}}
\DeclareMathOperator{\pr}{pr}
\newcommand{\bbLbrack}{[\kern-0.4em{[}\,}
\newcommand{\bbRbrack}{\,]\kern-0.4em{]}}
\newcommand{\kb}{\boldsymbol{k}}
\newcommand{\vb}{\boldsymbol{v}}
\newcommand{\ii}{{\rm i}}
\newcommand{\x}{{\rm x}}
\newcommand{\vol}{{\rm vol}}
\newcommand{\dd}{{\rm d}}
\DeclareMathOperator{\supp}{supp}
\DeclareMathOperator{\End}{End}
\DeclareMathOperator{\Hom}{Hom}
\DeclareMathOperator{\Lin}{Lin} 
\DeclareMathOperator{\Char}{Char}
\newcommand{\phg}{\textnormal{phg}}
\newcommand{\cl}{\textnormal{cl}}
\newcommand{\sub}{\textnormal{sub}}
\DeclareMathOperator{\Op}{Op}
\newcommand{\beq}{\begin{equation}}
\newcommand{\ene}{\end{equation}}
\newcommand{\RR}{\mathbb{R}}
\newcommand{\CC}{\mathbb{C}}
\DeclareMathOperator{\WF}{WF}
\newcommand{\pol}{\textnormal{pol}}
\newcommand{\Loc}{\mathsf{Loc}}
\newtheorem{thm}{Theorem}[section]
\newtheorem{lemma}[thm]{Lemma}
\newtheorem{cor}[thm]{Corollary}
\theoremstyle{definition}
\newcommand{\DD}{\mathscr{D}}
\newcommand{\EE}{\mathscr{E}}
\renewcommand{\SS}{\mathscr{S}}
\newcommand{\sadj}[1]{\leftidx{^{\star}}{#1}{}}
\newcommand{\dlangle}{\langle\!\langle}
\newcommand{\drangle}{\rangle\!\rangle}
\newcommand{\Nc}{\mathcal{N}}
\newcommand{\Rc}{\mathcal{R}}
\newcommand{\Vc}{\mathcal{V}}
\newcommand{\Had}{\mathrm{Had}}
\newcommand{\rfhgho}{RFHGHO}
\newcommand{\ogth}{{\mathfrak o}}
\newcommand{\tgth}{{\mathfrak t}}
\newcommand{\Mb}{{\boldsymbol{M}}}
\newcommand{\Nb}{{\boldsymbol{N}}}
\newcommand{\knl}{\textnormal{knl}}
\begin{document}

\title{Polarisation sets of Green operators for normally hyperbolic equations}
\author{Christopher J. Fewster{\orcidicon{0000-0001-8915-5321}}\thanks{chris.fewster@york.ac.uk}\\[6pt]  
	\small Department of Mathematics, University of York, Heslington, York YO10 5DD, United Kingdom.\\[4pt]
	\small York Centre for Quantum Technologies, University of York, Heslington, York YO10 5DD, United Kingdom.
} 
 
\date{\daymonthyear\today}

\maketitle
 
\begin{abstract}
	The polarisation set of a vector-valued distribution generalises the wavefront set and captures fibre-directional information about its singularities in addition to their phase space description.
	Motivated by problems in quantum field theory on curved spacetimes, we consider normally hyperbolic operators on vector bundles over globally hyperbolic spacetimes, and 
	compute the polarisation sets of the kernel distributions for their advanced and retarded Green operators and the difference thereof. This permits the computation of related polarisation and wavefront sets for operators whose solution theory is related to the normally hyperbolic case. As a particular example, we consider the Proca equation that describes massive relativistic spin-$1$ particles, identifying and closing a gap in a recent paper on that subject.
\end{abstract}
  
\section{Introduction}
\label{sec:Intro}  

Normally hyperbolic partial differential operators play an important role in the theory of quantum fields on curved spacetimes. Some theories are formulated directly using normally hyperbolic equations, while the solution theory of others relies on relating the field equation to an auxiliary normally hyperbolic operator -- see, e.g.,~\cite{BeniniDappiaggi:2015}. The corresponding Green operators, and their singularity structure, are likewise important. Often it is sufficient to understand the wavefront set of the Green operators, but, as will be described below, there are occasions where a more detailed description of the singularities is required. This will be provided by the present paper, which computes the polarisation set~\cite{Dencker:1982} of the advanced and retarded Green operators, and their difference, for general normally hyperbolic operators on vector bundles over globally hyperbolic spacetimes, and allows the wavefront set of related operators to be computed. This paper is a companion to~\cite{Fewster:2025a,FewsterKlein:2025} in which the present results are applied to discuss the class of Hadamard states for a range of quantum fields, including the Proca equation that models massive spin-$1$ particles such as the $W$ and $Z$ bosons of particle physics. On that subject, this paper will also close a gap
in a recent paper~\cite{MorettiMurroVolpe:2023}, as will be discussed in section~\ref{sec:Proca}.
Let us describe the main results in more detail.

Let $(M,g)$ be a Lorentzian manifold with Levi-Civita connection $\nabla$. Writing $\Omega^\alpha$ for the bundle of weight $\alpha$-densities on $M$, we extend $\nabla$ to act on $\Gamma^\infty(\Omega^\alpha)$ so that $\nabla (-g)^{\alpha/2}=0$, where $(-g)^{1/2}$ is the metric density. Now let $B\to M$ be a finite-rank complex vector bundle over $M$, so that $\Gamma^\infty(B\otimes \Omega^{1/2})$ are the $\tfrac{1}{2}$-densitised smooth sections of $B$. 
By definition, a normally hyperbolic operator on $\Gamma^\infty(B)$ is 
a second-order differential operator $P:\Gamma^\infty(B)\to\Gamma^\infty(B)$ whose principal symbol is given by
$p(x,k)=-g_x^{-1}(k,k)\id_{B_x}$. Thus (recalling the correspondence $k_\mu\leftrightarrow -\ii \partial_\mu$) in a local frame for $B$, $P$ is equivalent to a matrix of differential operators $P^{a}_{\phantom{a}b}$ with  
\begin{equation}
	P^{a}_{\phantom{a}b} = g^{\mu\nu}\nabla_\mu \nabla_\nu \delta^{a}_{\phantom{a}b} + U^{a}_{\phantom{a}b}, 
\end{equation}	
where $U^{a}_{\phantom{a}b}$ is a matrix of first order differential operators. Every
normally hyperbolic operator can be put into a simpler standard form. Namely, there exists a \emph{Weitzenb\"ock connection} $\nabla^B$ on $B$ that gathers all the first order terms 
so that 
\begin{equation}
	P = \Box^B+V, \qquad \Box^B:=g^{\mu\nu}\nabla_\mu^{T^*M\otimes B}\nabla_\nu^B ,
\end{equation}
where $V\in \Gamma^\infty(\End(B))$ while $\nabla^{T^*M\otimes B} = \nabla\otimes 1+1\otimes\nabla^B$ is the natural extension of $\nabla^B$ to a connection on $T^*M\otimes B$. The existence of the Weitzenb\"ock connection $\nabla^B$ is established e.g., in Proposition~3.1 of~\cite{BaumKath:1996} or Lemma~1.5.5 of~\cite{BarGinouxPfaffle}; see also section 1 of~\cite{IslamStrohmaier:2020}. As will be seen in Section~\ref{sec:prop_pol}, the Weitzenb\"ock
connection of a normally hyperbolic operator $P$ on $\Gamma^\infty(B)$ governs the propagation of polarisations for the $\tfrac{1}{2}$-densitised operator
$(-g)^{1/4}P (-g)^{-1/4}$ on $\Gamma^\infty(B\otimes \Omega^{1/2})$.

Now suppose that $(M,g)$ is globally hyperbolic, which means that it is time-oriented, contains no closed causal curves, and has the property that every compact subset of $K$ has a compact causal hull. Then 
(see, e.g.,~\cite{BarGinouxPfaffle}) any normally hyperbolic operator $P$ on $\Gamma^\infty(B)$ has retarded ($+$) and advanced $(-)$ Green operators $E^\pm_P:\Gamma_0^\infty(B)\to\Gamma^\infty(B)$ so that $E_P^\pm Pf=f=PE^\pm_Pf$ for all $f\in \Gamma_0^\infty(B)$ and with support
\begin{equation}
	\supp E^\pm_P f \subset J^\pm(\supp f),
\end{equation}
where $J^\pm(S)$ are the causal future/past of $S\subset M$. (See below for our other main conventions.) The difference of the advanced and retarded operators, $E_P=E^-_P-E^+_P$, plays an important role in quantum field theory. 

It is often convenient to consider operators on bundles with an explicit $\tfrac{1}{2}$-density factor.
If $P$ is normally hyperbolic on $\Gamma^\infty(B\otimes \Omega^{1/2})$, then its Green operators
have distributional kernels in $\DD'((B\boxtimes B^*)\otimes \Omega^{1/2}_{M\times M})$, where $B^*$ is the dual bundle to $B$ and $\boxtimes$ denotes the external bundle product. The wavefront sets of these distributions can be described as follows. 
Let $\Nc\subset T^*M$ be the bundle of nonzero null covectors on $(M,g)$, and, for $(x,k),(y,\ell)\in \Nc$,  write $(x,k)\sim (y,\ell)$ if and only if there is a null geodesic segment $\gamma$ connecting $x$ and $y$, with tangent vectors $k^\sharp$ and $\ell^\sharp$ at the endpoints, which are related by parallel transport along $\gamma$ (understanding $(x,k)\sim (x,\ell)$ if and only if $k=\ell$ is null); in this case we say that the geodesic segment $\gamma$ \emph{witnesses} to the relation $(x,k)\sim (y,\ell)$. With these definitions, set
\begin{align}\label{eq:Rcsets}
	\Rc&= \{(x,k;x',-k')\in \Nc\times\Nc: (x,k)\sim (x',k')\}\\ 
	\Rc^\pm & =\{(x,k;x',-k')\in\Rc:   x\in J^\pm(x') \}.
\end{align}
Then the kernel distributions $E_P^\pm$ and $E_P$ have wavefront sets
	\begin{align}
		\WF(E_P^\pm)&= \Rc^\pm\cup\WF(\id),
		\label{eq:WFEPpm}\\
		\WF(E_P)&= \Rc ,
		\label{eq:WFEP}
	\end{align}
where $\id$ is the kernel distribution of the identity operator on $B$.	
This result generalises the scalar case~\cite{DuiHoer_FIOii:1972,Radzikowski_ulocal1996} and was 
proved for bundles as Theorem~A.5 in~\cite{Sanders_dirac:2010} based on a scaling limit structure of the Green operators in terms of Riesz distributions. 
Closely related statements for Feynman parametrices can be found in~\cite{IslamStrohmaier:2020}. 

Our main interest is in computing the polarisation sets of these distributions. The motivation for this
arises in quantum field theory~\cite{Fewster:2025a}, in the description of \emph{Hadamard states} -- generally agreed to be the class of physical states for linear field theories. For example, consider
the Klein--Gordon theory, with classical field equation $(\Box +m^2)\phi=0$. In the quantised theory, every sufficiently regular state determines a distributional two-point function $W\in\DD'(M\times M)$.
The Hadamard condition for states of the quantised theory was originally stated in terms of a Hadamard series expansion of the two-point function at nearby points~\cite{KayWald-PhysRep}, which is somewhat unwieldy when stated with full precision~\cite{KayWald-PhysRep,Moretti:2021}. A major breakthrough was the realisation by  Radzikowski~\cite{Radzikowski_ulocal1996} that a state is Hadamard if and only if its two-point function satisfies the wavefront set condition
\begin{equation}\label{eq:RcHad}
	\WF(W)=\Rc^\Had:=\{(x,k;x',k')\in \Rc: k\in\Nc^+\},
\end{equation}
where $\Nc^+\subset \Nc$ is the bundle of future-pointing nonzero null covectors. 
The Hadamard condition was subsequently described in similar terms for various other linear field theories of interest, notably those based on the Dirac, Proca, Maxwell and linearised Einstein equations. See e.g.,~\cite{Hollands:2001,Sanders_dirac:2010,Few&Pfen03,Hunt:2012thesis} for original sources and~\cite{BeniniDappiaggi:2015} for an exposition of some of the theories mentioned. Here one encounters a problem, because these field equations are not normally hyperbolic. For example, the Proca equation describes $1$-forms $A$ obeying
\begin{equation}
	(-\delta \dd + m^2)A = 0,
\end{equation}
where $\dd$ and $\delta$ are the exterior derivative and codifferential on $(M,g)$ and $m>0$ is the mass parameter.
As $-\delta\dd A=\nabla^\mu (\nabla_\mu A_\nu - \nabla_\nu A_\mu)$, it is clear that the Proca operator
is not normally hyperbolic. However its solution theory is closely related to that of the normally hyperbolic $1$-form Klein--Gordon operator $K^{(1)}=-\delta\dd+ \dd\delta+m^2$, and a short calculation shows that the operators
\begin{equation}\label{eq:EP_Proca}
	E_P^\pm = E_{K^{(1)}}^\pm \circ R
\end{equation}
provide advanced and retarded Green operators for $P=-\delta \dd + m^2$, where $R = 1-m^{-2}\dd\delta$. The definition of Hadamard states for the Proca field given in~\cite{Few&Pfen03} made use of this relationship: a state of the Proca field was said to be Hadamard if its two-point function obeys $W = H\circ (1\otimes R)$ and $H$ is a Hadamard-form bisolution for $K^{(1)}$ obeying $\WF(H) = \Rc^\Had$.

In an interesting recent paper~\cite{MorettiMurroVolpe:2023}, Moretti, Murro and Volpe (MMV) revisited the Proca theory with the aim of giving a more direct definition of the Hadamard condition that directly constrains $\WF(W)$. To do this, they need to compute 
$\WF(E_P)$. Standard bounds on the wavefront set, together with the known form of $\WF(E_{K^{(1)}})$ yield
\begin{equation}
	 \WF(E_P)\subset \Rc \subset \WF(E_P)\cup (\Char(1\otimes R)\cap \Rc),
\end{equation} 
so if the last term on the right-hand side is empty then one has $\WF(E_P)=\Rc$. This is what is claimed in 
part~(b) of the proof of Proposition~4.7 in~\cite{MorettiMurroVolpe:2023}, unfortunately on the basis of an incorrect 
assertion that $\Char(1\otimes R)=T^*M\times 0_{T^*M}$, where for any vector bundle $B$, $0_{B}$ denotes the corresponding zero section (and we drop the subscript where there is no ambiguity). In fact, the principal symbol 
of $R$ acts on $T_x^*M$ by $r(x,k) v  = -m^{-2}g^{-1}(k,v)k$, which vanishes on the annihilator of $k^\sharp$, so $\Char(R)=\dot{T}^*M$ and thus $\Char(1\otimes R)=\dot{T}^*(M\times M)$. Accordingly, 
the standard calculus of wavefront sets cannot be used to determine $\WF(E_P)$ and the MMV paper contains a gap at this point, affecting a number of their results (see~\cite{Fewster:2025a} for further discussion). A similar situation arises for the Dirac equation, whose Green operators are given in terms of the normally hyperbolic spinorial Klein--Gordon operator composed with another operator that, inconveniently, is characteristic on $\Nc$ (see e.g.,~\cite{Kratzert:2000} for discussion).

In this paper, we will establish general results that enable the gap in MMV to be closed -- with the 
result that $\WF(E_P)=\Rc$. The main tools are Dencker's polarisation set and his associated results on the propagation of polarisation~\cite{Dencker:1982}. We recall that the polarisation set $\WF_\pol(u)$ of a vector-valued distribution $u\in\DD'(B\otimes\Omega^{1/2})$ is a subset of the pullback bundle $\pi^*B$, where $\pi:T^*M\to M$ is the bundle projection, so that
\begin{equation}\label{eq:WFpol_def}
	\WF_\pol(u) = \bigcap_{Au\in \Gamma^\infty(\Omega^{1/2})} \{ (x,\xi;w)\in \pi^* B:\xi\neq 0,~w\in\ker \sigma(A)(x,\xi)	\},
\end{equation}	
where the intersection is taken over properly supported classical pseudodifferential operators $A\in\Psi_{\cl}^0(B\otimes\Omega^{1/2},\Omega^{1/2})$ that render $u$ smooth, and $\sigma(A)$ denotes the 
homogeneous principal symbol of $A$. Evidently $\WF_\pol(u)$ contains the zero section of $\pi^*B$ over $\dot{T}^*M:=T^*M\setminus 0$; the wavefront set $\WF(u)$ is given by the set of $(x,k)\in\dot{T}^*M$ over which the fibre of $\WF_\pol(u)$ is nonzero. Further properties of polarisation sets will be described in Section~\ref{sec:polset}.  

To describe our main results, let $P$ be a normally hyperbolic operator on $\Gamma^\infty(B\otimes\Omega^{1/2})$. Let $\nabla^B$ be the Weitzenb\"ock connection on $B$ induced by the
normally hyperbolic operator $(-g)^{-1/4}P(-g)^{1/4}$, and let $\nabla^{B^*}$ be the dual connection on $B$. Next, let $\Pi_{x,k}^{x',k'}\in  \Lin(B_{x'},B_{x})$ be the operator of parallel transport
with respect to $\nabla^B$ along the (unique) null geodesic segment connecting $x'$ to $x$ that witnesses to the relation $(x,k)\sim (x',k')$. If $(x,k)\not\sim (x',k')$, then we set $\Pi_{x,k}^{x',k'}=0$. We can now define the sets 
\begin{equation}\label{eq:Rcpolsets}
	\Rc_\pol^\# =\{(x,k;x',-k';w)\in (\pi\times\pi)^* (B\boxtimes B^*): (x,k;x',-k')\in\Rc^\#,~
	w\in \CC \Pi_{x,k}^{x',k'}
	\},  
\end{equation}
where $\#$ stands for $+$, or $-$ or no symbol, and the sets $\Rc^\#$ were defined in~\eqref{eq:Rcsets}. We have also identified $\Lin(B_x',B_x)$ with $B_x\otimes B_{x'}^*$. At first sight, there appears to be 
an asymmetry in~\eqref{eq:Rcpolsets}, but note that $(x,k)\sim(x',k')$ if and only if $(x,-k)\sim(x',-k')$ and $\Pi_{x,k}^{x',k'}=\Pi_{x,-k}^{x',-k'}$. Our main result can now be stated.
\begin{thm}\label{thm:WFpolEPpmandEP}
	Let $B$ be a finite-rank complex vector bundle over a globally hyperbolic spacetime $(M,g)$.
	Let $P$ be a normally hyperbolic operator on $\Gamma^\infty(B\otimes\Omega^{1/2})$, and $\Rc_\pol^\pm$, $\Rc_\pol$ be as defined in~\eqref{eq:Rcpolsets}. Then:
	\begin{enumerate}[a)]
	\item The polarisation sets of the kernel distributions for the Green operators $E^\pm_P$ of $P$ satisfy
	\begin{equation}\label{eq:WFpolEPpm}
		\WF_\pol(E_P^\pm) = \Rc^\pm_\pol \cup \WF_\pol(\id),
	\end{equation}
	where the polarisation set of the kernel distribution of the identity operator on $\Gamma_0^\infty(B\otimes\Omega^{1/2})$ is
	\begin{equation}\label{eq:WFpol_id}
		\WF_\pol(\id)= \{(x,k;x,-k;w)\in (\pi\times \pi)^*(B\boxtimes B^*): k\in \dot{T}^*_xM,~w\in\CC\delta_x\}\cup 0,
	\end{equation}
	in which $\delta_x$ is the image of $\id_{B_x}$ under the identification of $\Lin(B_x)$ with $B_x\otimes B_x^*$.
	
	\item The polarisation set of the kernel distribution of $E_P=E_P^--E_P^+$ is
	\begin{equation}
		\WF_\pol(E_P) = \Rc_\pol\cup 0.
	\end{equation}
	\item Consequently, the wavefront sets of $E_P^\pm$ and $E_P$ are 
	\begin{equation}
		\WF(E_P^\pm)=\Rc^\pm\cup \WF(\id), \qquad   \WF(E_P)=\Rc.
	\end{equation}
	\end{enumerate}
\end{thm}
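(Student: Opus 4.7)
The overall strategy combines three ingredients: the identities $PE_P^\pm=\id$, $E_P^\pm P=\id$, and the derived $PE_P=E_PP=0$; the wavefront-set results \eqref{eq:WFEPpm}--\eqref{eq:WFEP}; and Dencker's propagation of polarisation theorem, which for the normally hyperbolic $P$ on $\Gamma^\infty(B\otimes\Omega^{1/2})$ will, by Section~\ref{sec:prop_pol}, be governed by the Weitzenb\"ock connection $\nabla^B$ in the first factor and its dual $\nabla^{B^*}$ in the second. The plan is to reduce the global polarisation computation to identifying the initial polarisation at the diagonal --- encoded in $\WF_\pol(\id)$ --- and then transporting it along null bicharacteristics.

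For part (a), first I would compute $\WF_\pol(\id)$ directly from \eqref{eq:WFpol_def}: since $A\cdot\id$ has integral kernel equal to $A$ for any properly supported $A\in\Psi_{\cl}^0(B\otimes\Omega^{1/2},\Omega^{1/2})$, the intersection runs over those $A$ whose principal symbol $\sigma(A)(x,k)$ vanishes, giving \eqref{eq:WFpol_id}. For $\WF_\pol(E_P^\pm)\supseteq\Rc_\pol^\pm\cup\WF_\pol(\id)$, I would apply Dencker's theorem: off the diagonal $PE_P^\pm=0$ microlocally, so polarisation propagates along the bicharacteristic flow of $p(x,k)=-g^{-1}(k,k)\id_{B_x}$. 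Starting from $(x,k;x,-k;\delta_x)\in\WF_\pol(\id)$ on the diagonal and transporting via $\nabla^B\otimes\id$ along the null geodesic through $(x,k)$, one arrives at $(x'',k'';x,-k;\Pi_{x'',k''}^{x,k})\in\Rc_\pol^\pm$, the restriction to $J^\pm$ being enforced by the support property $\supp E_P^\pm f\subset J^\pm(\supp f)$. For the reverse inclusion, any $(x,k;x',-k';w)\in\WF_\pol(E_P^\pm)$ has projection in $\Rc^\pm\cup\WF(\id)$ by \eqref{eq:WFEPpm}; off the diagonal the bicharacteristic necessarily crosses the diagonal at $(x',k';x',-k')$, and uniqueness in Dencker's propagation theorem forces the traced-back polarisation to lie in $\WF_\pol(\id)$, so $w\in\CC\Pi_{x,k}^{x',k'}$.

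Part (b) follows from (a) via $E_P=E_P^--E_P^+$. Both $\WF_\pol(E_P^\pm)$ have fibre $\CC\delta_x$ on each diagonal point $(x,k;x,-k)$ inherited from $\WF_\pol(\id)$; the subtraction can only reduce this fibre, and since $\WF(E_P)=\Rc$ contains no non-null diagonal covectors, the $\WF_\pol(\id)$ contribution must cancel entirely on the non-null diagonal, yielding $\WF_\pol(E_P)\subseteq\Rc_\pol\cup 0$. For the reverse inclusion, $PE_P=0$ holds without source, so Dencker's propagation theorem applies globally and polarisations transport along null bicharacteristics via $\nabla^B$; the nontriviality of $\WF(E_P)=\Rc$ along each such bicharacteristic forces the fibre to equal $\CC\Pi_{x,k}^{x',k'}$. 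Part (c) is then immediate by projecting the polarisation sets to $\dot{T}^*(M\times M)$.

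The main obstacle I anticipate is aligning Dencker's abstract propagation theorem with the concrete Lorentzian setting: tracking sign conventions on cotangent entries (the asymmetric choice $-k'$ in the second slot), identifying the connection on $B\boxtimes B^*$ that propagates kernel polarisations consistently under both $PE_P^\pm=\id$ and $E_P^\pm P=\id$, and handling the density-weighting in the passage from $P$ on $B\otimes\Omega^{1/2}$ to $(-g)^{-1/4}P(-g)^{1/4}$ on $B$ required by the Weitzenb\"ock construction. A subtler issue is the precise cancellation in part (b), where $\WF_\pol(\id)$ contributions must vanish on the non-null diagonal while on the null diagonal they coincide with (and are absorbed into) $\Rc_\pol$ via $\Pi_{x,k}^{x,k}=\id_{B_x}$.
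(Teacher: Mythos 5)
Your proposal correctly identifies the main tools (the kernel identities $PE_P^\pm=\id=E_P^\pm P$, Dencker's propagation of polarisation governed by the Weitzenb\"ock connection and its dual, and the support property of $E_P^\pm$), and the intended reduction to the twisted diagonal is the right idea. However, there are two genuine gaps, both arising from the fact that Dencker's theorem (Theorem~\ref{thm:prop_pol}) only applies along bicharacteristic segments \emph{not meeting} $\WF(Pu)$, and for the equations $(P\otimes 1)E_P^+=\id=(1\otimes\sadj{P})E_P^+$ the obstruction set $\WF(\id)$ is precisely the twisted diagonal --- the very set you want to start from and trace back to.

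\textbf{Gap~1 (forward inclusion).} You propose to start at $(x,k;x,-k;\delta_x)\in\WF_\pol(\id)$ and transport along the null geodesic through $(x,k)$. But this point lies in $\WF((P\otimes 1)E_P^+)$, so the propagation theorem for $P\otimes 1$ cannot be initiated there; there is no guarantee that the Hamilton orbit through $(x,k;x,-k;\delta_x)$ lies in $\WF_\pol(E_P^+)$. The paper's fourth step instead constructs a deformed normally hyperbolic operator $\tilde{P}$ that agrees with $P$ near $x$ but is diagonal in a suitable region $N^-$ in the past, so that $\WF_\pol(E_{\tilde{P}}^+)$ is explicitly computable there by Lemma~\ref{lem:diagonalcase}. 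Nontriviality of the fibre over a generic $(x,k;x',-k')\in\Rc^+$ is then established via a chain of propagations, each segment carefully chosen to avoid the twisted diagonal (see Fig.~\ref{fig:fourth_step}). Without the deformation and the intermediate points this argument does not go through.

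\textbf{Gap~2 (reverse inclusion).} You assert that ``uniqueness in Dencker's propagation theorem forces the traced-back polarisation to lie in $\WF_\pol(\id)$, so $w\in\CC\Pi_{x,k}^{x',k'}$.'' This tacitly assumes $\WF_\pol(E_P^+)|_{(x',k';x',-k')}\subset\CC\delta_{x'}$ for $(x',k')\in\Nc$, but that inclusion is not available a priori. Lemma~\ref{lem:WFpol0} combined with $(P\otimes 1)E_P^+=\id$ gives $(p\otimes 1)\WF_\pol(E_P^+)\subset\WF_\pol(\id)$, but at a null twisted-diagonal point the principal symbol $(p\otimes 1)(x',k';x',-k')=q(x',k')\id$ vanishes; so $\ker(p\otimes 1)$ is the whole fibre and~\eqref{eq:WFpolgenbds} imposes no constraint. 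The missing step is exactly the paper's third step: deform $P$ to $\tilde{P}$ as above, use Lemma~\ref{lem:diagonalcase} in the diagonal region, and propagate along $\tilde{P}\otimes 1 - 1\otimes\sadj{\tilde{P}}$ (which annihilates $E_{\tilde{P}}^+$ \emph{everywhere}, so Dencker's theorem applies across the twisted diagonal) to conclude $\WF_\pol(E_P^+)|_{(x,k;x,-k)}\subset\CC\delta_x$. The operator $P\otimes 1 - 1\otimes\sadj{P}$ is the right vehicle precisely because $E_P^+$ is a homogeneous solution for it; your proposal never invokes it and hence cannot close the diagonal.

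As a further point, your argument for part~(b) (``subtraction can only reduce this fibre'') is not a valid rule for polarisation sets --- Lemma~\ref{lem:WFpol_addition} only gives additivity away from $\WF(u)\cap\WF(v)$. The paper handles off-diagonal points of $\Rc$ by noting that each such point lies in exactly one of $\WF(E_P^\pm)$ (so Lemma~\ref{lem:WFpol_addition} applies) and then settles the null-diagonal points by propagation along $c\times(x,-k)$ using $(P\otimes 1)E_P=0$. You should also be careful that invoking~\eqref{eq:WFEPpm}--\eqref{eq:WFEP} as ingredients makes part~(c) circular if the aim is a self-contained proof; the paper derives those wavefront sets as a byproduct rather than assuming them.
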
 
Part~(c) follows trivially from~(a) and~(b), and provides an alternative proof of~\eqref{eq:WFEPpm} and~\eqref{eq:WFEP} that avoids scaling limits or an expansion into Riesz distributions. Special cases of Theorem~\ref{thm:WFpolEPpmandEP} have been studied before~\cite{Kratzert:2000} but the result seems to be new in this generality. 

A simple consequence of Theorem~\ref{thm:WFpolEPpmandEP} is
\begin{cor}\label{cor:WFpol2}  Under the hypotheses of Theorem~\ref{thm:WFpolEPpmandEP},  
	suppose that $Q\in\Psi_\cl^{m_Q}(B\otimes \Omega^{1/2},\hat{B}\otimes\Omega^{1/2})$ and
	$R\in\Psi_\cl^{m_R}(\tilde{B}\otimes\Omega^{1/2},B\otimes\Omega^{1/2})$ have principal symbols $q$ and $r$, where $\hat{B}\to M$ and
	$\tilde{B}\to M$ are finite-rank complex vector bundles. 
	
	If for all $(x,k;x',-k')\in \Rc$ one has
	\begin{equation}
		q(x,k)\circ \Pi_{x,k}^{x',k'}\circ r(x',k')\neq 0
	\end{equation}
	then $\WF(QE_PR)=\Rc$. This holds in particular if
	$Q=\id$ and $r$ is nonvanishing on $\Nc$, or if $R=\id$ and $q$ is nonvanishing on $\Nc$.
\end{cor}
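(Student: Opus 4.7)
The plan is to reduce the claim to Theorem~\ref{thm:WFpolEPpmandEP}(b) by recognising the Schwartz kernel of $QE_PR$ as the image of the kernel of $E_P$ under a single properly supported classical pseudodifferential operator on $M\times M$, and then tracking how its principal symbol acts on the polarisation vectors supplied by that theorem.

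First I would express $QE_PR$ at the level of Schwartz kernels as $\mathcal{Q}\,K_{E_P}$, where $\mathcal{Q}:=Q\boxtimes R^t$ is the external tensor product of $Q$ with the formal transpose $R^t\in\Psi_\cl^{m_R}(B^*\otimes\Omega^{1/2},\tilde{B}^*\otimes\Omega^{1/2})$. Then $\mathcal{Q}$ is a classical pseudodifferential operator on $M\times M$ from $(B\boxtimes B^*)\otimes\Omega^{1/2}_{M\times M}$ to $(\hat{B}\boxtimes\tilde{B}^*)\otimes\Omega^{1/2}_{M\times M}$, and its principal symbol at $(x,\xi;y,\eta)$, acting on the fibre $B_x\otimes B_y^*\cong\Lin(B_y,B_x)$, takes the form
\[
   \Pi\longmapsto q(x,\xi)\circ\Pi\circ r(y,-\eta),
\]
with the sign flip $\eta\mapsto-\eta$ coming from the standard convention for the principal symbol of the formal transpose.

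Next I would invoke the behaviour of $\WF_\pol$ under pseudodifferential operators (a direct consequence of~\eqref{eq:WFpol_def}, to be established in Section~\ref{sec:polset}): for any properly supported classical $\Psi$DO $A$ with principal symbol $a$ and any distribution $u$, $(x,\xi;w)\in\WF_\pol(u)$ implies $(x,\xi;a(x,\xi)w)\in\WF_\pol(Au)$, so that $a(x,\xi)w\neq 0$ forces $(x,\xi)\in\WF(Au)$. Applied at an arbitrary $(x,k;x',-k')\in\Rc$, Theorem~\ref{thm:WFpolEPpmandEP}(b) supplies $w_0=\Pi_{x,k}^{x',k'}$ in the fibre of $\WF_\pol(E_P)$ there, nonzero since parallel transport is invertible, and the hypothesis of the corollary reads precisely
\[
   \mathcal{Q}(x,k;x',-k')(w_0)\;=\;q(x,k)\circ\Pi_{x,k}^{x',k'}\circ r(x',k')\;\neq\;0,
\]
so $(x,k;x',-k')\in\WF(QE_PR)$. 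This yields $\Rc\subset\WF(QE_PR)$, and the opposite inclusion follows from $\WF(\mathcal{Q}K_{E_P})\subset\WF(K_{E_P})$ together with Theorem~\ref{thm:WFpolEPpmandEP}(c). For the two special cases, invertibility of $\Pi_{x,k}^{x',k'}$ reduces the nonvanishing hypothesis to $r(x',k')\neq 0$ when $Q=\id$, and symmetrically to $q(x,k)\neq 0$ when $R=\id$.

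The main obstacle I anticipate is the first step: cleanly packaging the two-sided action of $Q$ and $R$ as a single pseudodifferential operator $\mathcal{Q}$ on $M\times M$ with the correctly transposed principal symbol, so that the fibrewise action really is $\Pi\mapsto q\circ\Pi\circ r$ at the relevant covector. Once that identification is in place, the corollary is essentially a one-line symbol calculation layered on top of Theorem~\ref{thm:WFpolEPpmandEP}.
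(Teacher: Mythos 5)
Your proposal is correct and mirrors the paper's own argument: the paper likewise identifies the kernel of $QE_PR$ with $(Q\otimes\sadj{R})E_P^\knl$, applies Lemma~\ref{lem:WFpol0} to the principal symbol $q(x,k)\otimes r(x',k')^*$ acting on the polarisation vector $\Pi_{x,k}^{x',k'}$ supplied by Theorem~\ref{thm:WFpolEPpmandEP}(b), and notes that invertibility of $\Pi_{x,k}^{x',k'}$ handles the two special cases. The only differences are cosmetic (your $R^t$ is the paper's formal dual $\sadj{R}$, and the paper obtains both inclusions in one stroke from the final clause of Lemma~\ref{lem:WFpol0}).
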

Using Corollary~\ref{cor:WFpol2} we can close the gap in MMV quite easily, because~\eqref{eq:EP_Proca} corresponds to the situation where $Q=\id$ and $r$ is nonvanishing, albeit everywhere characteristic. We will go into more detail about this in section~\ref{sec:Proca}, where we will also compute the polarisation set of the Proca advanced-minus-retarded operator.

Corollary~\ref{cor:WFpol2} is intended as a useful general tool to facilitate the use of Theorem~\ref{thm:WFpolEPpmandEP}, particularly when applied to quantum field theory in curved spacetimes.
Further applications appear in~\cite{Fewster:2025a,FewsterKlein:2025}. In particular, Ref.~\cite{Fewster:2025a} introduces a generalised Hadamard condition for states of QFTs based on a class of real, formally hermitian, Green-hyperbolic operators (\rfhgho's) satisfying a decomposability condition on the wavefront set of the advanced-minus-retarded operator $E_P$, namely that
\begin{equation}
	\WF(E_P) \subset (\Vc^+\times \Vc^-)\cup (\Vc^-\times\Vc^+),
\end{equation}
where $\Vc^\pm$ are nonintersecting opposite conic sets ($\Vc^-=-\Vc^+$) that are relatively closed in $\dot{T}^*M$, whereupon a state is said to be $\Vc^+$-Hadamard if its two-point function $W$ satisfies $\WF(W)\subset \Vc^+\times\Vc^-$.	The cones $\Vc^\pm$ need not be related to the spacetime metric. Corollary~\ref{cor:WFpol2} is used in cases (including the Proca equation) where the advanced-minus-retarded operator of a \rfhgho\ $T$ takes the form $E_T= QE_PR$, where $P$ is a normally hyperbolic operator. The conclusion is that $T$ is then decomposable with respect to the usual null cones $\Nc^\pm$ and that the $\Nc^+$-Hadamard condition implies~\eqref{eq:RcHad} in such cases --- see Corollary 5.6 of~\cite{Fewster:2025a}. These results also feed into Section~7 of~\cite{Fewster:2025a}, where the existence of Hadamard states for the neutral Proca field on arbitrary globally hyperbolic spacetimes is established and the $\Nc^+$-Hadamard condition is shown to be equivalent to the MMV definition and the definition given originally in~\cite{Few&Pfen03}. Meanwhile, in~\cite{FewsterKlein:2025} results from~\cite{Fewster:2025a} are used to show that Hadamard states exist for the charged Proca field in an external electromagnetic field and for a variety of other coupled Proca theories. The solution theory of these systems turns out to be quite involved, but can be related to that of normally hyperbolic operators so that Corollary~\ref{cor:WFpol2} applies. We expect that other applications will follow.

The paper is structured as follows. After some preliminaries in section~\ref{sec:psdo}, the polarisation set is described in section~\ref{sec:polset} along with its basic properties. In particular, Dencker's results on the propagation of polarisation are set out in section~\ref{sec:prop_pol}. Our main results, Theorem~\ref{thm:WFpolEPpmandEP} and Corollary~\ref{cor:WFpol2}, are proved in section~\ref{sec:polGreen}.
These results are applied to the specific example of the neutral Proca field in section~\ref{sec:Proca}, where they are used to close the gap in the MMV paper~\cite{MorettiMurroVolpe:2023}. In fact we go beyond what is necessary for that task and compute the full polarisation set of the advanced-minus-retarded operator for the Proca field in any globally hyperbolic spacetime. This has the interesting, and perhaps inconvenient, feature that it is influenced by the constraint that distinguishes solutions of the Proca equation among Klein--Gordon solutions, rather than by the propagating physical degrees of freedom. We conclude with a summary and outlook in section~\ref{sec:Conc}. To make the presentation in sections~\ref{sec:polset} and~\ref{sec:polGreen} accessible to a wider readership, some of the basic definitions of pseudodifferential operator theory are summarised in appendix~\ref{appx:psido}.

To conclude this introduction, we mention some further literature on Hadamard states, particularly in relation to pseudodifferential techniques and polarisation sets. A wider selection of references is given in~\cite{Fewster:2025a}, but it is relevant to mention that pseudodifferential methods have played a role in the construction of Hadamard and adiabatic states since the works of Junker~\cite{Junker:1996,Junker:1996_erratum} and Junker and Schrohe~\cite{JunkerSchrohe}, and subsequently by G\'erard, Wrochna and coauthors who have given constructions for Hadamard states of the scalar field~\cite{GerardWrochna:2014,GerardOulghaziWrochna:2017} and also linear theories with gauge freedom including the linearised Yang--Mills~\cite{GerardWrochna:2015} and linearised gravity~\cite{Gerard:2023}; see also~\cite{Gerard:2019} for an exposition. Microlocal formulations of Hadamard states for vector-valued fields were first studied in generality by Sahlmann and Verch~\cite{SahlmannVerch:2000RMP}, including the Dirac field, while the works on the Dirac field by Kratzert~\cite{Kratzert:2000} and Hollands~\cite{Hollands:2001} also include discussion of the polarisation set of the two-point function. However it was not clear how much was gained by using the polarisation set, and a consensus formed -- shared by this author -- that the wavefront set was sufficient for the description of Hadamard states. So, for example, Sanders' comprehensive treatment of the Dirac field~\cite{Sanders_dirac:2010} only refers to the polarisation set as a means to prove propagation of singularities results and G\'erard's monograph~\cite{Gerard:2019} does not mention it at all. The recent discussion of Hadamard states for Proca fields~\cite{MorettiMurroVolpe:2023,Fewster:2025a,FewsterKlein:2025} has shown that the consensus view was not entirely correct and that there are some occasions where the polarisation set provides valuable finer detail that is encapsulated in the results presented here.

\section{Background on distributions and pseudodifferential operators}\label{sec:psdo}

\paragraph{General conventions} The symbol $\subset$ does not exclude equality. When symbols are stacked vertically, e.g., $E^\pm$, the alternatives are ordered from top to bottom. 
Duality pairings $V^*\times V\to \CC$ are indicated by double
angle-brackets, $\dlangle \cdot,\cdot\drangle_{V}$; the subscript is dropped where there is no ambiguity.
Lorentzian metrics have mostly minus signature.

\paragraph{Manifolds and bundles} Let $X$ be a smooth $n$-manifold. If $E$ is a finite-rank complex vector bundle over $X$ then $\Gamma^\infty(E)$ denotes the space of smooth sections of $E$ and $\Gamma^\infty_0(E)$ those of compact support. The bundle of weight-$\alpha$ densities over $X$ is denoted
$\Omega^\alpha_X$, suppressing the subscript if there is no ambiguity about the base manifold. The notation $\dot{T}^*X$ denotes the cotangent bundle of $X$ with its zero section excised. For bundles 
$E$ and $F$, $E\boxtimes F$ is the external tensor product bundle over the Cartesian product of the base spaces, while, if applicable, $E\otimes F$ is the tensor product bundle over the common base space of $E$ and $F$.

\paragraph{Distributions and kernels} We regard distributions as generalised functions, so $\DD'(X)=\Gamma_0^\infty(\Omega_X)'$.
Similarly, for a bundle $E$ over $X$, $\DD'(E)= \Gamma_0^\infty(E^*\otimes\Omega_X)'$ 
 with the weak topology, and there is an inclusion $\iota_X:\Gamma^\infty(E)\to\DD'(E)$  
\begin{equation}
	(\iota_X e)(u) = \int_X \dlangle u,e\drangle_{E}, \qquad e\in \Gamma^\infty(E),~u\in \Gamma_0^\infty(E^*\otimes\Omega_X).
\end{equation}
Here, we have extended the notation for duality pairing to densitised sections, so that it takes values in densities of the total weight of the arguments. 
Half-densitised sections have the convenient feature that $\DD'(E\otimes\Omega_X^{1/2})=\Gamma_0^\infty(E^*\otimes\Omega^{1/2}_X)'$.

Next, suppose $E$ and $F$ are bundles over manifolds $X$ and $Y$, and
$T:\Gamma_0^\infty(E\otimes \Omega^{1/2}_X)\to \Gamma^\infty(F\otimes \Omega^{1/2}_Y)$ is a continuous linear map. Then there is a kernel
distribution $T^\knl\in \DD'((F\boxtimes E^*)\otimes (\Omega_{Y\times X}^{1/2}))$ defined by
\begin{equation}
	T^\knl(v\otimes e)= \int \dlangle v, Te\drangle_{F}, \qquad
	e\in \Gamma_0^\infty(E\otimes \Omega^{1/2}),~v\in \Gamma_0^\infty(F^*\otimes \Omega^{1/2})
\end{equation}
Typically we will write the $\knl$ superscript only for emphasis.

\paragraph{Pseudodifferential operators}
Let $X$ be a smooth $n$-manifold and let $\pi:T^*X\to X$ be the projection map. For any $m\in\RR$, $\Psi_{\cl}^m(X)$ will denote the space of classical (i.e., properly supported, $1$-step polyhomogeneous) pseudodifferential operators on $X$. For convenience the definition of  $\Psi_{\cl}^m(X)$ and related concepts, such as the symbol classes $S^m(X)$, are recalled in Appendix~\ref{appx:psido} following~\cite{Hormander3}, while the following summary serves to fix notation.

Each $A\in \Psi_{\cl}^m(X)$ is in particular a continuous linear map
$A:C_0^\infty(X)\to C_0^\infty(X)$ that extends continuously to $A:\EE'(X)\to\EE'(X)$. There is an associated principal symbol $\sigma(A)$, often written $a$, which belongs to the symbol class $S^m(X)\subset C^\infty(T^*X)$ and is homogeneous of degree $m$ in the covector argument outside a neighbourhood of the zero section, where it is uniquely determined by $A$; conversely $\sigma(A)$ uniquely determines $A$ up to addition of elements in $\Psi_\cl^{m-1}(X)$. A basic example is that, for any vector field $V\in \Gamma^\infty(TX)$, the symbol given by $a(x,\xi)=\ii\dlangle \xi,V_x\drangle$ corresponds to the directional derivative operator $\nabla_V$ modulo operators in $\Psi_\cl^0(X)$.  
Every chart on $X$ determines a chart representative of $A\in \Psi_\cl^m(X)$, which is specified up to smoothing operators by its full symbol.

If $E$ and $F$ are finite-rank complex vector bundles over $X$, 
then $\Psi_\cl^m(E,F)$ is the
space of continuous linear maps $A:\Gamma_0^\infty(E)\to \Gamma_0^\infty(F)$ so that, for each $e\in \Gamma^\infty(E)$, $v\in\Gamma^\infty(F^*)$, one has
$A^v_{\phantom{v}e}\in\Psi_\cl^m(X)$, where $A^v_{\phantom{v}e}f=\dlangle v,A (fe)\drangle$. We write
$\Psi_\cl^m(E)$ for $\Psi_\cl^m(E,E)$. Each $A\in \Psi_\cl^m(E,F)$ has a principal symbol
$a\in S^m(\pi^*\Hom(E,F))$ so that
\begin{equation}
	\sigma(A^v_{\phantom{v}e})(x,\xi) = \dlangle v_x,a(x,\xi) e_x\drangle, 
\end{equation}
for all $e\in\Gamma^\infty(E)$, $v\in\Gamma^\infty(F^*)$, and $(x,\xi)\in T^*X$. We define 
\begin{equation}
	\ker a = \{(x,\xi;w)\in \pi^*E: \xi\neq 0,~a(x,\xi)w=0\},
\end{equation}
and describe $A$ or $a$ as \emph{characteristic} (resp., \emph{noncharacteristic})at $(x,\xi)\in\dot{T}^*X$ if 
$\ker a|_{(x,\xi)}\subset E_x$ is nontrivial (resp., trivial), also writing
\begin{equation}
	\Char A = \{(x,\xi)\in\dot{T}^*X: \ker a(x,\xi)\neq 0\}.
\end{equation}  
If $E$ and $F$ have equal rank, $a$ is noncharacteristic at $(x,\xi)$ if and only if there is a symbol in $S^{-m}(\pi^*\Hom(F,E))$ that is inverse to 
$a$ in a conic neighbourhood of $(x,\xi)$. 

\paragraph{Subprincipal and refined principal symbols} Any operator $A\in\Psi^m_\cl(\Omega^{1/2})$ has a \emph{refined principal symbol} $a^\textnormal{r}\in S^m(X)/S^{m-2}(X)$ defined as follows (see \cite[18.1.33]{Hormander3}). Let $(X_\kappa,\kappa)$ be a chart in $X$ and let the corresponding chart representative of $A$ have full symbol $a_\kappa\in 
S^m(\kappa(X_\kappa)\times\RR^n)$. One defines $a^\textnormal{r}_\kappa\in C^\infty(T^*\RR^n|_{\kappa(X_\kappa)})$ by
\begin{equation}\label{eq:refinedprincipal}
	a_\kappa^\textnormal{r}(x,\xi) = a_\kappa(x,\xi) + \frac{\ii}{2}\frac{\partial^2 a_\kappa}{\partial x^\mu\partial \xi_\mu}(x,\xi) .
\end{equation}
Then $a^\textnormal{r}\in S^{m}(T^*X)/S^{m-2}(T^*X)$ is uniquely defined by requiring $a^\textnormal{r}\circ  \kappa^* -
a_\kappa^\textnormal{r}\in S^{m-2}(\kappa(X_\kappa)\times\RR^n)$ for all local coordinate systems. (Here, we identify $\kappa(X_\kappa)\times\RR^n$ with $T^*\kappa(X_\kappa)$ for the purpose of applying the pullback $\kappa^*:T^*\kappa(X_\kappa)\to T^*X$.) Furthermore, $A$ has a \emph{subprincipal symbol} $a^\sub\in S^{m-1}(X)$ that, outside a neighbourhood of the zero section, is homogeneous of degree $m-1$ in the covector argument and obeys $a^\textnormal{r}=a+a^\sub \pmod {S^{m-2}(X)}$.  

The subprincipal symbol generalises to operators $A\in\Psi^m_\cl(E\otimes\Omega^{1/2})$ with scalar principal symbol, $a = b\id_E$. Consider sections $e_r\in \Gamma^\infty(E)$ and
$v_r\in \Gamma^\infty(E^*)$ ($1\le r\le \textnormal{rk}E$) that provide a local framing for $E$ and a dual framing of $E^*$.
Then one has an $r\times r$-matrix $(A^{s}_{\phantom{s}r})_{r,s=1}^{\textnormal{rk} E}$ with entries $A^{s}_{\phantom{s}r}:=A^{v_s}_{\phantom{v_s}e_r}\in \Psi^m_\cl(\Omega^{1/2})$, each of which has its own refined principal symbol. The matrix of refined principal symbols $a^r$ does not transform as a section of $\pi^*\Hom(E)$, but there is a related invariantly defined object obtained as follows. Let $\nabla^E$ be any connection on $E$, and define connection $1$-forms with respect to the framing $e_r$ by $(\Gamma^E_V)^{s}_{\phantom{s}r} =\dlangle v_s, \nabla^E_V e_r\drangle$ for any $V\in \Gamma^\infty(TM)$. The pullback connection $\nabla^{\pi^*E}$ on $\pi^*E$ is defined so that $\nabla_W^{\pi^*E}\pi^*s |_{(x,k)}= \nabla_{\pi_* W}^E s|_x$ for $(x,k)\in T^*M$, $s\in \Gamma^\infty(E)$ and $W\in T_{(x,k)}T^*M$. Its connection $1$-forms obey $\Gamma^{\pi^*E}_W = \Gamma^E_{\pi_*W}\circ \pi$ relative to the framing $\pi^* e_r$. 
The connection $1$-forms and refined principal symbols have a related transformation law under a change of framing. Namely, 
if $X_b$ is the Hamiltonian vector field
\begin{equation}
	X_b = \frac{\partial b_\kappa}{\partial \xi_\mu}\frac{\partial}{\partial x^\mu} - \frac{\partial b_\kappa}{\partial \xi_\mu}\frac{\partial}{\partial \xi^\mu}
\end{equation}
determined by $b$, then the matrix $a^r+\ii\Gamma^{\pi^*E}_{X_b}$ 
transforms, modulo $S^{m-2}(\pi^*\Hom(E))$, as a section of  
$\pi^*\Hom(E)$ (see Appendix~\ref{appx:psido} for more detail on this point). In this way, the matrix of subprincipal symbols
$a^\sub = a - a^\textnormal{r}$ can be regarded as a partial connection along the Hamiltonian flow of $b$.

\paragraph{The formal dual}
Each $A\in\Psi^m_\cl(E\otimes\Omega^{1/2},F\otimes\Omega^{1/2})$ has a formal dual $\sadj{A}\in 
\Psi^m_\cl(F^*\otimes\Omega^{1/2},E^*\otimes\Omega^{1/2})$, so that $(\sadj{A}u)(f)=u(Af)$
for $u\in\DD'(F^*\otimes\Omega^{1/2})$, $f\in\Gamma_0^\infty(E\otimes\Omega^{1/2})$. Its principal symbol is
\begin{equation}
	(\sadj{a})(x,\xi) = a(x,-\xi)^*,
\end{equation}
where the star on the right-hand side denotes the dual map. This is easily obtained  as a modification of Theorem~18.1.7 (cf.\ also Theorem 18.1.34${}'$) in~\cite{Hormander3}.

Suppose $E$ and $F$ are finite-rank complex vector bundles over smooth manifolds $X$ and $Y$. If $T:\Gamma_0^\infty(E\otimes \Omega^{1/2}_X)\to \Gamma^\infty(F\otimes \Omega^{1/2}_Y)$ is a continuous linear map and $R\in\Psi^m_\cl(E\otimes\Omega_X^{1/2})$, 
$Q\in\Psi^{m'}_\cl(F\otimes\Omega_Y^{1/2})$, then one has the identity
\begin{equation}
	(QTR)^\knl = (Q\otimes\sadj{R})T^\knl
\end{equation}
 for the kernel distributions.

\section{The polarisation set}\label{sec:polset}

\subsection{Definition and basic properties} 

We take the calculus of wavefront sets for scalar or densitised distributions for granted~\cite{Hormander1,Hoer_FIOi:1971}. For 
$u\in\DD'(E\otimes\Omega^{1/2})$, one defines
\begin{equation}\label{eq:WFbndldef}
	\WF(u) = \bigcup_{s\in\Gamma^\infty(E^*)} \WF(f\mapsto u(f s)),
\end{equation}
which can be restricted to a union over sections forming a local frame for $E^*$ near $x$, for the purposes of computing $\WF(u)$ at $x$. It is easily checked that various standard results on scalar distributions generalise immediately. In particular, one has 
\begin{equation}\label{eq:WFPu}
	\WF(Pu)\subset \WF(u)
\end{equation}
for all pseudodifferential operators $P$. Similarly, a sufficient condition for a continuous linear map $T:\Gamma_0^\infty(E\otimes \Omega^{1/2}_X)\to \Gamma_0^\infty(F\otimes \Omega^{1/2}_Y)$ to act on $u\in \DD'(E\otimes\Omega^{1/2}_X)$ is that 
$\WF'(T^\knl)\cap (0\times\WF(u))=\emptyset$, whereupon
\begin{equation}
	\WF(Tu)\subset \WF'(T^\knl)\bullet \WF(u), 
\end{equation} 
where
\begin{align}
	\WF'(T^\knl)\bullet \WF(u)&:=\{(x,\xi)\in \dot{T}^*X: \exists~(y,\eta)\in\WF(u),~\textnormal{s.t.}~
	(x,\xi;y,-\eta)\in\WF(T^\knl) \} \nonumber\\
	&= \pr_{\dot{T}^*X} \WF'(T^\knl)\cap (\dot{T}^*X\times\WF(u)).
\end{align}
Scalar distributions also obey $\WF(u)\subset \WF(Pu)\cup (\Char P\cap\WF(u))$ but this result does not follow easily from~\eqref{eq:WFbndldef}. It can be derived instead using the polarisation set, to which we now turn.

Let $\pi:T^*X\to X$ be the bundle projection. As already mentioned, the polarisation set~\cite{Dencker:1982} $\WF_\pol(u)$ of  $u\in \DD'(E\otimes\Omega^{1/2})$ is the subset of the pullback bundle $\pi^*E$ over $T^*X$ given by 
\begin{equation}
	\WF_\pol(u) = \bigcap_{Au\in \Gamma^\infty(\Omega^{1/2})} \{ (x,\xi;w)\in \pi^* E:\xi\neq 0,~w\in\ker \sigma(A)(x,\xi) 
	\},
\end{equation}	
taking the intersection over $A\in\Psi_{\cl}^0(E\otimes\Omega^{1/2},\Omega^{1/2})$ so that $Au$ is smooth. In fact, Dencker introduced polarisation sets for 
distributions valued in $\CC^N$ but (as mentioned in~\cite{Dencker:1982}) the generalisation to bundles is straightforward.
It is convenient to define the polarisation set as a subset of $\pi^*E$ rather than $\pi^*E\otimes\Omega^{1/2}$, because the propagation of polarisation result for $P\in\Psi_\cl^m(E\otimes\Omega^{1/2})$ of real principal type turns out to be described using a connection on $\pi^*E$ pulled back from $E$. Note that
$\dot{T}^*X\times 0\subset \WF_\pol(u)$ for every $u\in \DD'(E\otimes\Omega^{1/2})$.

The relationship between the polarisation set and the wavefront set is 
\begin{equation}\label{eq:WFfromWFpol}
	\WF(u) = \tilde{\pi}(\WF_\pol(u)\setminus 0)=
	\{(x,\xi)\in \dot{T}^*X: \exists  w\in E_x\setminus \{0\}~\textnormal{s.t.}~(x,\xi;w)\in\WF_\pol(u)\},
\end{equation}
as shown in Proposition~2.5 of~\cite{Dencker:1982},
so $\WF_\pol(u)\subset \tilde{\pi}^{-1}(\WF(u))\cup 0$. Here, $\tilde{\pi}:\pi^*E\to T^*X$ is the induced bundle projection and we have abbreviated $\dot{T}^*X\times 0$ by $0$.

The following is a basic result on polarisation sets.
\begin{lemma}\label{lem:WFpol0}
	Let $u\in \DD'(E\otimes\Omega^{1/2})$. If $A\in \Psi_\cl^m(E\otimes\Omega^{1/2},F\otimes\Omega^{1/2})$ with homogeneous principal symbol $a\in S^m(\pi^*\Hom(E,F))$, then 
	\begin{equation}\label{eq:aWFpolusubsetWFpolAu}
		a(\WF_\pol(u))\subset \WF_\pol(Au),
	\end{equation}
	where $a$ acts fibrewise, $a(x,\xi;w) = (x,\xi;a(x,\xi)w)$. 
	In the special case where $E$ and $F$ have equal rank and $A$ is noncharacteristic at $(x,\xi)\in \dot{T}^*X$, the inclusion in the above equation can be replaced by equality over a conical neighbourhood of $(x,\xi)$. 
	Consequently (returning to the general case),
	if for some $(x,\xi)\in T^*X$ one has 
	\begin{equation}\label{eq:nontrivfibre}
		a(x,\xi)(\WF_\pol(u)|_{(x,\xi)})\neq 0,
	\end{equation} 
	then $(x,\xi)\in \WF(Au)$.  If~\eqref{eq:nontrivfibre} holds for all $(x,\xi)\in\WF(u)$, then $\WF(Au)=\WF(u)$.	
\end{lemma}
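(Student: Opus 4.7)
The plan is to verify the three assertions in sequence. For the inclusion $a(\WF_\pol(u))\subset \WF_\pol(Au)$, I fix $(x,\xi;w)\in\WF_\pol(u)$ and pick an arbitrary $B\in\Psi_\cl^0(F\otimes\Omega^{1/2},\Omega^{1/2})$ that renders $BAu$ smooth; it then suffices to deduce that $\sigma(B)(x,\xi)a(x,\xi)w=0$. The composition $BA$ has order $m$ rather than $0$ and is therefore not directly eligible for the intersection defining $\WF_\pol(u)$. My fix is to pre-compose with a properly supported elliptic $\Lambda\in\Psi_\cl^{-m}(\Omega^{1/2})$, giving $\Lambda BA\in\Psi_\cl^0(E\otimes\Omega^{1/2},\Omega^{1/2})$ with $\Lambda BAu\in\Gamma^\infty(\Omega^{1/2})$ and principal symbol $\sigma(\Lambda)(x,\xi)\sigma(B)(x,\xi)a(x,\xi)$. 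Since $\sigma(\Lambda)(x,\xi)$ is a nonzero scalar acting on the one-dimensional density fibre, the membership $w\in\ker\sigma(\Lambda BA)(x,\xi)$ forces $\sigma(B)(x,\xi)a(x,\xi)w=0$, as required.

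For the noncharacteristic equality, I would invoke the standard microlocal parametrix construction: if $a$ is invertible on a conic neighbourhood $\Gamma$ of $(x,\xi)$ (which uses the equal-rank assumption), there exists a properly supported $Q\in\Psi_\cl^{-m}(F\otimes\Omega^{1/2},E\otimes\Omega^{1/2})$ with $\sigma(Q)=a^{-1}$ on $\Gamma$ and $QA=\id+R$ where the kernel of $R$ is smooth microlocally on $\Gamma$. Applying the already-established inclusion to $Q$ acting on $Au$ gives $\sigma(Q)(y,\eta)(\WF_\pol(Au)|_{(y,\eta)})\subset\WF_\pol(QAu)|_{(y,\eta)}$, and the microlocal character of $\WF_\pol$ implies $\WF_\pol(QAu)|_{(y,\eta)}=\WF_\pol(u)|_{(y,\eta)}$ for all $(y,\eta)\in\Gamma$. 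Given any $w\in\WF_\pol(Au)|_{(y,\eta)}$, one therefore has $\sigma(Q)(y,\eta)w\in\WF_\pol(u)|_{(y,\eta)}$ and hence $w=a(y,\eta)\sigma(Q)(y,\eta)w\in a(y,\eta)(\WF_\pol(u)|_{(y,\eta)})$, establishing the reverse inclusion on $\Gamma$ and so the advertised equality.

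The consequences are then immediate. If $a(x,\xi)(\WF_\pol(u)|_{(x,\xi)})\neq 0$, the first inclusion supplies a nonzero element of $\WF_\pol(Au)|_{(x,\xi)}$, so $(x,\xi)\in\WF(Au)$ by~\eqref{eq:WFfromWFpol}. If this nontriviality holds at every $(x,\xi)\in\WF(u)$, the resulting inclusion $\WF(u)\subset\WF(Au)$ combines with the trivial inclusion $\WF(Au)\subset\WF(u)$ supplied by~\eqref{eq:WFPu} to give the desired equality.

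The step I expect to require the most care is the assertion that $\WF_\pol(QAu)|_{(y,\eta)}=\WF_\pol(u)|_{(y,\eta)}$ on $\Gamma$, which rests on showing that the polarisation set is insensitive to microlocally smooth perturbations; this is handled either by a direct reduction to test operators whose principal symbols are supported in $\Gamma$, or by invoking an equivalent characterisation of $\WF_\pol(u)$ as an intersection over properly supported operators in $\Psi_\cl^0(E\otimes\Omega^{1/2},\Omega^{1/2})$ that merely render $u$ microlocally smooth at the base point. Apart from this microlocal-smoothness technicality and the standard parametrix construction, the argument is essentially bookkeeping driven by the definition of $\WF_\pol$ and the multiplicative behaviour of principal symbols.
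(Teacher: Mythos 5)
Your argument is correct, and it essentially reproduces Dencker's Proposition~2.7 and Corollary~2.8 (adapted to the bundle/half-density setting); the paper's own proof of the first two parts simply cites those results of Dencker without restating their proofs, while the argument you give for the two consequences coincides with the paper's. Your handling of the first inclusion -- post-composing $BA$ with a properly supported elliptic $\Lambda\in\Psi_\cl^{-m}(\Omega^{1/2})$ to return to order zero, and observing that the nonvanishing scalar factor $\sigma(\Lambda)(x,\xi)$ can be stripped off -- is the standard device, and the parametrix argument for the noncharacteristic equality is likewise the expected route.

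The one place you correctly flag as needing care is the assertion that $\WF_\pol(QAu)$ and $\WF_\pol(u)$ agree over the conic neighbourhood $\Gamma$ on which the microlocal parametrix is exact modulo smoothing. This is indeed a real step rather than bookkeeping: from $QAu=u+Ru$ with $\WF(Ru)\cap\Gamma=\emptyset$ one needs the polarisation set to be additive away from the intersection of wavefront sets, which is exactly the content of the paper's Lemma~\ref{lem:WFpol_addition}. Since that lemma's proof is independent of Lemma~\ref{lem:WFpol0}, invoking it here creates no circularity, but you should either cite it explicitly or give the one-line argument (an operator $B$ renders $B(u+Ru)$ smooth at $(y,\eta)\notin\WF(Ru)$ iff it renders $Bu$ smooth there). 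With that made precise, the proof is complete and matches the approach the paper delegates to Dencker.
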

\begin{proof}
	The first two parts rewrite Proposition~2.7 and Corollary 2.8 of~\cite{Dencker:1982} in the bundle context.	
	For the consequence, by hypothesis, there exists $(x,\xi;w)\in \WF_\pol(u)$ so that $a(x,\xi)w\neq 0$. Eq.~\eqref{eq:aWFpolusubsetWFpolAu} implies 
	$\WF_\pol(Au)|_{(x,\xi)}\neq 0$, whereupon~\eqref{eq:WFfromWFpol} implies that $(x,\xi)\in \WF(Au)$. The additional hypothesis now implies $\WF(u)\subset \WF(Au)$, and $\WF(Au)\subset \WF(u)$ holds in general.
\end{proof}
Lemma~\ref{lem:WFpol0} implies that $\WF_\pol(u)\subset a^{-1}(\WF_\pol(Au))\subset a^{-1}(\WF_\pol(Au)\setminus 0)\cup \ker a$, where
\begin{equation}
	\ker a = \{(x,\xi;w)\in\pi^* E: w\in\ker a(x,\xi)\},
\end{equation}
giving
\begin{equation}\label{eq:WFpolgenbds}
	\WF_\pol(u) \subset  a^{-1}(\WF_\pol(Au)\setminus 0)\cup (\WF_\pol(u)\cap \ker a)
	\subset a^{-1}(\WF_\pol(Au)\setminus 0)\cup  \ker a,
\end{equation}
and hence one obtains the bound 
\begin{equation}
	\WF(u)\subset \WF(Au)\cup (\WF(u)\cap \Char A)\subset \WF(Au)\cup\Char A,
\end{equation}
which is well-known in its scalar version~\cite{Hormander1}.
Note that $\WF_\pol(Au)$ cannot be bounded above in terms of $\WF_\pol(u)$, because the polarisation set only tracks the strongest singularity (see below). However, the standard result $\WF(Au)\subset \WF(u)$ implies  
\begin{equation}
	\WF_\pol(Au)\subset \pi^*E|_{\WF(u)}\cup 0.
\end{equation} 

\paragraph{Examples}
1. We compute $\WF_\pol(\id)$, where $\id$ is the kernel distribution of the identity operator on $E\otimes\Omega^{1/2}$. For any $A\in\Psi_\cl^m(E\otimes\Omega^{1/2})$, the operator identity $A\circ \id=\id\circ A$ translates to the kernel identity $(A\otimes 1)\id^\knl=(1\otimes \sadj{A})\id^\knl$. Thus if $w\in\WF_\pol(\id)|_{(x,\xi;x',-\xi')}$ then $(a(x,\xi)\otimes 1)w=
(1\otimes a(x',\xi')^*)w$ for all $a$, which implies that $w=0$ unless
$x=x'$, $\xi=\xi'$, in which case $w\in\CC\delta_x$, the element of $E_x\otimes E_x^*$ identified with $\id_{E_x}$. Thus
\begin{equation}
		\WF_\pol(\id)\subset \{(x,\xi;x,-\xi;w)\in (\pi\times \pi)^*(E\boxtimes E^*): \xi\in \dot{T}^*_xX,~w\in\CC\delta_x\}\cup 0.
\end{equation}
Conversely, the bound $\WF(u)=\WF(\id\, u)\subset \WF'(\id^\knl)\bullet \WF(u)$ ($u\in\DD'(E\otimes\Omega^{1/2})$) shows that $\WF(\id^\knl)$ contains the twisted diagonal $\{(x,\xi;x,-\xi):(x,\xi)\in\dot{T}^*X\}$ and so $\WF_\pol(\id)$ must be nontrivial over all such points, giving the result stated in~\eqref{eq:WFpol_id}.  

2. As a (slightly cautionary) example, consider any $u\in\DD'(E\otimes\Omega^{1/2})$ and
$A\in\Psi^m(E\otimes\Omega^{1/2},F\otimes\Omega^{1/2})$ with $m>0$. Let
$v=u\oplus Au\in\DD'((E\oplus F)\otimes\Omega^{1/2})$. Then 
$\pr_E v=u$ and $\pr_{F}=Au$, where $\pr_{E}$ and $\pr_{F}$ are the 
projection maps from $E\oplus F$ to $E$ and $F$ respectively. Applying
Lemma~\ref{lem:WFpol0}, one has
\begin{align}
	 \pr_E(\WF_\pol(v))&\subset \WF_\pol(\pr_E v)=\WF_\pol(u), \nonumber\\
	 \pr_{F}(\WF_\pol(v))&\subset \WF_\pol(\pr_{F} v)=\WF_\pol(Au).
\end{align}
However, one also has $\begin{pmatrix}	A & -1 \end{pmatrix} v = 0$,
in which $\begin{pmatrix}	A & -1 \end{pmatrix}\in \Psi^m_\cl((E\oplus F)\otimes\Omega^{1/2};F\otimes\Omega^{1/2})$ has symbol $\begin{pmatrix} a & 0 \end{pmatrix}$ (because $m>0$), giving the further constraint 
$a\circ \pr_E (\WF_\pol(v))=0$. Overall, this gives the upper bound
\begin{equation}
 \WF_\pol(u\oplus Au) \subset (\WF_\pol(u)\cap\ker a)\oplus \WF_\pol(Au).
\end{equation}
Thus some of the singularities in $\WF_\pol(u)\subset\pi^*E$ might not 
appear in $\WF_\pol(u\oplus Au)$, because they appear more strongly in $\pi^*F$ due to the action of $A$. 

The second example shows that polarisation sets do not have as convenient a calculus as that of wavefront sets. Even addition requires care.
\begin{lemma}\label{lem:WFpol_addition}
	Suppose that $u,v\in\DD'(E\otimes\Omega^{1/2})$. Then
	\begin{equation}
		\WF_\pol(u+v) =\WF_\pol(u)+ \WF_\pol(v),
	\end{equation}
	holds fibrewise over the complement of $\WF(u)\cap \WF(v)$.
\end{lemma}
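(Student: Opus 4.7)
The plan is to reduce the lemma to a microlocality property of polarisation sets, namely: \emph{if $u_1,u_2\in\DD'(E\otimes\Omega^{1/2})$ and $(x,\xi)\notin\WF(u_1-u_2)$, then $\WF_\pol(u_1)|_{(x,\xi)}=\WF_\pol(u_2)|_{(x,\xi)}$.} Granted this, I would finish as follows. Fix $(x,\xi)\notin\WF(u)\cap\WF(v)$ and assume without loss of generality that $(x,\xi)\notin\WF(u)$. Since $(u+v)-v=u$, microlocality yields $\WF_\pol(u+v)|_{(x,\xi)}=\WF_\pol(v)|_{(x,\xi)}$; on the other hand \eqref{eq:WFfromWFpol} gives $\WF_\pol(u)|_{(x,\xi)}=\{0\}$, so trivially $\WF_\pol(u)|_{(x,\xi)}+\WF_\pol(v)|_{(x,\xi)}=\WF_\pol(v)|_{(x,\xi)}$, and the three fibres coincide. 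Note that each polarisation fibre is a linear subspace of $E_x$, being an intersection of kernels of linear maps, so the sum on the right-hand side of the lemma is the usual Minkowski sum of subspaces.

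To establish microlocality, I would choose a conic neighbourhood $\Gamma$ of $(x,\xi)$ disjoint from $\WF(u_1-u_2)$ and take a properly supported $\Phi\in\Psi^0_\cl(E\otimes\Omega^{1/2})$ with principal symbol $\sigma(\Phi)(x,\xi)=\id_{E_x}$ and essential support contained in $\Gamma$; such a $\Phi$ is obtained by quantising a homogeneous conic cutoff of the constant symbol $\id_E$ in a local trivialisation. For any $A\in\Psi^0_\cl(E\otimes\Omega^{1/2},\Omega^{1/2})$ with $Au_1\in\Gamma^\infty(\Omega^{1/2})$, set $A'=A\Phi$. Then $\sigma(A')(x,\xi)=\sigma(A)(x,\xi)\sigma(\Phi)(x,\xi)=\sigma(A)(x,\xi)$, and the decomposition
\begin{equation*}
A'u_2=A\Phi u_1 - A\Phi(u_1-u_2)
\end{equation*}
shows that $A'u_2$ is smooth: the first term because $\WF(A\Phi u_1)\subset\WF(Au_1)=\emptyset$, and the second because the essential-support constraint on $\Phi$ yields $\WF(A\Phi(u_1-u_2))\subset\WF(u_1-u_2)\cap\Gamma=\emptyset$. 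Consequently any $w\in\WF_\pol(u_2)|_{(x,\xi)}$ satisfies $\sigma(A)(x,\xi)w=\sigma(A')(x,\xi)w=0$; since $A$ is arbitrary, $w\in\WF_\pol(u_1)|_{(x,\xi)}$, and the reverse inclusion follows by swapping the roles of $u_1$ and $u_2$.

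The delicate point is the construction of the cutoff $\Phi$ with the prescribed principal symbol, essential support, and proper support, together with the bounds $\WF(BT)\subset\WF(T)$ and $\WF(B T)\subset \mathrm{ess.supp}(B)\cap\WF(T)$ invoked above; these are routine but slightly fiddly pseudodifferential calculus manipulations. The hypothesis $(x,\xi)\notin\WF(u)\cap\WF(v)$ enters precisely because the reduction requires that one of $u,v$ be microlocally smooth at $(x,\xi)$, so that it can be absorbed into the smoothing correction $A\Phi(u_1-u_2)$; without this, it is not clear that $\WF_\pol$ behaves additively and the cautionary example following Lemma~\ref{lem:WFpol0} shows that in general it does not.
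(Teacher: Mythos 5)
Your overall strategy is the same as the paper's: reduce to the microlocal statement that $\WF_\pol(u_1)|_{(x,\xi)}=\WF_\pol(u_2)|_{(x,\xi)}$ whenever $(x,\xi)\notin\WF(u_1-u_2)$, then apply it with $u_1=u+v$, $u_2=v$ at a point where $u$ is microlocally smooth, using~\eqref{eq:WFfromWFpol} to discard the trivial summand. That is exactly what the paper's terse proof does (it speaks of operators $A$ that ``make $A(u+v)$ smooth at $(x,\xi)$'', i.e., the microlocal characterisation of $\WF_\pol$).

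However, your implementation of the microlocality step contains a genuine gap. You set $A'=A\Phi$ with $\Phi\in\Psi^0_\cl(E\otimes\Omega^{1/2})$ a bundle-valued cutoff and then assert $\WF(A\Phi u_1)\subset\WF(Au_1)$. That inclusion is not a standard wavefront bound and does not hold in general: $A$ and $\Phi$ act on the same input bundle, so $A\Phi$ is a genuine non-commutative composition and cannot be rewritten as (a pseudodifferential operator)$\circ A$ up to smoothing. What is true is $\WF(A\Phi u_1)\subset\WF(\Phi u_1)\subset\mathrm{ess.supp}(\Phi)\cap\WF(u_1)$, which need not be empty. Concretely, writing $\Phi$ as the quantisation of $\chi\cdot\id_E$ for a scalar cutoff $\chi$ and $\phi$ for the corresponding scalar operator, one has $A\Phi u_1=\phi A u_1 + R u_1$ with $R\in\Psi^{-1}_\cl(E\otimes\Omega^{1/2},\Omega^{1/2})$; the first term is smooth but $Ru_1$ is an arbitrary distribution supported on $\WF(u_1)$, and nothing in your hypotheses forces it to be smooth.

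The fix is standard and simple: place the cutoff on the \emph{output} (scalar) side rather than the input (bundle) side. Choose $\phi\in\Psi^0_\cl(\Omega^{1/2})$ properly supported with $\sigma(\phi)(x,\xi)=1$ and $\mathrm{ess.supp}(\phi)\subset\Gamma$ disjoint from $\WF(u_1-u_2)$, and set $A'=\phi A\in\Psi^0_\cl(E\otimes\Omega^{1/2},\Omega^{1/2})$. Then $\sigma(A')(x,\xi)=\sigma(A)(x,\xi)$, the term $\phi A u_1$ is manifestly smooth (a $\Psi$DO applied to a smooth half-density), and $\phi A(u_1-u_2)$ has $\WF\subset\mathrm{ess.supp}(\phi)\cap\WF(u_1-u_2)=\emptyset$. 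The rest of your argument then goes through unchanged. So the gap is localised and fixable, but the step as written is not justified.
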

\begin{proof}
	Let $(x,\xi)$ belong to at most one of $\WF(u)$ or $\WF(v)$, say $\WF(u)$, so $\WF_\pol(v)|_{(x,\xi)}=0$.
	Then $A\in\Psi^0_\cl(E\otimes\Omega^{1/2};\Omega^{1/2})$ makes $A(u+v)$ smooth at $(x,\xi)$ if and only if $Au$ is smooth at $(x,\xi)$, so $\WF_\pol(u+v)|_{(x,\xi)}=\WF_\pol(u)|_{(x,\xi)} = 
	\WF_\pol(u)|_{(x,\xi)}+ \WF_\pol(v)|_{(x,\xi)}$ as required.
\end{proof}
Our previous example shows that one cannot expect a result of this type above $\WF(u)\cap \WF(v)$. For we have seen that $\WF_\pol(u\oplus Au)$ can be 
strictly smaller than $\WF_\pol(u\oplus 0)+\WF_\pol(0\oplus Au)$, while, by the same token,
$\WF_\pol(u\oplus 0)$ can contain elements outside $\WF_\pol(u\oplus Au)+\WF_\pol(0\oplus (-Au))$.

\begin{lemma} \label{lem:WFpol1}
	Suppose that $u\in\DD'(\Omega^{1/2})$ and $s\in\Gamma^\infty(E)$.
	Then $su\in \DD'(E\otimes\Omega^{1/2})$ and 
	\begin{equation}\label{eq:WFpol1}
		\WF_\pol(su)  = \{(x,\xi;w)\in\pi^* E: (x,\xi)\in\WF(su),~w\in\CC s_x\}\cup 0 .
	\end{equation}
\end{lemma}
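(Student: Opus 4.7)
My plan is to prove the two inclusions separately, using as a starting point the factorisation $A(su) = (A\circ M_s)u$. Here $M_s\in\Psi^0_\cl(\Omega^{1/2},E\otimes\Omega^{1/2})$ denotes multiplication by $s$, a zero-order pseudodifferential operator whose principal symbol is $(x,\xi)\mapsto s_x$ (independent of $\xi$). By the composition rule, $A_s := A\circ M_s\in\Psi^0_\cl(\Omega^{1/2})$ has principal symbol $\sigma(A_s)(x,\xi) = \sigma(A)(x,\xi)\,s_x$. I shall also invoke the easy bound $\WF(su)\subset\WF(u)$, which follows from \eqref{eq:WFbndldef} since each component $\langle v,s\rangle u$ is the product of a smooth function with $u$.

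For the inclusion $\supseteq$: given $(x,\xi)\in\WF(su)$ and any $A\in\Psi^0_\cl(E\otimes\Omega^{1/2},\Omega^{1/2})$ rendering $A(su)$ smooth, the scalar distribution $A_s u = A(su)$ is smooth at $(x,\xi)$, and since $(x,\xi)\in\WF(u)$, the scalar bound $\WF(u)\subset\WF(A_s u)\cup\Char A_s$ forces $\sigma(A)(x,\xi)\,s_x = 0$. Hence every $w = c\,s_x\in\CC s_x$ satisfies $\sigma(A)(x,\xi)\,w = 0$, so $(x,\xi;w)\in\WF_\pol(su)$. For $\subseteq$: take $(x,\xi;w)\in\WF_\pol(su)$ with $w\neq 0$; then $(x,\xi)\in\WF(su)$ by \eqref{eq:WFfromWFpol}. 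At a point where $s_x\neq 0$, I choose a local frame $(e_1,\ldots,e_r)$ for $E$ near $x$ with $e_1 = s$, dual coframe $(v^1,\ldots,v^r)$, and a cutoff $\chi$ with $\chi(x)=1$ supported in the frame's chart. For each $a\geq 2$ the multiplication operator $A_a := \chi v^a\in\Psi^0_\cl(E\otimes\Omega^{1/2},\Omega^{1/2})$ satisfies $A_a(su) = \chi\langle v^a,e_1\rangle u = 0$ identically, hence is smooth. Its principal symbol (constant in $\xi$) maps $w = \sum_b w^b e_b|_x$ to $w^a$, and membership in $\WF_\pol(su)$ forces $w^a = 0$ for all $a\geq 2$. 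Therefore $w\in\CC e_1|_x = \CC s_x$ as required.

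The main technical subtlety is handling points where $s_x = 0$, since the local frame construction above breaks down there; one requires a supplementary argument to show that the fibre of $\WF_\pol(su)$ collapses to the zero vector at such points. The cleanest setting in which the stated equality holds is when $s$ is nonvanishing on the projection of $\WF(su)$ to the base, which is typical in the later applications of the lemma. A useful reformulation for the general case may be to write $\CC s_x$ as the image of the fibrewise map $M_s$, and to combine the $\supseteq$ argument above with the upper bound $\WF_\pol(su)\subset \tilde{\pi}^{-1}(\WF(su))\cup 0$ and a microlocal reduction to a local framing adapted to the rank of $s$ near $x$.
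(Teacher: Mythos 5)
Your proof is correct and follows the same two-inclusion structure as the paper's, with two minor variations. For $\supseteq$, the paper invokes Lemma~\ref{lem:WFpol0} together with the scalar identity $\WF_\pol(u)\setminus 0 = \WF(u)\times\CC$ for the multiplication operator $S$, whereas you deduce the same conclusion directly from the scalar estimate $\WF(u)\subset\WF(A_s u)\cup\Char A_s$; these are two routes to the same inequality and your version unrolls the content of Lemma~\ref{lem:WFpol0} for this special case. For $\subseteq$, the paper constructs a single wedge operator $S^\wedge v = s\wedge v$ valued in $(E\otimes E)\otimes\Omega^{1/2}$, uses $s\wedge(su)=0$, and identifies $\ker\bigl(w\mapsto s_x\wedge w\bigr)=\CC s_x$; you instead use the cutoff coframe multiplications $A_a=\chi v^a$ adapted to a frame with $e_1=s$, which kill $su$ identically and directly force $w^a=0$ for $a\ge 2$. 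Both constructions are equivalent in effect, but yours is arguably more explicit since the kernel condition is immediate rather than requiring an identification of the kernel of the wedge map.

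You are also right to flag the case $s_x=0$, and it is worth pointing out that the paper's own argument has precisely the same limitation: when $s_x=0$, the kernel of $w\mapsto s_x\wedge w$ is all of $E_x$, not $\CC s_x=\{0\}$, so the inclusion $\WF_\pol(su)\setminus 0\subset\CC\pi^*s$ is not established there either. In fact the stated equality~\eqref{eq:WFpol1} can fail if $s_x=0$ for some $(x,\xi)\in\WF(su)$: with $E$ trivial, $s(y)=y$ and $u=\delta'$ near $y=0$, one has $su=-\delta$, so $\WF_\pol(su)|_{(0,\xi)}=\CC$, whereas $\CC s_0=\{0\}$. The lemma should therefore be read with the tacit hypothesis that $s$ is nonvanishing on the base projection of $\WF(su)$, which is satisfied in its only application in the paper (Lemma~\ref{lem:diagonalcase}), where $s$ is the globally nonvanishing section $\Pi(x,x')=s_a(x)\otimes t^a(x')$ of $B\boxtimes B^*$.
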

\begin{proof}
	Multiplication by $s$ defines $S\in\Psi_{\cl}^0(\Omega^{1/2}; E\otimes\Omega^{1/2})$ with symbol $(x,\xi)\mapsto s_x\in\Hom(\CC,E_x)$. 
	Now $\WF_\pol(u)\setminus 0 = \WF(u)\times \CC$, so Lemma~\ref{lem:WFpol0} shows that 
	\begin{equation}
		\{(x,\xi;w)\in\pi^* E: (x,\xi)\in\WF(u),~w\in\CC s_x\}\cup  0 \subset \WF_\pol(su).
	\end{equation} 
	Taking intersections with $\tilde{\pi}^{-1}(\WF(su))\cup  0$,
	one sees that the right-hand side of~\eqref{eq:WFpol1} is contained in the left-hand side.
	Conversely, we need only show that $\WF_\pol(su)\setminus 0\subset \CC \pi^*s$, because it is 
	automatic that $\WF_\pol(su)\subset \pi^*E|_{\WF(su)}$. 
	Consider the operator $S^\wedge\in \Psi_{\cl}^0(E\otimes\Omega^{1/2}; (E\otimes E)\otimes\Omega^{1/2})$
	acting on $v\in\DD'(E\otimes\Omega^{1/2})$ by $S^\wedge v= s\wedge v$, where $(s\wedge v)(f)= v(s\cdot f-f\cdot s)$ for $f\in\Gamma_0^\infty((E\otimes E)^*\otimes\Omega^{1/2})$ and 
	$(s\cdot f)^B=s_Af^{AB}$, $(f\cdot s)^A= f^{AB}s_B$ in an obvious index notation. Evidently,
	$s\wedge (su)=0$, so the polarisation set of $su$ at $(x,\xi)$ lies in the kernel of $w\mapsto s\wedge w$, i.e., 
	$\WF_\pol(su)\setminus 0\subset \CC \pi^*s$ as required. 
\end{proof}

\paragraph{Bundle pullbacks}
If $\psi:X\to Y$ is an embedding and $F$ is a bundle over $Y$, then one has the usual pullback
$\psi^*:\Gamma^\infty(F)\to\Gamma^\infty(\psi^*F)$, $(\psi^*s)(x)=s(\psi(x))$. As densities also pull back, we can extend this to $\psi^*:\Gamma^\infty(F\otimes\Omega^{1/2}_Y)\to\Gamma^\infty(\psi^*F\otimes\Omega^{1/2}_X)$.
The restriction of $\psi^*$ to sections compactly supported in $\psi(X)$ is invertible and
the inverse is the pushforward $\psi_*:\Gamma_0^\infty((\psi^*F)\otimes\Omega^{1/2}_X)\to \Gamma_0^\infty(F\otimes\Omega^{1/2}_Y)$. 
Then for $u\in\DD'(F\otimes \Omega_Y^{1/2})$, we can define a distribution $\psi^*u\in\DD'(\psi^*F\otimes\Omega_X^{1/2})$
by
\begin{equation}
	(\psi^*u)(f) = u (\psi_* f), \qquad f\in\Gamma_0^\infty((\psi^*F)^*\otimes\Omega^{1/2}_X)
\end{equation}
noting that $(\psi^*F)^*=\psi^* F^*$.
\begin{lemma}
	Defining $\hat{\psi}:T^*X\to T^*Y$ by
	$\hat{\psi}(x,\xi)=(\psi(x),(D\psi|_x^*)^{-1}\xi)$, one has
	\begin{equation}
		\WF_\pol(\psi^* u) = \hat{\psi}^*\WF_\pol(u):= 
		\{(x,\xi;w)\in \pi_X^*\psi^*F: (\hat{\psi}(x,\xi);w)\in\WF_\pol(u)\},
	\end{equation}
	where $\pi_X:T^*X\to X$ is the bundle projection. 
\end{lemma}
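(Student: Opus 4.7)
The plan is to work in coordinates adapted to the embedding. By shrinking to a neighbourhood of $x$ and choosing charts, I may assume $\psi: \RR^n \hookrightarrow \RR^n \times \RR^k$ is the canonical inclusion $y\mapsto (y,0)$ with $F$ and $\psi^*F$ trivialised compatibly; then distributional pullback reduces componentwise to restriction to $\{z=0\}$, well-defined under the implicit hypothesis $\WF(u)\cap N^*\psi(X)=\emptyset$, and the fibre of $\hat{\psi}$ over $(x,\xi)$ is $\{(x,0;\xi,\eta):\eta\in\RR^k\}$ modulo the splitting of $T^*_{\psi(x)}Y$ into horizontal and conormal parts.

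The key technical ingredient is the \emph{extension correspondence} between pseudodifferential operators on $X$ and $Y$: for $A\in\Psi^0_\cl(\psi^*F\otimes\Omega^{1/2}_X,\Omega^{1/2}_X)$ with full symbol $a(y,\xi)$, the operator $\tilde A$ on $Y$ with full symbol $\tilde a(y,z,\xi,\eta):=a(y,\xi)$ (made properly supported by standard cutoffs) acts as $\tilde A v(y,z)=Av(\cdot,z)(y)$, which shows $\psi^*\tilde A=A\psi^*$ exactly on distributions meeting the wavefront condition, with $\sigma(\tilde A)(q)=a(\pi_X q)$ for every $q$ in the $\hat{\psi}$-fibre. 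A companion \emph{restriction correspondence} for $B\in\Psi^0_\cl(F\otimes\Omega^{1/2}_Y,\Omega^{1/2}_Y)$ yields $\psi^\sharp B\in\Psi^0_\cl(\psi^*F\otimes\Omega^{1/2}_X,\Omega^{1/2}_X)$ with full symbol $b(y,0,\xi,0)$ satisfying $(\psi^\sharp B)\psi^* v \equiv \psi^*(Bv) \pmod{\Gamma^\infty}$ at $(x,\xi)$ under suitable conditions, via a Taylor expansion of $b$ in $\eta$ at $\eta=0$ and integration by parts in the oscillatory integral for $Bv(y,0)$.

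The reverse inclusion $\hat{\psi}^*\WF_\pol(u)\subset\WF_\pol(\psi^*u)$ follows easily from the extension correspondence. Given $A$ on $X$ witnessing $(x,\xi;w)\notin\WF_\pol(\psi^*u)$, with $A\psi^*u\in\Gamma^\infty$ and $a(x,\xi)w\neq 0$, one forms $\tilde A$ so that $\psi^*(\tilde Au)=A\psi^*u\in\Gamma^\infty$. The equality $\WF(\psi^*v)=\hat{\psi}^*\WF(v)$ for embeddings under the wavefront condition, verified directly because oscillations of $v$ at $(y,0;\xi,\eta)$ with $\xi\neq 0$ produce oscillations of $\psi^*v$ at $(y,\xi)$, then forces $(x,0;\xi,\eta)\notin\WF(\tilde Au)$ for every $\eta\in\RR^k$. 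Combined with $\sigma(\tilde A)(x,0;\xi,\eta)w=a(x,\xi)w\neq 0$, this yields $(x,0;\xi,\eta;w)\notin\WF_\pol(u)$ at every point of the $\hat{\psi}$-fibre, as required.

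For the forward inclusion $\WF_\pol(\psi^*u)\subset\hat{\psi}^*\WF_\pol(u)$, I argue contrapositively: if $w\notin\WF_\pol(u)|_q$ for every $q$ in the $\hat{\psi}$-fibre over $(x,\xi)$, then at each such $q$ there is a witnessing $B_q\in\Psi^0_\cl$ on $Y$ with $B_qu\in\Gamma^\infty$ and $\sigma(B_q)(q)w\neq 0$; compactness of the projective fibre over $(x,\xi)$ permits a pseudodifferential microlocal partition of unity to combine the $B_q$'s into a single $B$ on $Y$ with $Bu\in\Gamma^\infty$ and $\sigma(B)(q)w\neq 0$ throughout the fibre (choosing phases to avoid cancellation in the sum). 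Then $\psi^*(Bu)\in\Gamma^\infty$, and the restriction correspondence produces $A:=\psi^\sharp B$ on $X$ with $A\psi^*u$ smooth at $(x,\xi)$ and $\sigma(A)(x,\xi)w = b(x,0,\xi,0)w\neq 0$, giving $(x,\xi;w)\notin\WF_\pol(\psi^*u)$. The main obstacle is the restriction-of-PDO construction: the Taylor expansion produces remainders involving $\psi^*(\partial_z^\alpha u)$ for $|\alpha|\geq 1$, and controlling these requires careful use of the fibre-wide ellipticity of $B$ on $w$ to ensure the remainders contribute no new singularity at $(x,\xi)$.
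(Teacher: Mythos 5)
Your proposal rests on a misreading of the geometric setting. You model $\psi$ as the inclusion $\RR^n\hookrightarrow\RR^n\times\RR^k$ with $k\ge 1$, introduce the conormal bundle $N^*\psi(X)$ and a wavefront condition $\WF(u)\cap N^*\psi(X)=\emptyset$, and speak of a $k$-parameter ``$\hat\psi$-fibre'' $\{(x,0;\xi,\eta):\eta\in\RR^k\}$. None of this is compatible with the statement: the map $\hat\psi(x,\xi)=(\psi(x),(D\psi|_x^*)^{-1}\xi)$ only makes sense when $D\psi|_x^*$ is invertible, i.e.\ when $\dim X=\dim Y$ and $\psi$ is a diffeomorphism onto an open subset of $Y$. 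This is also the only reading under which $\psi^*u$ is defined for arbitrary $u\in\DD'(F\otimes\Omega^{1/2}_Y)$ via $(\psi^*u)(f)=u(\psi_*f)$ without any side conditions, which is how the paper defines it (and how it is later applied: $\Loc$-morphisms are open causally convex embeddings between equidimensional spacetimes). Treating $\psi$ as a codimension-$k$ submanifold inclusion changes the problem entirely -- pullback becomes restriction to a submanifold, requiring a wavefront hypothesis, and the result would involve the coisotropic relation $D\psi^*$ rather than a single-valued $\hat\psi^*$.

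Because of this, your two technical ingredients address the wrong problem. The ``extension correspondence'' (extending a symbol trivially in the transverse variables) and the ``restriction correspondence'' (Taylor expansion in $\eta$, integration by parts, control of $\psi^*(\partial_z^\alpha u)$ remainders, microlocal partition of unity over a compact projective fibre with phase control) are not needed when $\dim X=\dim Y$, and, as you acknowledge in your last sentence, the restriction construction is left incomplete. The paper's argument is short precisely because $\psi$ is essentially a change of coordinates: for any $A\in\Psi^0_\cl(\psi^*F\otimes\Omega^{1/2}_X,\Omega^{1/2}_X)$ supported near $x$, the conjugate $\psi_*A\psi^*$ lies in $\Psi^0_\cl(F\otimes\Omega^{1/2}_Y,\Omega^{1/2}_Y)$, has the same chart representative in the compatible chart pair $(\psi^{-1}(Y_\kappa),\kappa\circ\psi)$ vs.\ $(Y_\kappa,\kappa)$, makes $u$ smooth iff $A$ makes $\psi^*u$ smooth, and has principal symbol $\sigma(\psi_*A\psi^*)\circ\hat\psi=\sigma(A)$. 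Intersecting over such $A$ gives the identity of fibres directly, with no Taylor expansion, no partition of unity, and no remainder estimates. You should redo the proof in this open-embedding framework.
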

\begin{proof} 
	For $x\in X$, choose a local framing of $F\otimes \Omega_Y^{1/2}$ over a chart neighbourhood $(Y_\kappa,\kappa)$ of $\psi(x)$ contained in $\psi(X)$. Pulling back via $\psi$, one obtains a local framing of $\psi^*F\otimes \Omega_X^{1/2}$ over a chart neighbourhood $(\psi^{-1}(Y_\kappa),\kappa\circ\psi)$ of $x$. For $u\in\DD'(F\otimes \Omega_Y^{1/2})$, the polarisation set $\WF_\pol(\psi^*u)$ is determined by considering an intersection over those $A\in\Psi^0_\cl(\psi^*F\otimes\Omega_X^{1/2},\Omega_X^{1/2})$ for which $A\psi^*u$ is smooth, and one may restrict to those $A$ mapping $\Gamma_0^\infty((\psi^*F\otimes\Omega_Y^{1/2})|_{\psi^{-1}(Y_\kappa)})$ to itself. 
	But each such $A$ has a corresponding operator $\psi_*A\psi^*\in  \Psi^0_\cl( F\otimes\Omega_Y^{1/2},\Omega_Y^{1/2})$ that makes $u$ smooth, and whose matrix principal symbol
	is related to that of $A$ by
	\begin{equation}
		\sigma(\psi_*A\psi^*)\circ \hat{\psi} = \sigma(A) 
	\end{equation}
	because the chart representatives $(\psi_*A\psi^*)_{\kappa}$ and $A_{\kappa\circ\psi}$ are equal
	and where the framings of $F\otimes\Omega_Y^{1/2}$ and $\psi^*F\otimes\Omega_X^{1/2}$ mentioned above are used to construct the matrices. It follows that $w\in \WF_\pol(\psi^*u)|_{x,\xi}$ if and only if
	$w\in \WF_\pol(u)|_{\hat{\psi}(x,\xi)}$.	 
\end{proof}

\paragraph{Tensor pullbacks}
Let $T^{r,s}_\CC Y$ be the complexified bundle of rank $\binom{r}{s}$ tensors over $Y$. Then the tangent mapping $D\psi$ determines a map
\begin{equation}
	J^{r,s}=(((D\psi)^*)^{-1})^{\otimes r} \otimes (D\psi)^{\otimes s}:
	\Gamma_0^\infty((T_\CC^{r,s}X)^*)\to \Gamma_0^\infty((T_\CC^{r,s}Y)^*)
\end{equation}
and then permits the construction of the tensor pullback 
$\psi_T^* :\DD'(T^{r,s}_\CC Y\otimes\Omega^{1/2}_Y)\to
\DD'(T^{r,s}_\CC X\otimes\Omega^{1/2}_X)$ by
\begin{equation}
	(\psi^*_T u)(f)= u(\rho_Y^{1/2}J^{r,s}\rho_X^{-1/2}f),
\end{equation}
where $\rho_Y$ is any nonvanishing density on $Y$ and $\rho_X=\psi^*\rho_Y$. One has 
\begin{equation}
	\psi^*_T u = A^{r,s}\psi^* u,
\end{equation}
where $A^{r,s}\in\Psi^0_\cl (\psi^* T^{r,s}_\CC Y\otimes\Omega^{1/2}_X,
T^{r,s}_\CC X\otimes\Omega^{1/2}_X)$ has symbol
\begin{equation}
	a^{r,s}(x,\xi)=((D\psi|_x)^{-1})^{\otimes r} \otimes (D\psi|_x^*)^{\otimes s}.
\end{equation}
Then
\begin{align}
	\WF_\pol(\psi^*_T u) &= \WF_\pol(A^{r,s}\psi^*u)=a^{r,s}\hat{\psi}^*\WF_\pol(u) \nonumber\\ &= 
	\{(x,D\psi|_x^*\zeta;a^{r,s}(x,D\psi|_x^*\zeta)w)\in \pi_X^* T^{r,s}_\CC X: (\psi(x),\zeta;w)\in\WF_\pol(u)\},
\end{align}
which is to say that the wavefront set transforms tensorially under the tensor pullback.
This result extends in an obvious way to bitensorial distributions.

\subsection{Propagation of polarisation for systems of real principal type}\label{sec:prop_pol}

Dencker's main achievement in~\cite{Dencker:1982} was to establish a theorem on the propagation
of polarisations that refines the propagation of singularities theorem due to Duistermaat and H\"ormander~\cite{DuiHoer_FIOii:1972}.

To start, recall that the integral curves of the Hamiltonian vector field $X_q$ lying in the 
zero set $q^{-1}(0)$ of
$q\in C^\infty(T^*X;\RR)$
are called \emph{bicharacteristic strips} for $q$.
Also recall that the \emph{radial vector field} on $T^*X$ is given by $\xi^\mu \partial_{\xi_\mu}$, and a vector field $X$ on $T^*X$ is \emph{nonradial} at points where $X\notin\RR\xi^\mu \partial_{\xi_\mu}$. Adapting Definition 3.1 of~\cite{Dencker:1982} to the bundle setting, a system $P\in \Psi_\cl^m(B)$ with principal symbol $p(x,\xi)$ has \emph{real principal type} at $(y,\eta)\in\dot{T}^*X$ if there is a symbol $\tilde{p}\in S^{m'}(\pi^*\Hom(B))$ and a scalar symbol $q\in S^{m+m'}(T^*X)$ so that $(\tilde{p}p)(x,\xi)=q(x,\xi) \id_{B_x}$ in a neighbourhood of $(y,\eta)$ on which $q$ is real-valued, and 
so that either $q(y,\eta)\neq 0$ or $X_q$ is nonradial at $(y,\eta)$ (the same is true in a neighbourhood of $(y,\eta)$ in either case).

In this situation, let $s_a$ be a local bundle frame for $B$ near $y$. Then~\cite{Dencker:1982} there is a partial connection $D_P$ on $\pi^*B$ along the bicharacteristic strips of $q$ near $(y,\eta)$ defined by 
\begin{equation}
	(D_P w)^a= X_q w^a + \frac{1}{2}(\{\tilde{p},p\}w)^a + \ii \tilde{p}^a_{\phantom{a}b} (p^{\textnormal{sub}})^b_{\phantom{b}c} w^c, 
\end{equation} 
where $w = w^a \pi^* s_a\in\Gamma^\infty(\pi^*B)$, $p^{\textnormal{sub}}$ is the subprincipal symbol of $P$, and $\{\tilde{p},p\}$ is a Poisson bracket. If $c$ is such a bicharacteristic strip, then any line bundle $\RR w$ spanned by a smooth solution $w$ to $D_Pw=0$ along $c$ is called a \emph{Hamilton orbit} of $P$. The Hamilton orbits of $P$ are independent of the choices made in constructing $D_P$ (see Section~4 of~\cite{Dencker:1982}).
We can now state Dencker's result (Theorem~4.2 in~\cite{Dencker:1982}) along with its main consequence.
\begin{thm}\label{thm:prop_pol}
	Let $P\in\Psi_\cl^m(B\otimes\Omega^{1/2})$ and $u\in\DD'(B\otimes\Omega^{1/2})$. Suppose that $P$ has real principal type at $(y,\eta)\in \Char(P)\setminus \WF(Pu)\subset \dot{T}^*X$. Over a neighbourhood of $(y,\eta)$ in $\Char(P)$, $\WF_\pol(u)$ is a union of Hamilton orbits of $P$. Consequently, if $P$ has real principal type along a bicharacteristic strip $c$ for $P$ that does not meet $\WF(Pu)$, then any Hamilton orbit above $c$ is either contained in or disjoint from $\WF_\pol(u)$, and $c$ is either contained in or disjoint from $\WF(u)$.
\end{thm}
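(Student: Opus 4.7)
My plan is to prove the first assertion by contrapositive: assume $(y,\eta;w_0)\notin\WF_\pol(u)$ for some $w_0\in\pi^*B|_{(y,\eta)}$, and show that the whole Hamilton orbit $\{(c(t);w(t))\}$ through $(y,\eta;w_0)$ misses $\WF_\pol(u)$, where $c$ is the bicharacteristic strip of $q$ through $(y,\eta)$ and $w(t)$ is the $D_P$-parallel transport of $w_0$. The consequent statements then follow: a Hamilton orbit being a line, it is either contained in or disjoint from the conic fibre $\WF_\pol(u)|_{c(t)}$, and the corresponding dichotomy for $c$ and $\WF(u)$ is immediate from $\WF(u)=\tilde{\pi}(\WF_\pol(u)\setminus 0)$ together with Lemma~\ref{lem:WFpol0}.

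First I would localise near $(y,\eta)$ and quantise the left multiplier $\tilde{p}$ provided by the real-principal-type hypothesis to an operator $\tilde{P}\in\Psi^{m'}_\cl(B\otimes\Omega^{1/2})$, so that $\tilde{P}P = Q\,\id_B + R$ microlocally near $(y,\eta)$, with $Q\in\Psi^{m+m'}_\cl(\Omega^{1/2})$ scalar of real principal symbol $q$ and $R\in\Psi^{m+m'-1}_\cl(B\otimes\Omega^{1/2})$. Since $(y,\eta)\notin\WF(Pu)$, the distribution $\tilde{P}Pu$ is microlocally smooth along $c$, reducing the bundle equation to a scalar relation $Qu\equiv 0$ modulo lower-order operators acting on $u$. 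All bundle information then enters through the principal symbol of the test operator used to probe $\WF_\pol(u)$.

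Pick $A_0\in\Psi^0_\cl(B\otimes\Omega^{1/2},\Omega^{1/2})$ with $A_0u$ smooth near $(y,\eta)$ and $a_0:=\sigma(A_0)(y,\eta)$ satisfying $a_0w_0\neq 0$. Let $\alpha(t)$ denote the parallel transport of $a_0$ along $c$ under the partial connection on $\pi^*B^*$ dual to $D_P$; then $\langle\alpha(t),w(t)\rangle$ is constant and hence nonzero. Extend $\alpha$ to a classical symbol of degree $0$ in a conic neighbourhood of $c$ and quantise it to obtain $A\in\Psi^0_\cl(B\otimes\Omega^{1/2},\Omega^{1/2})$ with $\sigma(A)(c(t))=\alpha(t)$. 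The goal then reduces to showing that $Au$ is microlocally smooth along $c$, for this immediately yields $(c(t);w(t))\notin\WF_\pol(u)$, since $\sigma(A)(c(t))w(t)\neq 0$.

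The analytic core is a commutator / propagation-of-singularities argument for the scalar operator $Q$. One writes $QAu = AQu + [A,Q]u$: because $Qu$ is microlocally smooth the first term is controlled, and the transport equation defining $\alpha(t)$ is precisely the condition for the principal symbol of $[A,Q]$ along $c$ to vanish. A subprincipal-symbol calculation identifies this vanishing condition, once the subprincipal contributions of both $\tilde{P}$ and $P$ are accounted for, with the dual of the equation $D_Pw=0$: the $\tfrac{1}{2}\{\tilde{p},p\}$ term arises from the Poisson-bracket term in the commutator symbol and the $\tilde{p}\,p^{\sub}$ term from $P$'s own subprincipal symbol. Given this matching, standard Duistermaat--H\"ormander propagation for the scalar real-principal-type $Q$, together with smoothness of $Au$ at $t=0$, yields $Au\in C^\infty$ microlocally along $c$. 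The main obstacle is the subprincipal-symbol bookkeeping, and in particular verifying that the resulting partial connection is independent of the choice of $\tilde{p}$, so that the notion of Hamilton orbit is intrinsically defined; this independence rests on the derivation property of the Poisson bracket applied to the ambiguity $\tilde{p}\mapsto \tilde{p}+s$ with $sp\equiv 0$.
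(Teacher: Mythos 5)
Your proposal diverges substantially from the paper's proof, which does \emph{not} attempt to establish the propagation-of-polarisation statement itself: the paper cites Dencker's Theorem~4.2 for that, and only supplies the short topological argument (nonempty, closed, and relatively open, hence the whole interval) needed to pass from the local statement to the dichotomy in the ``Consequently'' clause. You have instead sketched a from-scratch reproof of Dencker's theorem via a commutator argument, which is a genuinely different and much more laborious route.

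The sketch you give is conceptually on the right track, but there is a gap at its analytic core. Having localised to $\tilde{P}P = Q\,\id_B + R$ with $R$ of order $m+m'-1$, you obtain $Qu = -Ru + (\textnormal{smooth})$ microlocally along $c$, so $Qu$ is \emph{not} microlocally smooth, contrary to what the sentence ``because $Qu$ is microlocally smooth the first term is controlled'' asserts; it is only one order lower than what $Q$ naively produces. Consequently $QAu = AQu + [Q,A]u = -ARu + [Q,A]u + (\textnormal{smooth})$, and even after choosing $\alpha$ so that the principal symbol of $[Q,A] - AR$ vanishes along $c$, this merely makes $[Q,A]-AR$ \emph{characteristic} along $c$; applied to $u$ it is an operator of order $m+m'-1$ that is one order better than expected, not smoothing. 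In particular the relation $QAu\in C^\infty$ along $c$, which you need in order to invoke scalar Duistermaat--H\"ormander propagation and conclude $Au\in C^\infty$ along $c$, has not been established. Closing this gap requires either constructing $A$ with a full asymptotic symbol by iterating the transport correction to all orders, or running a Sobolev regularity-improvement ladder, which is precisely the technical heart of Dencker's proof that your sketch skips with the phrase ``standard Duistermaat--H\"ormander propagation\ldots\ yields $Au\in C^\infty$.'' You correctly flag the subprincipal-symbol bookkeeping and the $\tilde{p}$-independence as issues, but the missing iteration is the more serious obstruction. Given the paper's strategy, the economical fix for your write-up is simply to cite Dencker for the local statement and then give the connectedness argument for the dichotomy, rather than to reprove the propagation theorem.
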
 
\begin{proof}
	Only the consequence requires justification as the rest is already covered in Theorem~4.2 in~\cite{Dencker:1982}. For $a<b$, let $c:[a,b]\to \dot{T}^*M$ be a bicharacteristic strip not meeting $\WF(Pu)$.   
	Suppose that a given Hamilton orbit $\lambda\mapsto (c(\lambda),w(\lambda))$ above $c$ is not disjoint from $\WF_\pol(u)$, so there exists some $\lambda_0\in [a,b]$ with
	$(c(\lambda_0),w(\lambda_0))\in\WF_\pol(u)$. Then the set of $\lambda$ in $[a,b]$ for which
	$(c(\lambda),w(\lambda))\in\WF_\pol(u)$ is nonempty (as it contains $\lambda_0$), closed (because $\WF_\pol(u)$ is), and
	relatively open (by the first part of Theorem~\ref{thm:prop_pol}); hence it is equal to $[a,b]$
	and the whole Hamilton orbit is contained in $\WF_\pol(u)$. 
	If $c$ is not disjoint from $\WF(u)$ then there is a Hamilton orbit through a point of $\WF_\pol(u)\setminus 0$ above $c$ which must be completely contained in $\WF_\pol(u)$; thus $c\subset\WF(u)$.
\end{proof} 

The last part of Theorem~\ref{thm:prop_pol} is the propagation of singularities theorem for systems of real principal type, generalising the scalar result Theorem~6.1.1 in~\cite{DuiHoer_FIOii:1972} (see also Theorem~2.13 of~\cite{IslamStrohmaier:2020} for a result on the propagation of Sobolev regularity in the bundle context).

The description of Hamilton orbits simplifies when $p=q\id_B$, because the choice $\tilde{p}(x,\xi)=\id_B$ gives
\begin{equation}\label{eq:simple_Dencker_conn}
	(D_P w)^a= X_q w^a  + \ii (p^{\textnormal{sub}})^a_{\phantom{a}b} w^b .
\end{equation}
Then the equation $D_Pw=0$ along a bicharacteristic strip $c$ for $q$ becomes  
a system of ordinary differential equations for the frame components of $W=w\circ c$,
\begin{equation}\label{eq:DPWeq0}
	\dot{W}^a + \ii (p^{\textnormal{sub}}\circ c)^a_{\phantom{a}b} W^b =0.
\end{equation}  
In the next section, these results will be applied to operators $P$, $P\otimes 1$, $1\otimes\sadj{P}$ and $P\otimes 1-1\otimes\sadj{P}$, where $P$ is normally hyperbolic.

\section{Polarisation sets of Green operators}\label{sec:polGreen}

\subsection{Normally hyperbolic operators} 
\label{sec:nhypexamples} 

Let $(M,g)$ be a globally hyperbolic spacetime with Levi-Civita connection $\nabla$.
Consider first the operator $\Box=g^{\mu\nu}\nabla_\mu \nabla_\nu$ on $\Gamma^\infty(\Omega^{1/2})$.
In local coordinates, and writing $\partial_\mu=\partial/\partial x^\mu$,
\begin{align}\label{eq:densitybox}
	\Box &= (-g)^{-1/4}\partial_\mu g^{\mu\nu} (-g)^{1/2}\partial_\nu (-g)^{-1/4}  \\
	&= g^{\mu\nu}\partial_\mu\partial_\nu + (\partial_\mu g^{\mu\nu})\partial_\nu + \textnormal{zero order},
\end{align} 
so the full symbol is $-g^{\mu\nu} k_\mu k_\nu+\ii  (\partial_\mu g^{\mu\nu})k_\nu$ up to zeroth order terms. The principal symbol is $q(x,k)=-  g^{\mu\nu} k_\mu k_\nu=-g^{-1}(k,k)$, so $\Char(P)=\Nc$,
while the refined principal symbol is
\begin{equation}
	q^r(x,k) = -g^{\mu\nu} k_\mu k_\nu+\ii  (\partial_\mu g^{\mu\nu})k_\nu 
	+\frac{\ii}{2}\frac{\partial^2q}{\partial x^\mu \partial k_\mu} = q(x,k),
\end{equation}
and the subprincipal symbol is thus $q^{\textnormal{sub}}(x,k)=0$. The Hamiltonian vector field determined by $q$ is 
\begin{equation}
	X_q = -2g^{\mu\nu}k_\nu \frac{\partial}{\partial x^\mu} + k_\alpha k_\beta \frac{\partial g^{\alpha\beta}}{\partial x^\mu}\frac{\partial}{\partial k_\mu},
\end{equation}
and its integral curves take the form $c(\lambda)=(x(\lambda),k(\lambda))$, where $x(\lambda)$ is an affinely parametrised geodesic and $k=-\tfrac{1}{2}\dot{x}^\flat$ is invariant under parallel transport along $x$, and cotangent to $x$; $c$ is a bicharacteristic strip precisely when $x(\lambda)$ is a null geodesic.

Now consider a normally hyperbolic operator $P$ on $\Gamma^\infty(B\otimes\Omega^{1/2})$, where
$B$ is a finite-rank complex vector bundle over $M$. Then $(-g)^{-1/4}P(-g)^{1/4}$ 
is a normally hyperbolic operator on $\Gamma^\infty(B)$ and may be written in Weitzenb\"ock form
\begin{equation}\label{eq:Weitzenboeck}
	(-g)^{-1/4}P(-g)^{1/4} = \Box^B + V, \qquad \Box^B:=g^{\mu\nu} \nabla_\mu^{T^*M\otimes B}\nabla_\nu^B  
\end{equation}
where $V\in \Gamma^\infty(\End(B))$ and $\nabla^B$ is the Weitzenb\"ock connection on $B$. Introduce a local frame $s_a$ for $B$ and define connection $1$-forms obeying
\begin{equation}
	s_a\Gamma^B_\mu{}^{a}_{\phantom{a}b} = \nabla_\mu^B s_b.
\end{equation}
A short calculation shows that $(Pf)^a=P^a_{\phantom{a}b}f^b$, where
the operators $P^a_{\phantom{a}b}$ on $\Gamma^\infty(\Omega^{1/2})$ 
are 
\begin{equation}
	P^{a}_{\phantom{a}b} = \delta^{a}_{\phantom{a}b} \Box + 2 g^{\mu\nu}\Gamma^B_\mu{}^{a}_{\phantom{a}b}\partial_\nu + \hat{V}^{a}_{\phantom{a}b}
\end{equation}
for some possibly-modified zero-order part $\hat{V}^{a}_{\phantom{a}b}$, and where $\Box$ acts on $\Gamma^\infty(\Omega^{1/2})$ as in~\eqref{eq:densitybox}. The principal symbol is $p(x,k)=q(x,k)\id_{B_x}$, so the subprincipal symbol can be understood as a partial connection as described earlier. 

Because $q^{\textnormal{sub}} = 0$, and the additional contributions to $P^{a}_{\phantom{a}b}$ are of first order, it follows that
\begin{equation}
	(p^{\textnormal{sub}})^{a}_{\phantom{a}b}(x,k) = 
	2\ii g^{\mu\nu}k_\nu\Gamma^B_\mu{}^{a}_{\phantom{a}b} = - \ii (\Gamma_{X_q}^{\pi^*B})
	{}^{a}_{\phantom{a}b},
\end{equation}
where $X_q$ is the Hamiltonian vector field obtained above.
In particular the invariantly defined expression $p^{\textnormal{sub}}+\ii\Gamma_{X_q}^{\pi^*B}$ vanishes. Our result here is closely related to the discussion of $P$-compatible connections in Section~2.2, and particularly Theorem 2.8, of~\cite{IslamStrohmaier:2020} and we have benefited from their presentation (of which ours may be considered a reworking). Similar results have been obtained in more specific contexts, including the Laplace--Beltrami operator on arbitrary rank covectors~\cite{Hintz:2017}, and the $1$-form Maxwell equation and Dirac equations~\cite{Kratzert:2000,Hollands:2001}. For future reference, we note the formula
\begin{equation}\label{eq:psub_simple}
	\ii p^{\textnormal{sub}} (x,k) = \Gamma^{\pi^*B}_{X_q}(x,k) = \Gamma^B_{\pi_*X_q}(x) = \Gamma^{B}_{-2k^\sharp}(x) ,
\end{equation}
in which indices have been suppressed. 

Next, consider the operators $P\otimes 1$ and $1\otimes\sadj{P}$ acting on 
$\Gamma^\infty((B\boxtimes B^*)\otimes\Omega^{1/2}_{M\times M})$,  where $\sadj{P}$ is the formal dual to $P$ and has Weitzenb\"ock form
\begin{equation}
	\sadj{P} = \Box^{B^*} + \sadj{V},
\end{equation}
as is seen by a short calculation.
The principal symbols are 
\begin{equation}
	\sigma(P\otimes 1)(x,k,y,l)= q(x,k)\id_{B_x\otimes B^*_y}, \qquad
	\sigma(1\otimes \sadj{P})(x,k,y,l)= q(y,l)\id_{B_x\otimes B^*_y},
\end{equation}
so 
\begin{equation}
	\Char P\otimes 1 = (\Nc_0\times T^*M)\setminus 0, \qquad \Char 1\otimes\sadj{P}= (T^*M\times \Nc_0)\setminus 0,
\end{equation}
and the corresponding scalar symbols are $q\otimes 1$ and $1\otimes q$.  
As a derivation on $C^\infty(T^*(M\times M))$, regarded as the completed tensor product
$C^\infty(T^*M) \mathbin{\widehat{\otimes}} C^\infty(T^*M)$, the symbol $q\otimes 1$ has Hamiltonian vector field $X_q\otimes 1$, with integral 
curves in $\dot{T}^*M\times T^*M$ taking the form $\lambda\mapsto (c(\lambda);y,l)$, where $c$ is an integral curve of $X_q$. The bicharacteristic strips for $q\otimes 1$ are the integral curves of this type for 
which $c$ is a bicharacteristic strip for $q$, together with the somewhat trivial bicharacteristic strips $\lambda\mapsto (x,0;y,l)$ in $0\times\dot{T}^*M$. Similarly, the Hamiltonian vector field $1\otimes X_q$ of $1\otimes q$ generates integral 
curves (resp., bicharacteristic strips) in $T^*M\times \dot{T}^*M$ of the form $\lambda\mapsto (x,k;c(\lambda))$, where $c$ is an integral curve of $X_q$ (resp., bicharacteristic strip for $q$) 
and trivial bicharacteristic strips $\lambda\mapsto (x,k;y,0)$ in $\dot{T}^*M\times 0$.
Choosing a frame for $B$ and a dual frame for $B^*$, the matrices of
subprincipal symbols at $(x,k;y,l)\in T^*(M\times M)$ are
$p^\textnormal{sub}(x,k)\otimes \id$ and $\id \otimes \sadj{p}^\textnormal{sub}(y,l)$.  

Finally, we also consider the difference $P\otimes 1-1\otimes\sadj{P}$, with scalar principal symbol $q\otimes 1- 1\otimes q$,  and characteristic set 
\begin{equation}
	\Char(P\otimes 1-1\otimes\sadj{P}) = \{(x,k;x',k')\in\dot{T}^*(M\times M): g^{-1}_x(k,k)=g^{-1}_{x'}(k',k')\},
\end{equation}
which includes the set $(0\times\Nc)\cup(\Nc\times 0)$. The Hamiltonian vector field is  $X_q\otimes 1-1\otimes X_q$. Because $q$ is quadratic in momenta, the integral curves in $\dot{T}^*M\times\dot{T}^*M$ may be expressed in the form $\lambda\mapsto(c(\lambda), -c'(\lambda))$ where $c$ and $c'$ are integral curves of $X_q$ and
the minus sign is understood as usual in $T^*M$; among these, the integral curves in which $c$ and $c'$ are over geodesics of the same causal type and with the same `energy' are bicharacteristic strips.
There are also degenerate bicharacteristic strips of the form $c\times (x',0)$ or $(x,0)\times (-c')$, where
$c$ and $c'$ are bicharacteristic strips of $q$. 
The subprincipal symbol at $(x,k;y,l)\in T^*(M\times M)$  is $p^\textnormal{sub}(x,k)\otimes \id - \id\otimes \sadj{p}^\textnormal{sub}(y,l)$.

Finally, if $P$ is normally hyperbolic then it has real principal type everywhere in $\dot{T}^*M$, while $P\otimes 1$ (resp., $1\otimes\sadj{P}$) has real principal type on $\dot{T}^*M\times T^*M$ (resp., $T^*M\times\dot{T}^*M$) and $P\otimes 1-1\otimes \sadj{P}$ has real principal type
everywhere in $\dot{T}^*(M\times M)$. In each case this is seen because the relevant Hamiltonian vector field is nonradial on the given sets.

\subsection{Hamilton orbits for normally hyperbolic and related operators}

Continuing with the above notation, let us now describe the Hamilton orbits of the operators $P$, $P\otimes 1$, $1\otimes \sadj{P}$ and $P\otimes 1-1\otimes\sadj{P}$ that will be needed in the proof of
Theorem~\ref{thm:WFpolEPpmandEP}.

The subprincipal symbol of $P$ satisfies $\ii p^{\textnormal{sub}}(c(\lambda))= \Gamma^{\pi^*B}_{\dot{c}(\lambda)}(c(\lambda))=\Gamma^B_{\dot{x}(\lambda)}(x(\lambda))$ (in any frame)
along any bicharacteristic strip $c$, where
$x(\lambda)=\pi(c(\lambda))$ and $\dot{x}=\pi_*\dot{c}$. Consequently,~\eqref{eq:DPWeq0} is equivalent to the equation of parallel transport
\begin{equation}
	\nabla^{\pi^*B}_{\dot{c}} w = 0,
\end{equation}
along $c$. However,~\eqref{eq:DPWeq0} is also equivalent to the equation of $\nabla^B$-parallel transport along the null geodesic $x=\pi\circ c$. As $x$ never revisits any point of $M$, it follows that the Hamilton orbits of $P$ are spanned by curves
$\lambda\mapsto (x(\lambda),k(\lambda);w(\lambda))$ in $\pi^*B$, where
$x$ is an affine null geodesic, $k(\lambda)=-\tfrac{1}{2}\dot{x}(\lambda)^\flat$, 
and $w$ is $\nabla^B$-parallel. Thus, 
\begin{equation}
	w(\lambda) = \Pi^{x(\lambda_0),k(\lambda_0)}_{x(\lambda),k(\lambda)} w(\lambda_0),
\end{equation}
where, as in the introduction, $\Pi^{x',k'}_{x,k}$ denotes the operator of $\nabla^B$-parallel transport
from $x'$ to $x$ along the unique witnessing null geodesic if $(x,k)\sim (x',k')$, and
$\Pi_{x,k}^{x',k'}=0$ otherwise.

Turning to the operator $P\otimes 1$, the Hamilton orbits over $\dot{T}^*M\times T^*M$ are spanned by curves $\lambda\mapsto (c(\lambda);y,l;z(\lambda))$ in $(\pi\times\pi)^*(B\boxtimes B^*)$, where $c(\lambda)=(x(\lambda),k(\lambda))$ is as before and the partial section $z$ obeys $D_{P\otimes 1}z=0$
along $\lambda\to C(\lambda)= (c(\lambda);y,l)$. Writing 
$Z=z\circ C$, the 
analogue of~\eqref{eq:DPWeq0} is
\begin{equation} 
	\dot{Z}(\lambda)  + \ii (p^{\textnormal{sub}}(c(\lambda))\otimes \id)  Z(\lambda) =0,
\end{equation}
in a frame for $B\boxtimes B^*$, but with indices suppressed, i.e., with $x=\pi\circ c$,
\begin{equation}
	\dot{Z}(\lambda) + (\Gamma_{\dot{x}}^B(x(\lambda)) \otimes \id)Z(\lambda) =0,
\end{equation}
which is solved by 
\begin{equation}
	z(\lambda)=(\Pi^{c(\lambda_0)}_{c(\lambda)}\otimes \id)z(\lambda_0).
\end{equation}
Similarly, the Hamilton orbits of $1\otimes\sadj{P}$ in $T^*M\times \dot{T}^*M$ are spanned by
curves $\lambda\mapsto (y,l;c(\lambda);z(\lambda))$ with
\begin{equation}
	z(\lambda)=(\id\otimes\sadj{\Pi}{}^{c(\lambda_0)}_{c(\lambda)})z(\lambda_0),
\end{equation}
where $\sadj{\Pi}{}^{x',k'}_{x,k}$ is defined analogously to $\Pi^{x',k'}_{x,k}$ but for $\nabla^{B^*}$-parallel transport. 

Finally, consider the Hamilton orbits of $P\otimes 1-1\otimes\sadj{P}$ 
in $\dot{T}^*{M}\times \dot{T}^*M$ (we will not need the orbits
in $(0\times \dot{T}^*M)\cup (\dot{T}^*M\times 0)$). These are
spanned by curves $\lambda\mapsto (c(\lambda);-c'(\lambda);z(\lambda))$, where $c$ and $c'$ are integral curves of $X_q$ in $\dot{T}^*M$ with equal energy (see Section~\ref{sec:nhypexamples}), and $z$ solves $D_{P\otimes 1-1\otimes\sadj{P}}z=0$ along
$c\times -c'$. Then $Z(\lambda)=z(c(\lambda),-c'(\lambda))$ satisfies
\begin{equation}
	\dot{Z}(\lambda) +\ii (p^{\textnormal{sub}}(c(\lambda))\otimes \id - \id\otimes \sadj{p}^\textnormal{sub}(-c'(\lambda)))  Z(\lambda) =0,  
\end{equation}
in a frame for $B\boxtimes B^*$, with indices suppressed as before. 
Using~\eqref{eq:psub_simple}, 
\begin{equation}
	\ii\, \sadj{p}^\textnormal{sub}(-c'(\lambda))=\Gamma^{B^*}_{2k'(\lambda)^\sharp}(x'(\lambda))
	=-\Gamma^{B^*}_{\dot{x}'(\lambda)}(x'(\lambda))
\end{equation} 
because $\dot{x} = -2k^\sharp$ along integral curves of $X_q$.
Thus $Z$ solves
\begin{equation}
	\dot{Z}(\lambda) + (\Gamma^B_{\dot{x}(\lambda)}(x(\lambda))\otimes \id + \id\otimes \Gamma^{B^*}_{\dot{x}'(\lambda)}(x'(\lambda)))  Z(\lambda) =0,  
\end{equation}
and the solutions obey
\begin{equation}\label{eq:Zprop}
	Z(\lambda) = 
	(\Pi{}^{c(\lambda_0)}_{c(\lambda)}\otimes\sadj{\Pi}{}^{-c'(\lambda_0)}_{-c'(\lambda)})Z(\lambda_0),
\end{equation} 
where we have temporarily abused the notation for the parallel propagator to cover timelike and spacelike geodesics as well. However, we will only use~\eqref{eq:Zprop} when $x$ and $x'$ are null geodesics. 
Note that the minus sign in~\eqref{eq:Zprop} is a matter of convention, because $\sadj{\Pi}{}^{-c'(\lambda_0)}_{-c'(\lambda)}=\sadj{\Pi}{}^{c'(\lambda_0)}_{c'(\lambda)}$, but this seems a natural choice because~\eqref{eq:Zprop} can also be interpreted as parallel transport 
of $(\pi\times\pi)^*z$ along
$c\times -c'$ with respect to $\nabla^{(\pi\times\pi)^* B\boxtimes B^*}$.

In short, we have seen how the propagation of polarisation for the operators $P$, $P\otimes 1$, $1\otimes \sadj{P}$ and $P\otimes 1-1\otimes \sadj{P}$ can be understood in terms of parallel transport along suitable curves
under the Weitzenb\"ock connection for the operator $(-g)^{-1/4}P(-g)^{1/4}$ and connections constructed from it. We now use these results to determine the polarisation sets for $E_P^\pm$ and $E_P$.

\subsection{Proof of Theorem~\ref{thm:WFpolEPpmandEP} and Corollary~\ref{cor:WFpol2}}

We recall the main assumptions of Theorem~\ref{thm:WFpolEPpmandEP}.
Let $P$ be a normally hyperbolic operator on $\Gamma^\infty(B\otimes\Omega^{1/2})$, where $B$ is a finite-rank complex vector bundle over globally hyperbolic spacetime $(M,g)$. Let
$\nabla^B$ be the Weitzenb\"ock connection on $B$ determined by $(-g)^{-1/4}P(-g)^{1/4}$ and define
$\Rc_\pol^\pm$, $\Rc_\pol$ as in~\eqref{eq:Rcpolsets}. Our task is to compute $\WF_\pol(E_P^\pm)$ and
$\WF_\pol(E_P)$. There is a special case in which $\WF_\pol(E_P^\pm)$ are easily computed. 
\begin{lemma}\label{lem:diagonalcase}
	Under the hypotheses of Theorem~\ref{thm:WFpolEPpmandEP}, 
	suppose that $B$ admits a global frame of smooth sections $s_a$ and a global dual frame $t^a$ for $B^*$, with respect to which $P$ takes the diagonal form
	\begin{equation}
		P^a_{\phantom{a}b}=\delta^a_{\phantom{a}b} Q
	\end{equation}
	where $Q$ is a normally hyperbolic operator on $\Gamma^\infty(\Omega^{1/2})$. Then $\WF_\pol(E_P^\pm)=\Rc_\pol^\pm\cup \WF_\pol(\id^\knl)$.
\end{lemma}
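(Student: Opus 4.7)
\begin{proofsketch}
The plan exploits two structural consequences of the diagonal hypothesis. First, comparing the Weitzenb\"ock expansion $P^{a}{}_{b} = \delta^{a}{}_{b}\Box + 2g^{\mu\nu}\Gamma^{B}_{\mu}{}^{a}{}_{b}\partial_\nu+\hat V^{a}{}_{b}$ (from Section~\ref{sec:nhypexamples}) with the assumed form $P^{a}{}_{b} = \delta^{a}{}_{b}Q$ forces $\Gamma^{B}_{\mu}{}^{a}{}_{b}\equiv 0$, so the frame $\{s_a\}$ is $\nabla^B$-parallel (and dually $\{t^a\}$ is $\nabla^{B^*}$-parallel). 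Consequently, whenever $(x,k)\sim(x',k')$, parallel transport read in $B_x\otimes B_{x'}^*$ is $\Pi^{x',k'}_{x,k} = \sum_{a}s_{a}(x)\otimes t^{a}(x')$, i.e., the frame-identity matrix. Second, diagonality passes to the Green operators: $(E_{P}^{\pm})^{a}{}_{b} = \delta^{a}{}_{b}E_{Q}^{\pm}$, where $E_{Q}^{\pm}$ is the scalar Green operator, whose kernel has wavefront set $\Rc^\pm$ unioned with the twisted diagonal~\cite{DuiHoer_FIOii:1972,Radzikowski_ulocal1996} and, being scalar-valued, polarisation set equal to $(\WF(E_Q^\pm)\times\CC)\cup 0$.

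The computation of $\WF_\pol(E_P^\pm)$ then reduces to a pointwise linear algebra question. Writing any $A\in\Psi^{0}_{\cl}((B\boxtimes B^{*})\otimes\Omega^{1/2},\Omega^{1/2})$ in the global frames as a matrix of scalar operators $A^{b}{}_{a}\in\Psi^0_\cl(\Omega_{M\times M}^{1/2})$, one has $AE_{P}^{\pm} = \sum_{a}A^{a}{}_{a}E_{Q}^{\pm}$. Fix a point $(x_{0},k_{0};x_{0}',-k_{0}')\in\WF(E_{P}^{\pm})$ and set $\Sigma=\sigma(A)(x_{0},k_{0};x_{0}',-k_{0}')\in\CC^{N\times N}$, $N=\mathrm{rk}\,B$. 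Scalar microlocal ellipticity implies that, if $AE_{P}^{\pm}$ is smooth, the scalar principal symbol $\tr\sigma(A)$ of $\sum_{a}A^{a}{}_{a}$ must vanish on $\WF(E_{Q}^{\pm})$; at the chosen point, $\tr\Sigma=0$. The polarisation condition on $w\in B_{x_0}\otimes B_{x_0'}^*$ then reduces to $\tr(\Sigma w)=0$ for every realisable $\Sigma$.

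The central step is to realise every traceless matrix $\Sigma_0$ as $\sigma(A)(x_0,\cdot)$ for some admissible $A$. The plan is a one-entry-shift construction: take any $A_0\in\Psi^0_\cl$ with $\sigma(A_0)(x_0,\cdot)=\Sigma_0$ (by standard quantization) and replace its $(1,1)$-entry by $A^{1}{}_{1}:=A_0^{1}{}_{1}-\sum_a A_0^{a}{}_{a}$, leaving other entries unchanged. The tracelessness $\tr\Sigma_0=0$ guarantees $\sigma(A)(x_0,\cdot)=\Sigma_0$ is preserved, while $\tr A\equiv 0$ is exact, so that $AE_{P}^{\pm}\equiv 0$ is trivially smooth. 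Elementary linear algebra then identifies the annihilator of the traceless matrices under the pairing $(\Sigma,w)\mapsto\tr(\Sigma w)$ as the line $\CC\cdot I$; back in $B_{x_0}\otimes B_{x_0'}^*$, this is $\CC\sum_a s_a(x_0)\otimes t^a(x_0')$, which at points of $\Rc^\pm$ equals $\CC\Pi^{x_0',k_0'}_{x_0,k_0}$ (matching $\Rc_\pol^\pm$) and on the twisted diagonal equals $\CC\delta_{x_0}$ (matching $\WF_\pol(\id)$). The reverse containment is immediate: for such $w$, tracelessness of $\Sigma$ forces $\tr(\Sigma w)=0$ for every admissible $A$.

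The only real obstacle is the realisation step -- constructing admissible $A$ with an arbitrary traceless principal symbol at a given base point, which might a priori demand a Borel-style asymptotic correction to render $AE_P^\pm$ smooth. The diagonal structure of $P$ trivialises this: the one-entry-shift makes $\tr A$ vanish exactly, not merely to infinite order, so no further microlocal bookkeeping is needed.
\end{proofsketch}
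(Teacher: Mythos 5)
Your central computation — writing $AE_P^\pm=\sum_a A^a{}_a\,E_Q^\pm$ in the global frame and characterising the polarisation fibre as the trace annihilator of the realisable symbols — is correct, and the one-entry-shift realisation of any traceless symbol is a clean way to show that the realisable symbols exhaust all traceless matrices. This is a genuinely different route from the paper, which instead observes that $E_P^\pm = E_Q^\pm\,\Pi$ with $\Pi(x,y)=s_a(x)\otimes t^a(y)$ a nowhere-vanishing section, and then simply invokes Lemma~\ref{lem:WFpol1} to identify the fibres with $\CC\Pi(x,x')$ over $\WF(E_Q^\pm)$. Your argument re-derives that special case of Lemma~\ref{lem:WFpol1} from scratch; the paper's is shorter, but the outcome is the same.

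However, your opening step contains a substantive error. Comparing the Weitzenb\"ock expansion with $P^a{}_b=\delta^a{}_b Q$ does \emph{not} force $\Gamma^B_\mu{}^a{}_b\equiv 0$. The scalar operator $Q$ is only required to be normally hyperbolic on $\Gamma^\infty(\Omega^{1/2})$, so it may carry a non-trivial first-order part: $Q=\Box+2Z^\mu\nabla_\mu+W$ for some vector field $Z$. Matching first-order coefficients then gives $\Gamma^B_\mu{}^a{}_b=Z_\mu\,\delta^a{}_b$, i.e.\ the connection $1$-form is proportional to the identity endomorphism but not necessarily zero. Consequently $s_a$ is not $\nabla^B$-parallel in general, and $\Pi^{x',k'}_{x,k}$ need not equal $\sum_a s_a(x)\otimes t^a(x')$ exactly. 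What is true (and is what the paper proves by deriving $\nabla^B s_a=Z^\flat\otimes s_a$) is that the frame is $\nabla^B$-parallel \emph{up to a nowhere-vanishing scalar integrating factor}, hence $\Pi^{x',k'}_{x,k}$ is a nonzero scalar multiple of $\sum_a s_a(x)\otimes t^a(x')$. Since the polarisation-set fibres are complex lines, $\CC\Pi^{x',k'}_{x,k}=\CC\sum_a s_a(x)\otimes t^a(x')$ still holds, so your final conclusion is unaffected. But the intermediate claim as stated is false and should be replaced by the proportionality statement.
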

\begin{proof}
	The hypothesis implies that
	$P (f^a s_a) = (Q f^a)s_a$ for coefficient half-densities $f^a\in \Gamma^\infty(\Omega^{1/2})$; consequently, the Green operators of $P$ are
	\begin{equation}
		E_P^\pm = s_a E_Q^\pm t^a
	\end{equation}
	with distributional kernel
	\begin{equation}
		E_P^\pm(x,y) = E_Q^\pm(x,y) s_a(x)\otimes t^a(y),
	\end{equation}
	recalling that $E_Q^\pm\in\DD'(\Omega^{1/2}_{M\times M})$.
	As $\Pi(x,y)=s_a(x)\otimes t^a(y)$ defines a nonvanishing section $\Pi$ of $B\boxtimes B^*$, we have by Lemma~\ref{lem:WFpol1} that
	\begin{equation}\label{eq:WFpolEPpm_diagonal}
		\WF_\pol(E^\pm_P)  = \{(x,k;x',-k';w)\in (\pi\times\pi)^*(B\boxtimes B^*) : (x,k;x',-k')\in\WF(E^\pm_P),~
		w\in\CC\Pi(x,x')\}\cup 0.
	\end{equation}
	In particular, note that $\Pi(x,x)=\delta_x$, the image of $\id_{B_x}$ under the identification of $\Lin(B_x)$ with $B_x\otimes B_x^*$.
	
	Now define $\tilde{P}=(-g)^{-1/4}P(-g)^{1/4}$, $\tilde{Q}=(-g)^{-1/4}Q(-g)^{1/4}$. 
	Writing $\tilde{P}$ in Weitzenb\"ock form, and using the normal hyperbolicity of $Q$ and hence $\tilde{Q}$, one has
	\begin{equation}
		(\Box f)s_a + 2\nabla^B_{(\nabla f)^\sharp}s_a + f\tilde{P}s_a = 
		(\Box f + 2\nabla_Z f+ Wf)s_a, \qquad f\in C^\infty(M)
	\end{equation}
	for some smooth vector field $Z$ and scalar field $W$. Considering $f\equiv 1$, 
	one sees that $\tilde{P}s_a = Ws_a$, and we may thus simplify to obtain
	\begin{equation}
		\nabla^B s_a = Z^\flat \otimes s_a.
	\end{equation}
	It follows that $s_a$ is $\nabla^B$-parallel along any curve, up to a nonzero scalar integrating factor (which can depend on the curve). Applied to the witnessing geodesic
	for the relation $(x,k)\sim (x',k')$, this yields $\Pi(x,x')\in \CC\Pi^{x',k'}_{x,k}$, which concludes the proof. 
\end{proof}

The proof of Theorem~\ref{thm:WFpolEPpmandEP} employs the following strategy. Using propagation of polarisation for $P\otimes 1$ and $1\otimes\sadj{P}$, we show how $\WF_\pol(E_P^\pm)$ is determined by its fibres over the twisted diagonal $\{(x,k;x,-k)\in \dot{T}^*(M\times M) :(x,k)\in\dot{T}^*M\}$ of $\dot{T}^*(M\times M)$. 
We show that such fibres either belong to $\WF_\pol(\id)$ or lie over points
$(x,k;x,-k)$ with $(x,k)\in\Nc$. In the latter case, we construct a normally hyperbolic operator $\tilde{P}$ that agrees with $P$ near $x$, but is diagonal near some $y$,
where $(y,l)$ lies on the bicharacteristic strip through $(x,k)$. Then the polarisation sets of $E_P^\pm$ and $E_{\tilde{P}}^\pm$ agree over $(x,k;x,-k)$, and this fibre 
can be connected to the fibre over $(y,l;y,-l)$ (given by Lemma~\ref{lem:diagonalcase})
using the propagation of polarisation
for $\tilde{P}\otimes 1-1\otimes\sadj{\tilde{P}}$. This results in an upper bound $\WF_\pol(E_P^\pm)\subset \Rc_\pol^\pm\cup \WF_\pol(\id)$. The reverse inclusion is shown by another
propagation of singularities argument.
\begin{proof}[Proof of Theorem~\ref{thm:WFpolEPpmandEP}] 
	(a) We compute $\WF_\pol(E_P^+)$ in four steps (the computation of $\WF_\pol(E_P^-)$ is analogous). First, the identities
	\begin{equation}\label{eq:kernel_identities}
		(P\otimes 1)E_P^+ = \id = (1\otimes \sadj{P}) E_P^+, 
	\end{equation} 
	together with the fact that $P$ and $\sadj{P}$ have scalar principal symbols,  
	give
	\begin{equation}\label{eq:step1a}
		\WF_\pol(E_P^+)\subset
		\WF_\pol(\id)\cup (\pi\times\pi)^*(B\boxtimes B^*)|_{(\Nc_0\times\Nc_0)\setminus 0},
	\end{equation}
	where we have used~\eqref{eq:WFpolgenbds}.
	Moreover, the support bound $\supp E^+_P f\subset J^+(\supp f)$
	implies that 
	\begin{equation}\label{eq:step1b}
		\WF(E_P^+)\subset \{(x,k;x',-k')\in \dot{T}^*(M\times M): x\in J^+(x')\}.
	\end{equation}	 
	
	Second, we use propagation of polarisation to obtain a global upper bound for
	$\WF_\pol(E_P^+)$ in terms of its fibres over the twisted diagonal. 
	Suppose that $(x,k;x',-k';w)\in \WF_\pol(E_P^+)\setminus \WF_\pol(\id)$ with $k\neq 0$.
	Then by~\eqref{eq:step1a} we have $(x,k)\in\Nc$, $(x',k')\in\Nc_0:=\Nc\cup 0$.
	It follows that if $c$ is the unique complete bicharacteristic strip for $P$ through $(x,k)$
	then $c\times (x',-k')$ is a bicharacteristic strip for 
	$P\otimes 1$, along which it has real principal type (see the end of Section~\ref{sec:nhypexamples}). If $c\times(x',-k')$ does not meet $\WF((P\otimes 1)E_P^+)=\WF(\id)$, i.e., the twisted diagonal,
	then $c\times(x',-k')$ is wholly contained in $\WF(E_P^+)$
	by the propagation of singularities part of
	Theorem~\ref{thm:prop_pol} applied to $P\otimes 1$. However $\pi\circ c$ includes points outside $J^+(x')$, contradicting~\eqref{eq:step1b}. Accordingly, $c$ must meet $(x',k')$, thus giving $(x,k)\sim(x',k')$ and in particular $k'\neq 0$,	
	so $(x,k;x',-k')\in\Rc^+$. Furthermore, Theorem~\ref{thm:prop_pol} also implies that
	$(\Pi^{x,k}_{x',k'}\otimes 1)w \in \WF_\pol(E_P^+)|_{(x',k',x',-k')}$ using propagation of polarisation and the closure of the polarisation set. Inverting the parallel transport operator, we see that $w\in (\Pi^{x',k'}_{x,k}\otimes 1)\WF_\pol(E_P^+)|_{(x',k';x',-k')}$. 
	
	A similar argument for the operator $1\otimes\sadj{P}$ shows that $\WF_\pol(E_P^+)\setminus \WF_\pol(\id)$ contains no
	points of the form $(x,0;x',-k';w)$ with $k'\neq 0$. Thus, we have shown that  
	\begin{equation}\label{eq:step2}
		\WF_\pol(E_P^+)\subset \WF_\pol(\id)\cup
		\{ (\Pi^{x',k'}_{x,k}\otimes 1)\WF_\pol(E_P^+)|_{(x',k';x',-k')},~
		(x,k;x',-k')\in \Rc^+
		\}.
	\end{equation} 
	
	The third step refines this upper bound using a deformation argument in combination with propagation of polarisation to show that $\WF_\pol(E_P^+)|_{(x,k;x,-k)}\subset \CC\delta_x$ for any $(x,k)\in\Nc$.
	For suppose $w\in \WF_\pol(E_P^+)|_{(x,k;x,-k)}$.
	Choose a region (i.e., an open causally convex subset of $M$) $N$ containing $x$ over which there is a local frame $s_a$ for $B|_N$ and corresponding dual frame $t^a$ for $B^*|_N$. 
	In frame components, $P$ is represented by a matrix
	of operators on $\Gamma^\infty(\Omega^{1/2})$ given by 
	\begin{equation}
		P^a_{\phantom{a}b}=\delta^a_{\phantom{a}b} \Box + 
		A^{\mu a}_{\phantom{\mu a}b}\nabla_\mu +
		V^{a}_{\phantom{a}b},
	\end{equation}
	where $\mu$ is an abstract spacetime index, and $A^{\mu a}_{\phantom{\mu a}b}$ is a matrix of smooth vector fields over $N$, while $V^{a}_{\phantom{a}b}$ is a matrix of smooth functions over $N$.
	
	Now choose Cauchy surfaces $\Sigma^\pm$ for $N$, with $\Sigma^\pm\subset I^\pm(\Sigma^\mp)$ and $x\in I^+(\Sigma^+)$. Choose $\chi\in C^\infty(N)$ with $\chi=1$ on $N^+=I^+(\Sigma^+)$ and $\chi=0$ on $N^-=I^-(\Sigma^-)$, and define a 
	normally hyperbolic $\tilde{P}$ on $N$ with frame components
	\begin{equation}
		\tilde{P}^a_{\phantom{a}b}=\delta^a_{\phantom{a}b} \Box + 
		\chi A^{\mu a}_{\phantom{\mu a}b}\nabla_\mu +
		\chi V^{a}_{\phantom{a}b}.
	\end{equation}
	As $N^\pm$ are regions of $N$, the distributional
	kernels of $E^+_{P}$ and $E^+_{\tilde{P}}$ coincide on $N^+\times N^+$, while the diagonal form of $\tilde{P}$ in $N^-$ fixes $\WF_\pol(E^+_{\tilde{P}})$ over $N^-\times N^-$ by Lemma~\ref{lem:diagonalcase}.
 	
	As $(x,k)\in\Nc$, there exists $(y,l)\sim (x,k)$ with $y\in N^-$
	and witnessing bicharacteristic strip $c$; moreover, we have 
	$(x,k;x,-k;w)\in \WF_\pol(E_{\tilde{P}}^+)\setminus 0$. 
	Using $(\tilde{P}\otimes 1-1\otimes\sadj{\tilde{P}})E_{\tilde{P}}^+=0$, 
	propagation of polarisation along $c\times (-c)$ entails that
	\begin{equation}
		\left(y,l;y,-l; (\Pi^{x,k}_{y,l}\otimes \sadj{\Pi}^{x,-k}_{y,-l})w\right)\in \WF_\pol(E_{\tilde{P}}^+)
	\end{equation}
	and hence, by Lemma~\ref{lem:diagonalcase},
	\begin{equation}
		w \in \CC (\Pi^{y,l}_{x,k}\otimes \sadj{\Pi}^{y,-l}_{x,-k})\delta_y = \CC \delta_x.
	\end{equation}
	The last identity is proved by choosing a frame $s_a$ for $B_y$ and a dual frame $t^a$ for $B_y^*$ and extending by parallel transport along the witnessing geodesic with respect to $\nabla^B$ and $\nabla^{B^*}$ respectively, so that the frames remain dual. Noting that $\delta_z=s_a(z)\otimes t^a(z)$ holds for any $z$ on the witnessing geodesic, and 
	$\Pi^{y,l}_{x,k}s_a(y)=s_a(x)$, $\sadj{\Pi}^{y,-l}_{x,-k}t^a(y)=
	\sadj{\Pi}^{y,l}_{x,k}t^a(y)=t^a(x)$, one has the required identity. 
	
	We have now established that $\WF_\pol(E_P^+)|_{(x,k;x,-k)}\subset \CC\delta_x$ for all $(x,k)\in\Nc$.
	In combination with~\eqref{eq:step2} and the fact that $(\Pi^{x',k'}_{x,k}\otimes 1)\delta_{x'}=\Pi^{x',k'}_{x,k}$,	
	this provides the upper bound 
	\begin{equation}\label{eq:WFpolupperbd}
		\WF_\pol(E_P^+)\subset 	 \WF_\pol(\id)\cup \Rc_\pol^+. 
	\end{equation}
	Now the fibre of the right-hand side over $\WF(\id)$ is at most one-dimensional, but on the other hand, $\WF(\id)=\WF((P\otimes 1)E_P^+)\subset\WF(E_P^+)$ by~\eqref{eq:WFPu}. Therefore we deduce that 
	$\WF_\pol(\id)\subset \WF_\pol(E_P^+)$.
	
	The fourth step, illustrated in Fig.~\ref{fig:fourth_step} establishes that $\Rc_\pol^+\subset \WF_\pol(E_P^+)$, completing the proof. 
	Take any $(x,k;x',-k')\in\Rc^+$ with $x\neq x'$ 
	and let $c$ be the complete bicharacteristic strip
	for $P$ witnessing to $(x,k)\sim (x',k')$. Construct a region $N$ containing $x$ and
	deformed operator $\tilde{P}$ as in the third step, and pick $(x'',k'')$ on $c$ so that
	$x''\in N^+$ and $x\in J^+(x'')$. (It will not matter whether $x''$ lies to the future or past of $x'$, and the possibility $x''=x'$ is not excluded.)  
	 Also choose distinct points $(y,l)$ and $(y',l')$
	on $c$ with $y,y'\in N^-$ and $y\in J^+(y')$. Then $(y,l;y',-l')\in \WF(E_{\tilde{P}}^+)$ and by propagation of singularities for $\tilde{P}\otimes 1$ this also implies $(x,k;y',-l')\in \WF(E_{\tilde{P}}^+)$, noting that $c$ does not meet $(y',l')$ between $(y,l)$ and $(x,k)$ (see Fig.~\ref{fig:fourth_step}(b)). Similarly, propagation of singularities for
	$1\otimes\sadj{\tilde{P}}$ implies that $(x,k;x'',-k'')$ lies in $\WF(E_{\tilde{P}}^+)$ and hence $\WF(E_{P}^+)$, because $-c$ does not meet
	$(x,-k)$ between $(y',-l')$ and $(x'',-k'')$ (Fig.~\ref{fig:fourth_step}(c)). Lastly, propagation of singularities
	for $1\otimes\sadj{P}$ implies that $(x,k;x',-k')\in\WF(E_P^+)$ because
	$-c$ does not meet $(x,-k)$ between $(x'',-k'')$ and $(x',-k')$ (Fig.~\ref{fig:fourth_step}(d)).
	Hence $\WF(E_P^+)$ contains every $(x,k;x',-k')\in\Rc^+$ with
	$x'\neq x$ and hence $\Rc^+\subset\WF(E_P^+)$ by closure of the wavefront set. Thus $\WF_\pol(E_P^+)$ has nontrivial fibre above every point of $\Rc^+$ and since
	the upper bound~\eqref{eq:WFpolupperbd} shows that the fibre is at most one-dimensional, it must equal $\CC\Pi^{x',k'}_{x,k}$. 
	Accordingly, $\Rc_\pol\subset\WF_\pol(E_P^+)$ and the proof of~(a) for $E_P^+$ is complete. The computation of $\WF_\pol(E_P^-)$ is directly analogous. 
	
	 \newcommand{\makeAlph}[1]{%
	 		\ifcase #1?\or a\or b\or c\or d\or e\or f\or g\or h\or
	 		i\or j\or k\or l\or m\or n\or o\or p\or q\or r\or
	 		s\or t\or u\or v\or w\or x\or y\or z\else ?\fi}
	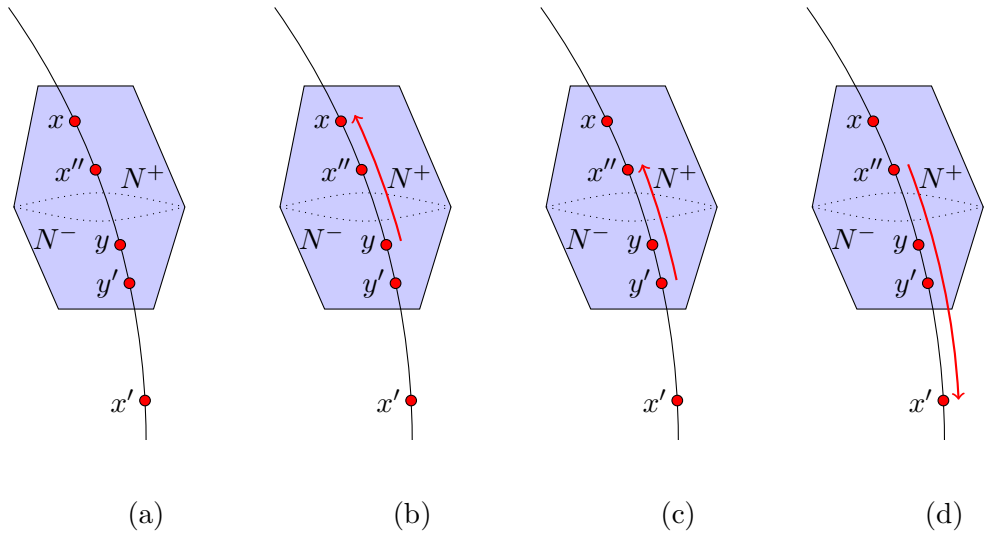
\begin{figure}
		\begin{center}
	\begin{tikzpicture}
	\foreach \x in {1,2,3,4}
	{\begin{scope}[xshift =3.5*\x cm]
	\clip (-2,-1.5) rectangle (2.5,8);
	\path (0,0) arc (0:28:10) coordinate (A);
	\path (0,0) arc (0:10:10) coordinate (B);
	\path (0,0) arc (0:18:10) coordinate (C);
	\draw[fill=blue!20] (A) -- ++(1,0) -- ($ (C)+(1,0) $) -- ($ (B)+(0.25,0) $) -- ++(-1.25,0) -- ($ (C)+(-1.25,0) $) -- ($ (A) +(-0.25,0)$) -- cycle;  
	\draw (C)++(-0.7,-0.4) node {$N^-$};
	\draw (C)++(0.45,0.4) node {$N^+$};
	\draw (0,0) arc (0:35:10); 
	\draw[dotted] ($ (C) +(-1.25,0) $) .. controls ++(1.125,0.25) .. ($ (C) +(1,0) $);
	\draw[dotted] ($ (C) +(-1.25,0) $) .. controls ++(1.125,-0.25) .. ($ (C) +(1,0) $);
	\filldraw[fill=red]  (-10,0)+(25:10cm) circle (2pt) node[left]{$x$};
	\filldraw[fill=red]  (-10,0)+(21:10cm) circle (2pt) node[left]{$x''$};
	\filldraw[fill=red]  (-10,0)+(15:10cm) circle (2pt) node[left]{$y$};
	\filldraw[fill=red]  (-10,0)+(12:10cm) circle (2pt) node[left]{$y'$};
	\filldraw[fill=red]  (-10,0)+(3:10cm) circle (2pt) node[left]{$x'$}; 
	\ifcase \x \relax\or\relax\or 
	\draw[->,red,thick] (-10,0)+(15:10.2cm) arc (15:25:10.2);\or
	\draw[->,red,thick] (-10,0)+(12:10.2cm) arc (12:21:10.2);\else
	\draw[->,red,thick] (-10,0)+(21:10.2cm) arc (21:3:10.2);\fi
	\draw (0,-1) node {(\makeAlph{\x})};
	\end{scope}} 
	\end{tikzpicture}
	\end{center}
	\caption{(a) The points $x$, $x'$, $y$, $y'$ and $x''$ along a null geodesic and region $N$ (shaded)
		and subregions $N^\pm$ appearing in step $4$ of the proof of Theorem~\ref{thm:WFpolEPpmandEP}. The unlabelled dotted lines are the Cauchy surfaces $\Sigma^\pm$.
	One has $(y,l;y',-l')\in \WF(E_{\tilde{P}}^+)$ at the outset. (b) Propagation of singularities for $\tilde{P}\otimes 1$ is used along the geodesic between $y$ and $x$ without meeting $y'$ to infer that $(x,k;y',-l')\in \WF(E_{\tilde{P}}^+)$. 
	(c) Propagation of singularities for $1\otimes\protect\sadj{\tilde{P}}$ is used along the geodesic between $y'$ and $x''$ without meeting $x$	to infer that  $(x,k;x'',-k'')\in\WF(E_{\tilde{P}}^+)\cap\WF(E_{P}^+)$. (d) Finally, propagation of singularities for $1\otimes\protect\sadj{P}$ is used along the geodesic between $x''$ and $x'$ without meeting $x$	to infer that $(x,k;x',-k')\in\WF(E_P^+)$.}
	\label{fig:fourth_step}	
	\end{figure}
	
	(b) As $E_P=E_P^--E_P^+$ is a homogeneous solution for $P\otimes 1$ and
	$1\otimes\sadj{P}$, we have $\WF(E_P)\subset \left(\WF(E_{P}^+)\cup\WF(E_P^-)\right)\cap (\Nc_0\times\Nc_0)=\Rc$. A point
	$(x,k;x',-k')\in\Rc$ with $x\neq x'$ belongs to precisely one of $\WF(E_P^\pm)$ and consequently
	$\WF_\pol(E_P)|_{(x,k;x',-k')}=\CC\Pi^{x',k'}_{x,k}$ by
	Lemma~\ref{lem:WFpol_addition} for all such points.  
	For points $(x,k;x,-k)\in\Rc$, any $(x,k;x,-k;w)\in \WF_\pol(E_P)$ propagates along $c\times (x,-k)$ where
	$c$ is the bicharacteristic strip for $P$ through $(x,k)$, and by what has just been established, this fixes $w\in\CC\delta_x=\CC\Pi^{x,k}_{x,k}$. Thus,
	$\WF_\pol(E_P)$ is nontrivial only over $\Rc$, where its fibres are spanned
	by the parallel transport operator. This establishes part~(b).  
	
	(c) This is immediate from parts~(a) and~(b) by~\eqref{eq:WFfromWFpol}.
\end{proof}

\begin{proof}[Proof of Corollary~\ref{cor:WFpol2}] 
	The kernel of $QE_PR$ is $(Q\otimes \sadj{R}) E_P\in\DD'((\hat{B}\boxtimes\tilde{B})\otimes\Omega^{1/2}_{M\times M})$.
	By Lemma~\ref{lem:WFpol0}, it is enough to show that 
	\begin{equation}
		(q\otimes \sadj{r})(x,k;x',-k')=
		q(x,k)\otimes \sadj{r}(x',-k')=q(x,k)\otimes r(x',k')^*
	\end{equation} 
	does not annihilate the fibre of $\WF_\pol(E_P)$ above any $(x,k;x',-k')\in\WF(E_P)=\Rc$, 
	which is true because
	\begin{equation}
		q(x,k)\otimes r(x',k')^* \Pi_{x,k}^{x',k'} = q(x,k)\circ \Pi_{x,k}^{x',k'}\circ r(x',k')\neq 0,
	\end{equation}
	for all such $(x,k;x',-k')$. Thus, we have shown that $\WF(QE_PR)=\Rc$.  
	The last statement follows immediately because $\Pi_{x,k}^{x',k'}$ is an isomorphism for each $(x,k;x',-k')\in\Rc\subset \Nc\times\Nc$.
\end{proof}

\section{The neutral Proca field}\label{sec:Proca}

\subsection{Closing the gap in MMV}

Now specialise to $n=4$ dimensions. For any globally hyperbolic spacetime $(M,g)$, let $\Lambda^pM$ be the bundle of complex-valued $p$-forms on $M$. The Hodge dual $\star:\Lambda^pM\to \Lambda^{4-p}M$ is defined by
\begin{equation}
	\omega\wedge\star \eta = \frac{1}{p!}\omega_{\alpha_1\cdots\alpha_p}\eta^{\alpha_1\cdots\alpha_p}\vol,
\end{equation}
where $\vol$ is the metric volume $4$-form, and the codifferential $\delta$ is  
\begin{equation}
	\delta\omega = (-1)^{\deg \omega} \star^{-1}\dd\star \omega,
\end{equation}
for any $p$-form field $\omega$, where $\dd$ is the usual exterior derivative. Fixing $m>0$, the
$p$-form Klein--Gordon operator is $K^{(p)}=-\delta\dd-\dd\delta+m^2$. One has intertwining relations
$\dd K^{(p)}=K^{(p+1)}\dd$ and $\delta K^{(p+1)}=K^{(p)}\delta$, from which it is easily shown that
$\dd E_{K^{(p)}}^\pm = E_{K^{(p+1)}}^\pm\dd$, and $\delta E_{K^{(p+1)}}^\pm = E_{K^{(p)}}^\pm\delta$.
 
The neutral Proca operator on $\Gamma^\infty(\Lambda^1 M)$ is
\begin{equation}
	P = -\delta\dd + m^2,
\end{equation}  
with Green operators $E_P^\pm = E_{K^{(1)}}^\pm R$, where $R=1-m^{-2}\dd\delta$. To see this, 
note that $PR=K^{(1)}=RP$ and hence, for $j\in\Gamma_0^\infty(\Lambda^1 M)$, one has $PE_P^\pm j =
P E_{K^{(1)}}^\pm Rj=
 E_{K^{(1)}}^\pm PRj= j$ and likewise 
$E_P^\pm P j = E_{K^{(1)}}^\pm RPj= j$, while $E_P^\pm j\subset J^\pm(\supp j)$ by the same property of $E_{K^{(1)}}^\pm$. As already mentioned, the operator $R$ is characteristic everywhere on $\dot{T}^*M$, preventing the use of standard wavefront set calculus to compute $\WF(E_P)$. Instead we will do this using Corollary~\ref{cor:WFpol2}.

It will be useful to observe that $\delta E_P^\pm = E_{K^{(0)}}^\pm (\delta-m^{-2}\delta\dd\delta)= m^{-2}E_{K^{(0)}}^\pm K^{(0)}\delta =m^{-2}\delta$ on $\Gamma_0^\infty(\Lambda^1M)$; similarly $E_P^\pm \dd = m^{-2}\dd$ on $C_0^\infty(M)$, and we obtain 
\begin{equation}\label{eq:deltaEP_EPd}
	\delta E_P = 0= E_P\dd
\end{equation}
as a result. In addition, one has the Weitzenb\"ock formula 
\begin{equation}
(K^{(1)} A)_\mu =g^{\alpha\beta}\nabla_\alpha\nabla_\beta A_\mu + (m^2\delta_\mu^{\phantom{\mu}\nu} + R_\mu^{\phantom{\mu}\nu}) A_\nu,
\end{equation}
where $\nabla$ is the Levi--Civita connection and $R_{\alpha\beta}=R^{\lambda}_{\phantom{\lambda}\alpha\lambda\beta}$ is the corresponding Ricci tensor,
with the Riemann tensor defined so that  
 $(\nabla_\alpha\nabla_\beta- \nabla_\alpha\nabla_\beta)v^\mu = R_{\alpha\beta\lambda}^{\phantom{\alpha\beta\lambda}\mu} v^\lambda$.
Thus the Weitzenb\"ock connection for $K^{(1)}$ is simply the Levi--Civita connection on $\Lambda^1M$, which therefore determines the propagation of polarisation for the $\tfrac{1}{2}$-densitised operator 
$(-g)^{1/4}K^{(1)}(-g)^{-1/4}$ on $\Lambda^1M\otimes \Omega^{1/2}_M$. We come to the main result of this subsection.

\begin{thm}\label{thm:WFEProca}
	Let $P=-\delta \dd+m^2$ be the Proca operator on globally hyperbolic spacetime $(M,g)$. Then  
	\begin{equation}\label{eq:WFEProca}
		\WF(E_{P})=\Rc.
	\end{equation}	
\end{thm}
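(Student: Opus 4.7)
\begin{proofsketch}
The plan is to apply Corollary~\ref{cor:WFpol2} directly, using the factorisation $E_P = E_{K^{(1)}} R$ established in the introduction (with $R = 1 - m^{-2}\dd\delta$ and $K^{(1)}$ the normally hyperbolic $1$-form Klein--Gordon operator). Because the Corollary is stated for operators between half-densitised bundles, I first pass to the conjugates $\tilde{P} = (-g)^{1/4}P(-g)^{-1/4}$, $\tilde{K}^{(1)} = (-g)^{1/4}K^{(1)}(-g)^{-1/4}$, $\tilde{R} = (-g)^{1/4}R(-g)^{-1/4}$, so that the Green operators act on sections of $\Lambda^1 M \otimes \Omega^{1/2}_M$; this conjugation by a smooth nonvanishing density is transparent to wavefront sets, so $\WF(E_{\tilde P}) = \WF(E_P)$. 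In the notation of Corollary~\ref{cor:WFpol2} I then take $Q = \id$, take the normally hyperbolic operator to be $\tilde{K}^{(1)}$, and take the post-composition $R$ (in the Corollary's notation) to be $\tilde{R}$, so that $E_{\tilde P} = E_{\tilde K^{(1)}} \tilde R$ matches the required form.

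The one substantive verification is that the principal symbol $r$ of $R$ is nonvanishing on $\Nc$. This is already observed in the introduction: since $R$ has order $2$, its principal symbol coincides with that of $-m^{-2}\dd\delta$, and a short computation using $\sigma(\dd)(x,k)f = \ii k f$ together with $\sigma(\delta)(x,k)v = -\ii g^{-1}(k,v)$ gives
\begin{equation}
r(x,k) v = -m^{-2}\, g^{-1}(k,v)\, k, \qquad v \in T_x^*M.
\end{equation}
For any $(x,k) \in \Nc$, the covector $k$ is nonzero, so the linear functional $v \mapsto g^{-1}(k,v)$ is nonzero on $T_x^*M$; choosing any $v$ with $g^{-1}(k,v) \neq 0$ yields $r(x,k)v \neq 0$, and hence $r(x,k)$ is nonvanishing as a linear map throughout $\Nc$. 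With this hypothesis in hand, the $Q = \id$ clause of Corollary~\ref{cor:WFpol2} delivers $\WF(E_{\tilde K^{(1)}} \tilde R) = \Rc$, i.e.\ $\WF(E_P) = \Rc$.

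The conceptual point (and precisely the source of the MMV gap) is that $r$ is simultaneously nonvanishing \emph{and} everywhere characteristic on $\dot{T}^*M$: the image of $r(x,k)$ is the line $\CC k$, and on $\Nc$ the vector $k$ itself lies in the kernel of $r(x,k)$, so $\Char R = \dot{T}^*M$ and standard scalar wavefront calculus yields no useful lower bound on $\WF(E_P)$. There is therefore no hard step to overcome once Corollary~\ref{cor:WFpol2} is available: the polarisation-set refinement built into Theorem~\ref{thm:WFpolEPpmandEP} tracks the specific fibre direction $\Pi_{x,k}^{x',k'}$ (parallel transport with respect to the Levi--Civita connection, which is the Weitzenb\"ock connection for $K^{(1)}$) that actually carries the singularity of $E_{K^{(1)}}$, and $r$ does not annihilate this direction. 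All the work lies upstream, in Theorem~\ref{thm:WFpolEPpmandEP} and Corollary~\ref{cor:WFpol2}; the Proca theorem is a direct application.
\end{proofsketch}
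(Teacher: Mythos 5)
Your proof is correct and is essentially identical to the paper's: both pass to the half‑densitised conjugates $\tilde{K}^{(1)}$, $\tilde{R}$, verify that $r(x,k)v=-m^{-2}g^{-1}(k,v)k$ is nonvanishing (though everywhere characteristic) on $\Nc$, and then invoke the $Q=\id$ clause of Corollary~\ref{cor:WFpol2}. Your version merely spells out the symbol computation and the conceptual remark about $\Char R=\dot{T}^*M$ in a bit more detail than the paper's terse three‑line argument.
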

\begin{proof}
	Observe that $R\in\Psi^2_\cl (\Lambda^1)$ has principal symbol  $r(x,k)v= - m^{-2} g^{-1}(k,v)k$ at $(x,k)\in T^*M$, which is nonvanishing on $\Nc$ (where it is characteristic, as one sees on setting $v=k$).
	One has $\WF(E_P)=\WF((-g)^{1/4}E_P(-g)^{-1/4})=\WF(E_{\tilde{K}^{(1)}}\tilde{R})$, where
	$\tilde{K}^{(1)}= (-g)^{1/4}K^{(1)}(-g)^{-1/4}$ and $\tilde{R}=(-g)^{1/4}R(-g)^{-1/4})$. 
	As $\sigma(\tilde{R})=r$ and $\tilde{K}^{(1)}$ is normally hyperbolic, we deduce~\eqref{eq:WFEProca} immediately from Corollary~\ref{cor:WFpol2}. 
\end{proof}
As discussed in the introduction, this closes a gap in the paper of Moretti, Murro and Volpe~\cite{MorettiMurroVolpe:2023}. Further discussion can be found in the companion paper~\cite{Fewster:2025a}, in which it is also shown how the MMV definition of Hadamard states is actually equivalent to the older definition in~\cite{Few&Pfen03} -- completing a partial equivalence established in~\cite{MorettiMurroVolpe:2023}.

\subsection{The polarisation set of $E_P$}

In this subsection, we build upon Theorem~\ref{thm:WFEProca} to determine the polarisation set of $E_P$.  
\begin{thm}\label{thm:WFpolEProca}
		Let $P=-\delta \dd+m^2$ be the Proca operator on globally hyperbolic spacetime $(M,g)$. Then 
	\begin{align}\label{eq:WFpolEProca}
		\WF_\pol(((-g)^{1/4}E_{P}(-g)^{-1/4})^\knl) &= 
		\{(x,k;x',-k';w)\in (\pi\times\pi)^*\Lambda^1M\boxtimes (\Lambda^1M)^*: \nonumber\\
		&\qquad\qquad (x,k;x',-k')\in\Rc,~w\in\CC k\otimes (k')^\sharp\}.
	\end{align}
\end{thm}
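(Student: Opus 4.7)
The plan is to determine $\WF_\pol(\tilde E_P^\knl)$ at each $(x,k;x',-k') \in \Rc$, writing $\tilde E_P := (-g)^{1/4} E_P (-g)^{-1/4}$ and similarly for other operators. Outside $\Rc$, $\WF(\tilde E_P) = \Rc$ from Theorem~\ref{thm:WFEProca} forces $\WF_\pol(\tilde E_P)$ to be trivial. Since the Weitzenb\"ock connection for $\tilde K^{(1)}$ is the Levi-Civita connection on $\Lambda^1 M$, the parallel transport $\Pi^{x',k'}_{x,k}$ appearing in Theorem~\ref{thm:WFpolEPpmandEP} is Levi-Civita, and in particular satisfies $\Pi^{x',k'}_{x,k} k' = k$.

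For the lower bound, I apply Lemma~\ref{lem:WFpol0} to the factorisation $\tilde E_P^\knl = (1 \otimes \sadj{\tilde R}) \tilde E_{K^{(1)}}^\knl$. By Theorem~\ref{thm:WFpolEPpmandEP}(b), $\WF_\pol(\tilde E_{K^{(1)}})|_{(x,k;x',-k')} = \CC \Pi^{x',k'}_{x,k}$, so a direct tensor computation---decomposing $\Pi^{x',k'}_{x,k} = \sum_i \alpha_i \otimes v_i$ with $\sum_i \dlangle v_i, k'\drangle \alpha_i = \Pi^{x',k'}_{x,k}(k') = k$---gives $(1 \otimes \sadj r(x',-k')) \Pi^{x',k'}_{x,k} = -m^{-2} k \otimes (k')^\sharp$. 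Hence $\CC k \otimes (k')^\sharp \subset \WF_\pol(\tilde E_P)|_{(x,k;x',-k')}$.

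For the upper bound, the crucial observation is that any $C \in \Psi_\cl^0$ for which the composition $C(1 \otimes \sadj{\tilde R})$ smooths $\tilde E_{K^{(1)}}$ automatically makes $C\tilde E_P$ smooth, since $C \tilde E_P^\knl = C(1 \otimes \sadj{\tilde R}) \tilde E_{K^{(1)}}^\knl$ at the kernel level. At the principal symbol level this imposes $\sigma(C)(1 \otimes \sadj r) \Pi^{x',k'}_{x,k} = 0$ on $\Rc$, which by Step~1's calculation is equivalent to $\sigma(C) \cdot (k \otimes (k')^\sharp) = 0$ on $\Rc$. A standard parametrix-type construction, analogous to those used in the proof of Theorem~\ref{thm:WFpolEPpmandEP}, then realises every symbol $\phi$ satisfying this algebraic condition as the principal symbol of some $C \in \Psi_\cl^0$ that genuinely smooths $\tilde E_P$, after iterative lower-order corrections. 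Intersecting $\ker \sigma(C)$ over all such $C$ at a fixed $(x_0, k_0; x'_0, -k'_0) \in \Rc$, and invoking the finite-dimensional double-annihilator identity for the $1$-dimensional subspace $\CC k_0 \otimes (k'_0)^\sharp$ of $\Lambda^1_{x_0} \otimes T_{x'_0} M$, yields $\WF_\pol(\tilde E_P)|_{(x_0, k_0; x'_0, -k'_0)} \subset \CC k_0 \otimes (k'_0)^\sharp$, matching the lower bound.

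The main technical challenge lies in the parametrix-style construction of the smoothing operators $C$: the algebraic condition on $\sigma(C)$ is necessary but not automatically sufficient for $C\tilde E_P$ to be smooth, so subleading singularities must be iteratively absorbed. This rests on the explicit Hadamard parametrix structure of $\tilde E_{K^{(1)}}$ and is standard, but warrants careful spelling out. An alternative route, more in the spirit of the proof of Theorem~\ref{thm:WFpolEPpmandEP}, would be to combine propagation of polarisation along Hamilton orbits of $K^{(1)} \otimes 1$ and $1 \otimes \sadj{K^{(1)}}$ (applicable since $(K^{(1)} \otimes 1)\tilde E_P = 0 = (1 \otimes \sadj{K^{(1)}})\tilde E_P$, because $K^{(1)} E_{K^{(1)}} = 0$) to reduce the computation to a diagonal point $(x, k; x, -k) \in \Rc$, at which the leading Hadamard coefficient of $\tilde E_P = \tilde E_{K^{(1)}} \tilde R$ can be read off directly to confirm the polarisation direction $k \otimes k^\sharp$.
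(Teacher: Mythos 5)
Your lower bound (that $\CC k\otimes(k')^\sharp$ is contained in the polarisation fibre over each $(x,k;x',-k')\in\Rc$) is correct and essentially coincides with the first part of the paper's Lemma~\ref{lem:WFpolEProca_apriori}, modulo the harmless choice of applying $R$ to the right-hand slot instead of the left. The upper bound, however, has a genuine gap, and it is the crux of the theorem. You claim that every symbol $\phi$ annihilating $k\otimes(k')^\sharp$ on $\Rc$ can be realised as the principal symbol of some $C\in\Psi^0_\cl$ which actually smooths $(-g)^{1/4}E_P(-g)^{-1/4}$, via ``a standard parametrix-type construction, analogous to those used in the proof of Theorem~\ref{thm:WFpolEPpmandEP}''. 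This does not hold up: the proof of Theorem~\ref{thm:WFpolEPpmandEP} contains no parametrix construction at all --- it relies entirely on Dencker's propagation of polarisation combined with a deformation to a diagonal operator --- and the assertion that the algebraic condition $\sigma(C)(k\otimes(k')^\sharp)=0$ can always be upgraded to a genuinely smoothing $C$ after iterative lower-order corrections is precisely what would have to be proved. Carrying this out would require an explicit Hadamard-type expansion of $E_{K^{(1)}}$ along the entire null-cone relation $\Rc$, not just near the diagonal, and a verification that the subleading singularities of $(1\otimes\sadj{R})E_{K^{(1)}}$ can be cancelled order by order; none of this is standard, none of it is done, and you concede as much in the penultimate sentence. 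Since the a priori bound in Lemma~\ref{lem:WFpolEProca_apriori} only constrains the fibre to the nine-dimensional space $k^\perp\otimes((k')^\perp)^\sharp$, the gap is large.

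The paper closes this gap by a completely different route. It first establishes the Minkowski case (Lemma~\ref{lem:WFpolEPMink}) using the exact operator identity $S^a_{\phantom{a}b}E_P S^b_{\phantom{b}a}-3E_P=3m^{-2}\dd E_{K^{(0)}}\delta$ together with the explicit computation of $\WF_\pol((\rho^{1/2}\dd E_{K^{(0)}}\delta\rho^{-1/2})^\knl)$ in Lemma~\ref{lem:WFpoldEKdelta}, and then transfers the conclusion to an arbitrary globally hyperbolic spacetime via propagation of polarisation for $K^{(1)}\otimes 1$ and $K^{(1)}\otimes 1 - 1\otimes\sadj{K^{(1)}}$ plus a locally covariant deformation through a chain of Cauchy morphisms (Lemma~\ref{lem:WFpolProcaprop}). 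Your alternative sketch --- propagating to a diagonal point and ``reading off the leading Hadamard coefficient'' --- is closer in spirit but still omits both halves of this mechanism: there is no argument given for why the coefficient can be read off in a general spacetime, and the reduction to a model spacetime where an exact computation is feasible (here, Minkowski) is absent entirely. Those two steps together \emph{are} the upper-bound proof, so the proposal as written does not establish the theorem.
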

The proof involves a few steps. To slightly economise on notation, we write $\rho=(-g)^{1/2}$ for the metric density. First, we establish some useful general results, starting with
some a priori bounds on the polarisation set in~\eqref{eq:WFpolEProca}.
\begin{lemma}\label{lem:WFpolEProca_apriori}
	For any globally hyperbolic spacetime $(M,g)$, the right-hand side of~\eqref{eq:WFpolEProca} is contained in the left-hand side. Moreover, one has
	\begin{align}\label{eq:WFpolEProca_apriori}
		\WF_\pol((\rho^{1/2}E_{P}\rho^{-1/2})^\knl) &\subset 
		\{(x,k;x',-k';w)\in (\pi\times\pi)^* \Lambda^1M\boxtimes (\Lambda^1M)^*: \nonumber\\
		&\qquad\qquad (x,k;x',-k')\in\Rc,~w\in\CC k^\perp\otimes ((k')^\perp)^\sharp\},
	\end{align}
	where $k^\perp$ is the subspace of $T^*_xM$ annihilated by $k^\sharp$, and $(k')^\perp$
	is defined in the same way.
\end{lemma}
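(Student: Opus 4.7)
The plan is to establish the two inclusions independently, applying Lemma~\ref{lem:WFpol0} to two natural pseudodifferential identities satisfied by $\tilde{E}_P^\knl$, where I write $\tilde{A}:=\rho^{1/2}A\rho^{-1/2}$ for any operator $A$. A preliminary observation is that the Weitzenb\"ock connection for $(-g)^{-1/4}K^{(1)}(-g)^{1/4}$ is the Levi--Civita connection on $\Lambda^1M$, so Theorem~\ref{thm:WFpolEPpmandEP}(b) determines $\WF_\pol(\tilde{E}_{K^{(1)}}^\knl)$ as spanned fibrewise by the Levi--Civita parallel-transport bitensor $\Pi_{x,k}^{x',k'}$ above $(x,k;x',-k')\in\Rc$.

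For the lower bound I would exploit the factorisation $E_P=E_{K^{(1)}}R$, which densitises to the kernel identity $\tilde{E}_P^\knl=(1\otimes\sadj{\tilde{R}})\tilde{E}_{K^{(1)}}^\knl$. By Lemma~\ref{lem:WFpol0}, the image of $\Pi_{x,k}^{x',k'}$ under the symbol $1\otimes\sigma(\sadj{\tilde{R}})(x',-k')=1\otimes r(x',k')^*$ lies in the polarisation fibre of $\tilde{E}_P^\knl$ at $(x,k;x',-k')$. Under the canonical isomorphism $\Lin(\Lambda^1_{x'}M,\Lambda^1_xM)\cong\Lambda^1_xM\otimes T_{x'}M$, this image corresponds to the composition $\Pi_{x,k}^{x',k'}\circ r(x',k')$; using that $\Pi$ transports $k'$ to $k$ along the witnessing null geodesic, together with the explicit form of $r$, one sees that it is a nonzero scalar multiple of $k\otimes (k')^\sharp$, as required.

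For the upper bound I would invoke~\eqref{eq:deltaEP_EPd}, whose densitised form is $(\tilde{\delta}\otimes 1)\tilde{E}_P^\knl=0=(1\otimes\sadj{\tilde{\dd}})\tilde{E}_P^\knl$. Because both outputs are trivially smooth, the inclusion~\eqref{eq:WFpolgenbds} confines the polarisation set of $\tilde{E}_P^\knl$ to the kernel of each of the two principal symbols. The symbol of $\tilde{\delta}$ at $(x,k)$ acts on the $\Lambda^1_xM$-factor, up to a nonzero constant, by contraction with $k^\sharp$, so its kernel is $k^\perp$; dually, the symbol of $\sadj{\tilde{\dd}}$ at $(x',-k')$ pairs the $T_{x'}M$-factor with $k'$, with kernel $((k')^\perp)^\sharp$. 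A straightforward decomposition of $\Lambda^1_xM\otimes T_{x'}M$ with respect to complements of $k^\perp$ and $((k')^\perp)^\sharp$ combines the two individual factor constraints into $w\in k^\perp\otimes((k')^\perp)^\sharp$, while Theorem~\ref{thm:WFEProca} restricts the support to $\Rc$, yielding the claimed upper bound.

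The main obstacle is the bookkeeping in the lower-bound step: I need to identify $(1\otimes r(x',k')^*)\Pi_{x,k}^{x',k'}$ with $\Pi_{x,k}^{x',k'}\circ r(x',k')$ under the canonical tensor-operator identification, and to handle the sign convention $\sigma(\sadj{A})(x,\xi)=a(x,-\xi)^*$ together with the principal symbols of $\delta$ and $\dd$ consistently. With these conventions settled, both bounds reduce to short direct calculations.
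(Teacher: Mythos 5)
Your argument is correct and follows essentially the same route as the paper: the lower bound comes from pushing $\WF_\pol(\tilde{E}_{K^{(1)}}^\knl)=\Rc_\pol\cup 0$ (with Levi--Civita parallel transport) through the symbol of $R$ using Lemma~\ref{lem:WFpol0}, and the upper bound comes from applying~\eqref{eq:WFpolgenbds} to the two identities in~\eqref{eq:deltaEP_EPd}. The one cosmetic difference is that the paper factorises $E_P=(1-m^{-2}\dd\delta)E_{K^{(1)}}$ and applies $\tilde R\otimes 1$ on the first slot, whereas you use $E_P=E_{K^{(1)}}R$ and apply $1\otimes\sadj{\tilde R}$ on the second slot — these are interchangeable since $R$ commutes with $E_{K^{(1)}}$, and your identification $(1\otimes r(x',k')^*)\Pi = \Pi\circ r(x',k')$ is the correct way to see that both yield the same nonzero multiple of $k\otimes(k')^\sharp$.
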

\begin{proof}
	Because $E_{P}=(1-m^{-2}\dd\delta)E_{K^{(1)}}$, one has
	\begin{equation}
		(\rho^{1/2}E_{P}\rho^{-1/2})^\knl = (\rho^{1/2}(1-m^{-2}\dd\delta)\rho^{-1/2}\otimes 1) (E_{\rho^{1/2}K^{(1)}\rho^{-1/2}})^\knl.
	\end{equation}
	and consequently $k \otimes (k\cdot \Pi_{x,k}^{x',k'})\in \WF_\pol((\rho^{1/2}E_{P}\rho^{-1/2})^\knl)|_{(x,k;x',-k')}$ for any $(x,k;x',-k')
	\in \dot{T}^*(M\times M)$ using Lemma~\ref{lem:WFpol0} and Theorem~\ref{thm:WFpolEPpmandEP}(b). 
	Here, $(k\cdot \Pi_{x,k}^{x',k'}){}^\beta=k^\alpha (\Pi_{x,k}^{x',k'})_{\alpha}^{\phantom{\alpha}\beta}= (k')^\beta$ in index notation,
	so this implies the first statement.
	
	Turning to~\eqref{eq:WFpolEProca_apriori}, we already know that $\WF(E_P)=\Rc$, so
	the polarisation set is only nontrivial over $\Rc$. Using~\eqref{eq:deltaEP_EPd} one obtains 
	\begin{equation}
		(\rho^{1/2}\delta\rho^{-1/2} \otimes 1) E_{\rho^{1/2}P\rho^{-1/2}}^\knl = 0 = 
		(1\otimes\rho^{1/2}\dd\rho^{-1/2}) E_{\rho^{1/2}P\rho^{-1/2}}^\knl,
	\end{equation}
	which implies that if $(x,k;x',-k';w)\in \WF_\pol((\rho^{1/2}E_{P}\rho^{-1/2})^\knl)$ then $k^\alpha w_\alpha^{\phantom{\alpha}\beta}$ and $w_\alpha^{\phantom{\alpha}\beta}k'_\beta$ both vanish, i.e., 
	$w\in\CC k^\perp\otimes ((k')^\perp)^\sharp$.
\end{proof}
\begin{lemma}\label{lem:WFpoldEKdelta}
	In any globally hyperbolic spacetime $(M,g)$,
	\begin{align}\label{eq:WFpol_dEKdelta}
		\WF_\pol((\rho^{1/2}\dd E_{K^{(0)}}\delta \rho^{-1/2})^\knl) &= 
		\{(x,k;x',-k';w)\in (\pi\times\pi)^* \Lambda^1M\boxtimes (\Lambda^1M)^*: \nonumber\\
		&\qquad\qquad (x,k;x',-k')\in\Rc,~w\in\CC k\otimes (k')^\sharp\}.
	\end{align}
\end{lemma}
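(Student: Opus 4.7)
The plan is to combine a lower bound obtained from Lemma~\ref{lem:WFpol0} applied to a convenient factorisation of the kernel with an upper bound derived from two complementary annihilation identities that descend from $\dd^2 = 0 = \delta^2$.

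For the inclusion $\supseteq$, I would write $\tilde T := \rho^{1/2}\dd E_{K^{(0)}}\delta\rho^{-1/2} = \tilde\dd\,\tilde E_{K^{(0)}}\tilde\delta$, where $\tilde X := \rho^{1/2}X\rho^{-1/2}$, so that the composition-of-kernels formula gives $\tilde T^\knl = (\tilde\dd\otimes\sadj{\tilde\delta})\tilde E_{K^{(0)}}^\knl$. Since $\rho^{1/2}K^{(0)}\rho^{-1/2}$ is normally hyperbolic on $\Gamma^\infty(\Omega^{1/2})$ with trivial line bundle, Theorem~\ref{thm:WFpolEPpmandEP}(b) gives the scalar polarisation set $\WF_\pol(\tilde E_{K^{(0)}}^\knl)|_{(x,k;x',-k')\in\Rc} = \CC$. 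The principal symbol of $\tilde\dd\otimes\sadj{\tilde\delta}$ at $(x,k;x',-k')$ is the rank-one map $\CC\to\Lambda^1_x\otimes(\Lambda^1_{x'})^*$ sending $1\mapsto(ik)\otimes(-i(k')^\sharp) = k\otimes(k')^\sharp$, so Lemma~\ref{lem:WFpol0} yields $\CC\,k\otimes(k')^\sharp\subseteq\WF_\pol(\tilde T^\knl)|_{(x,k;x',-k')}$ for each $(x,k;x',-k')\in\Rc$.

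For the reverse inclusion $\subseteq$, the key observation is that $T$ is annihilated in two complementary ways: $\dd T = \dd^{2}E_{K^{(0)}}\delta = 0$ and $T\circ(\delta|_{\Lambda^2}) = \dd E_{K^{(0)}}(\delta\circ\delta|_{\Lambda^2}) = 0$, using $\dd\circ\dd = 0$ and $\delta\circ\delta = 0$ respectively. After densitisation these descend to the kernel identities
\begin{equation*}
(\tilde\dd\otimes 1)\tilde T^\knl = 0 \qquad\text{and}\qquad (1\otimes\sadj{\tilde\delta|_{\Lambda^2}})\tilde T^\knl = 0.
\end{equation*}
Applying~\eqref{eq:WFpolgenbds} to each of these (with smooth, in fact vanishing, right-hand side) gives
\begin{equation*}
\WF_\pol(\tilde T^\knl)\subseteq \ker\sigma(\tilde\dd\otimes 1)\cap \ker\sigma(1\otimes\sadj{\tilde\delta|_{\Lambda^2}}).
\end{equation*}
At $(x,k;x',-k')$ with $k, k'\neq 0$, the symbol $\sigma(\tilde\dd)(x,k):\omega\mapsto ik\wedge\omega$ has kernel $\CC k\subset\Lambda^1_x$, while $\sigma(\sadj{\tilde\delta|_{\Lambda^2}})(x',-k')$ is the dual of $\sigma(\delta|_{\Lambda^2})(x',k'):\eta\mapsto -ik'^\mu\eta_{\mu\cdot}$, whose kernel on $(\Lambda^1_{x'})^*\cong T_{x'}$ consists of vectors $v$ with $v\wedge k'=0$, i.e., $\CC(k')^\sharp$. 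Intersecting the two kernels inside $\Lambda^1_x\otimes(\Lambda^1_{x'})^*$ yields exactly $\CC\,k\otimes(k')^\sharp$. Since differential operators do not enlarge the wavefront set, $\WF(\tilde T^\knl)\subseteq\Rc$, so the polarisation fibre is trivial outside $\Rc$; combined with the lower bound this gives~\eqref{eq:WFpol_dEKdelta}.

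The main practical obstacle is careful bookkeeping of signs, density factors, and bundle identifications in the symbol computations; conceptually, however, the proof rests on the elementary fact that the two nilpotency relations $\dd^2 = 0 = \delta^2$ furnish exactly the two complementary kernel constraints needed to pin down the rank-one polarisation direction.
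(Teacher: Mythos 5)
Your proposal is correct and follows essentially the same route as the paper's proof: the lower bound via $\tilde T^\knl = (\tilde\dd\otimes\sadj{\tilde\delta})\tilde E_{K^{(0)}}^\knl$ and Lemma~\ref{lem:WFpol0} applied to the scalar polarisation set $\WF_\pol(\tilde E_{K^{(0)}}^\knl)=(\Rc\times\CC)\cup 0$, and the upper bound from the two annihilation identities $(\tilde\dd\otimes 1)\tilde T^\knl = 0$ and $(1\otimes\sadj{\tilde\delta|_{\Lambda^2}})\tilde T^\knl = 0$ combined with~\eqref{eq:WFpolgenbds}. The paper phrases the kernel-intersection step as an explicit index computation (contracting with $v$, $v'$ dual to $k$, $k'$) rather than in terms of $\ker\sigma(\tilde\dd\otimes 1)\cap\ker\sigma(1\otimes\sadj{\tilde\delta|_{\Lambda^2}})$, but the content is identical.
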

\begin{proof}
	Let $V=(\rho^{1/2}\dd  E_{K^{(0)}}\delta \rho^{-1/2})^\knl$. 
	Then $V=Q E_{\rho^{1/2}K^{(0)}\rho^{-1/2}}^\knl$, where 
	\begin{equation}
		Q=	\rho^{1/2}\dd\rho^{-1/2} \otimes \rho^{1/2}\sadj{\delta}\rho^{-1/2}
	\end{equation}  has principal symbol proportional to $k\otimes (k')^\sharp$ at $(x,k;x',-k')$. 	
	As $\WF_\pol(E_{\rho^{1/2}K^{(0)}\rho^{-1/2}}^\knl)
	=(\Rc\times\CC)\cup 0$, we deduce that the right-hand side of~\eqref{eq:WFpol_dEKdelta} is contained in the left-hand side by Lemma~\ref{lem:WFpol0}. Conversely, suppose  $(x,k;x',-k';w)\in\WF_\pol(V)$ with $w\neq 0$, so $(x,k;x',-k')\in\WF(V)\subset \Rc$.
	As 
	\begin{equation}
		(\rho^{1/2}\dd\rho^{-1/2}\otimes 1)V=0=(1\otimes  \rho^{1/2}\sadj{\delta}\rho^{-1/2})V,
	\end{equation} 
	Lemma~\ref{lem:WFpol0} implies that
	\begin{equation}
		k_\mu w_\alpha^{\phantom{\alpha}\beta} -  k_\alpha w_\mu^{\phantom{\alpha}\beta} = 0, \qquad
		w_\alpha^{\phantom{\alpha}\beta} (k')^\nu - w_\alpha^{\phantom{\alpha}\nu} (k')^\beta =0,
	\end{equation}
	and consequently $w_\alpha^{\phantom{\alpha}\beta} \in\CC k_\alpha (k')^\beta$,
	as is seen e.g., by contracting the above equations respectively with $v$ and $v'$ such that
	$v^\alpha k_\alpha = 1= v'_\alpha (k')^\alpha$ (noting that $k,k'$ are nonzero). This demonstrates the reverse inclusion.
\end{proof}

Next, we consider the case of Minkowski spacetime $M=\RR^4$ and $\eta=\textnormal{diag}(1,-1,-1,-1)$ with respect to standard inertial coordinates $(x^0,\ldots,x^3)$. Then $e^a=\dd x^a$ provides a global covector frame and $e^*_a=\partial/\partial x^a$ the dual vector frame. Let the operators $S^a_{\phantom{a}b}:\Gamma^\infty(T^*M)\to \Gamma^\infty(T^*M)$ be defined by $S^a_{\phantom{a}b}f=  \dlangle e^*_b,f\drangle e^a$. We will use the facts that (in general) $E_P = E_{K^{(1)}}-m^{-2}\dd E_{K^{(0)}}\delta$ and (in Minkowski spacetime) $E_{K^{(1)}} fe^a= (E_{K^{(0)}}f)e^a$, which have the following consequence.
\begin{lemma}
	In Minkowski spacetime, the following identity holds:
	\begin{equation}
		e^*_b E_P 	e^b  = 3E_{K^{(0)}}.
	\end{equation}
	Consequently,
	\begin{equation}\label{eq:SE_PSminus3E_P}
		S^a_{\phantom{a}b} E_P 	S^b_{\phantom{b}a} - 3E_P = 3m^{-2}\dd E_{K^{(0)}}\delta.
	\end{equation}
\end{lemma}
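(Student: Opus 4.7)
The plan is to prove the displayed identity $e^*_b E_P e^b = 3 E_{K^{(0)}}$ by direct computation, using the factorisation $E_P = E_{K^{(1)}} - m^{-2}\dd E_{K^{(0)}}\delta$, which follows from $E_P = E_{K^{(1)}} R$, $R = 1 - m^{-2}\dd\delta$ and the intertwining $E_{K^{(1)}}\dd = \dd E_{K^{(0)}}$. I would apply both sides to an arbitrary test function $f \in C_0^\infty(M)$ and handle the two resulting terms separately.

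For the Klein--Gordon term, the key observation is that in inertial coordinates the coframe $e^b = \dd x^b$ is $\nabla$-parallel, so $K^{(1)}(f e^b) = (K^{(0)} f) e^b$ and hence $E_{K^{(1)}}(f e^b) = (E_{K^{(0)}} f) e^b$; pairing with $e^*_b$ and summing over $b$ produces $\sum_b \delta^b_b\, E_{K^{(0)}} f = 4 E_{K^{(0)}} f$, the factor $4$ being the spacetime dimension. For the correction term, I would compute $\delta(f e^b) = -\partial^b f$ by hand in inertial coordinates, so that $\dd E_{K^{(0)}}\delta(f e^b) = -\partial_c\partial^b E_{K^{(0)}} f\,\dd x^c$; contracting with $e^*_b$ and summing gives $-\Box E_{K^{(0)}}f$. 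The crucial simplification is then the identity $K^{(0)} E_{K^{(0)}} = 0$ (since $E_{K^{(0)}} = E_{K^{(0)}}^- - E_{K^{(0)}}^+$), which implies $\Box E_{K^{(0)}} = -m^2 E_{K^{(0)}}$; multiplying by $-m^{-2}$ and adding to the Klein--Gordon contribution yields $(4 - 1) E_{K^{(0)}} f = 3 E_{K^{(0)}} f$, as required.

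For the consequence~\eqref{eq:SE_PSminus3E_P}, I would note that for a one-form $\omega$ the frame decomposition gives $S^b_{\phantom{b}a}\omega = \omega_a e^b$ and $S^a_{\phantom{a}b}$ extracts the $b$-component and multiplies by $e^a$, so
\begin{equation*}
	(S^a_{\phantom{a}b} E_P S^b_{\phantom{b}a}\omega)_\mu = \sum_b e^*_b \cdot E_P(\omega_\mu e^b),
\end{equation*}
which equals $3 E_{K^{(0)}}\omega_\mu$ by the identity just proved, applied componentwise to the scalar functions $\omega_\mu$. On the other hand $(E_{K^{(1)}}\omega)_\mu = E_{K^{(0)}}\omega_\mu$ by the same parallel-frame argument, so the factorisation of $E_P$ gives $3 (E_P\omega)_\mu = 3 E_{K^{(0)}}\omega_\mu - 3m^{-2}(\dd E_{K^{(0)}}\delta\omega)_\mu$, and subtracting delivers~\eqref{eq:SE_PSminus3E_P}.

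The only real difficulty is sign bookkeeping: the mostly-minus signature, the sign of $\delta$ on a one-form, and the distinction between $K^{(0)} E_{K^{(0)}}^\pm = \id$ and $K^{(0)} E_{K^{(0)}} = 0$ must all be treated carefully. Otherwise no deeper machinery is needed; the argument is entirely local and confined to a single inertial chart, with the coefficient $3 = n-1$ arising directly from the combination of the dimension factor and the relation $\Box E_{K^{(0)}} = -m^2 E_{K^{(0)}}$.
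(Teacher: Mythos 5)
Your argument is correct and matches the paper's proof in all essentials: both exploit the factorisation $E_P = E_{K^{(1)}} - m^{-2}\dd E_{K^{(0)}}\delta$, the parallel inertial coframe to reduce $E_{K^{(1)}}$ to $E_{K^{(0)}}$ acting componentwise, translational invariance to commute $E_{K^{(0)}}$ past directional derivatives, and $K^{(0)}E_{K^{(0)}}=0$ to convert $\Box E_{K^{(0)}}$ to $-m^2 E_{K^{(0)}}$, yielding the coefficient $4-1=3$. The only cosmetic difference is that you spell out the consequence componentwise, whereas the paper writes it at the operator level as $S^a_{\phantom{a}b}E_P S^b_{\phantom{b}a}=3e^a E_{K^{(0)}}e^*_a=3E_{K^{(1)}}$.
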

\begin{proof}
	First, observe that $e_b^* E_{K^{(1)}} e^b = e_b^* e^b E_{K^{(0)}}= 4 E_{K^{(0)}}$. Next, one has
	\begin{equation}
		e_b^*\dd E_{K^{(0)}}\delta (f e^b)= -\nabla_{e_b^*} E_{K^{(0)}} \nabla_{(e^b)^\sharp}f = -\nabla_{e_b^*}  \nabla_{(e^b)^\sharp} E_{K^{(0)}}f = -\Box E_{K^{(0)}}f = m^2 E_{K^{(0)}}f,
	\end{equation}
	using translational invariance of $E_{K^{(0)}}$ to commute it with the directional derivative.
	The first relation follows directly. Then 
	$S^a_{\phantom{a}b} E_P 	S^b_{\phantom{b}a} = 3 e^a E_{K^{(0)}}e_a^* = 3 E_{K^{(1)}}$, from which 
	the second follows.  
\end{proof}

\begin{lemma}\label{lem:WFpolEPMink}
	The equality~\eqref{eq:WFpolEProca} holds for Minkowski spacetime.
\end{lemma}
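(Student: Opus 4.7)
}
The plan is to leverage the Minkowski-specific operator identity \eqref{eq:SE_PSminus3E_P} together with Lemmas \ref{lem:WFpolEProca_apriori} and \ref{lem:WFpoldEKdelta} in order to sharpen the a priori two-dimensional bound \eqref{eq:WFpolEProca_apriori} on the fibre down to the claimed one-dimensional line $\CC k\otimes (k')^\sharp$. The containment $\supset$ is already part of Lemma \ref{lem:WFpolEProca_apriori}, so only the reverse inclusion is needed, and by Theorem \ref{thm:WFEProca} the fibres are trivial off $\Rc\cup 0$.

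Fix $(x,k;x',-k')\in\Rc$ and $w\in \WF_\pol((\rho^{1/2}E_P\rho^{-1/2})^\knl)|_{(x,k;x',-k')}$, writing $\rho=(-g)^{1/2}$. In Minkowski the relation $(x,k)\sim(x',k')$ forces $x'=x+sk^\sharp$ for some $s\in\RR$ and $k'=k$ under the global parallelism. Since $S^a_{\phantom{a}b}$ is pointwise linear and commutes with $\rho^{\pm1/2}$, conjugating \eqref{eq:SE_PSminus3E_P} and passing to kernels gives
\begin{equation}
L\,(\rho^{1/2}E_P\rho^{-1/2})^\knl = 3m^{-2}(\rho^{1/2}\dd E_{K^{(0)}}\delta\rho^{-1/2})^\knl,
\end{equation}
where $L\in\Psi^0_\cl$ is the operator $\sum_{a,b}(S^a_{\phantom{a}b}\otimes \sadj{S^b_{\phantom{b}a}})-3\cdot\id$ on kernels. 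A short linear-algebra calculation shows that the principal symbol of $\sum_{a,b}S^a_{\phantom{a}b}\otimes \sadj{S^b_{\phantom{b}a}}$ sends $w=w_\mu^{\phantom{\mu}\nu'}e^\mu|_x\otimes \partial_{\nu'}|_{x'}$ to $(\tr w)\,I$, where $I=\sum_a e^a|_x\otimes \partial_a|_{x'}$ is the element implementing Minkowski parallel transport and $\tr w := w_\mu^{\phantom{\mu}\mu}$ in the standard frame. Hence the principal symbol $\ell$ of $L$ acts by $\ell(w)=(\tr w)I-3w$.

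Applying Lemma \ref{lem:WFpol0} to the displayed kernel identity yields $\ell(w)\in\WF_\pol((\rho^{1/2}\dd E_{K^{(0)}}\delta\rho^{-1/2})^\knl)|_{(x,k;x',-k')}$, which by Lemma \ref{lem:WFpoldEKdelta} is contained in $\CC\,k\otimes(k')^\sharp$. Thus
\begin{equation}
(\tr w)\,I - 3w = \alpha\, k\otimes(k')^\sharp
\end{equation}
for some $\alpha\in\CC$. The a priori bound \eqref{eq:WFpolEProca_apriori} gives $k^\mu w_\mu^{\phantom{\mu}\nu'}=0$, and contracting the identity with $k^\mu$ on the first slot kills the $-3w$ contribution as well as the $\alpha k\otimes(k')^\sharp$ contribution (because $k$ is null), leaving $(\tr w)k^{\nu'}=0$, so $\tr w=0$. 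Consequently $w=-\tfrac{\alpha}{3}k\otimes(k')^\sharp\in\CC\,k\otimes(k')^\sharp$, which is the required reverse inclusion.

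The main obstacle is the bookkeeping in the symbol computation for $L$: one must chase the formal dual and the two-sided action on the kernel carefully to extract the trace structure $\ell(w)=(\tr w)I-3w$ and, in particular, to identify the term $I$ with the Minkowski parallel transport rather than with some frame-dependent artefact. Once $\ell$ is in hand, elimination of $\tr w$ via contraction with the null $k$ — using that both $w$ and $k\otimes (k')^\sharp$ lie in the divergence-free subspace given by Lemma \ref{lem:WFpolEProca_apriori} — is essentially forced.
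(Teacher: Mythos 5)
Your proof is correct and follows essentially the same route as the paper: apply Lemma~\ref{lem:WFpol0} to the $\tfrac12$-densitised kernel form of~\eqref{eq:SE_PSminus3E_P} together with Lemma~\ref{lem:WFpoldEKdelta} to obtain $(\tr w)\,\Pi(x,x')-3w\in\CC\,k\otimes(k')^\sharp$, then use the divergence-free constraint $k^\alpha w_\alpha^{\phantom{\alpha}\beta}=0$ (coming from $\delta E_P=0$, equivalently Lemma~\ref{lem:WFpolEProca_apriori}) and nullity of $k$ to contract away the $\Pi$-component; the reverse inclusion is Lemma~\ref{lem:WFpolEProca_apriori}. Your explicit identification of the principal symbol as $w\mapsto(\tr w)I-3w$ and the remark that $I=\Pi(x,x')$ is the Minkowski parallel transport are accurate and match the paper's index computation; the final elimination of $\tr w$ by null contraction is the same step, phrased slightly differently.
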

\begin{proof}
	Suppose $(x,k;x',-k';w)\in\WF_\pol((\rho^{1/2}E_P\rho^{-1/2})^\knl)$ with $w\neq 0$. Then $(x,k;x',-k')\in\Rc$ by Theorem~\ref{thm:WFEProca}. Applying Lemma~\ref{lem:WFpol0} to~\eqref{eq:SE_PSminus3E_P} and using Lemma~\ref{lem:WFpoldEKdelta}, one finds
	\begin{equation}
		e^*_b(x)^\alpha w_\alpha^{\phantom{\alpha}\beta} e^b(x')_\beta e^a(x)\otimes e^*_a(x') 	-3 w \in \CC k\otimes (k')^\sharp,
	\end{equation}
	and consequently
	\begin{equation} 
		w \in \CC \Pi(x,x') + \CC k\otimes (k')^\sharp, 
	\end{equation} 
	where we have written $\Pi(x,x')=e^a(x)\otimes e^*_a(x')$. 
	However, one also has $\delta E_P = 0$ and hence $k^\alpha w_\alpha^{\phantom{\alpha}\beta}=0$.
	As $k$ is null and $k^\alpha \Pi_{\alpha}^{\phantom{\alpha}\beta}$ is nonvanishing, we conclude that $w\in \CC k\otimes (k')^\sharp$, so the left-hand side of~\eqref{eq:WFpolEProca} is contained in the right-hand side (for Minkowski spacetime). The reverse inclusion holds by
	Lemma~\ref{lem:WFpolEProca_apriori}, thus establishing~\eqref{eq:WFpolEProca} for Minkowski spacetime. 
\end{proof}

Now, we return to the general case. It is useful to introduce some structures familiar in locally covariant quantum field theory~\cite{BrFrVe03,FewVerch_aqftincst:2015}. Let $\Loc$ be the category whose objects
are $4$-dimensional connected oriented globally hyperbolic spacetimes $\Mb=(M,g,\ogth,\tgth)$, where
$\ogth$ and $\tgth$ are equivalence classes of nonvanishing $4$-form and timelike $1$-form fields, representing choices of orientation and time-orientation, with equivalence denoting the same (time)-orientation. A $\Loc$-morphism
$\psi:(M,g,\ogth,\tgth)\to (M',g',\ogth',\tgth')=\Mb'$ is a smooth embedding $\psi:M\to M'$ satisfying $g=\psi^*g'$, $\ogth=\psi^*\ogth'$, $\tgth=\psi^*\tgth'$ and such that $\psi(M)$ is causally convex in $\Mb'$. If $\psi(M)$ contains a Cauchy surface of $\Mb'$ then $\psi$ is a \emph{Cauchy morphism}. An important fact (Proposition~2.4 in~\cite{FewVer:dynloc_theory}) is that any $\Mb,\Mb'$ with Cauchy surfaces that are oriented-diffeomorphic can be connected by a chain of 
Cauchy morphisms of the form
\begin{equation}\label{eq:Cauchy_chain}
 	\Mb\leftarrow \Mb'' \rightarrow \Mb''' \leftarrow \Mb''''\rightarrow \Mb'.
\end{equation}
Let $P_\Mb$ and $K^{(p)}_\Mb$ denote the Proca and $p$-form Klein--Gordon operators on $\Mb\in\Loc$, and similarly let $\rho_\Mb$ be the metric density on $\Mb$. If $\psi:\Mb\to\Nb$ in $\Loc$ then the tensor pullback intertwines the operators on $\Mb$ and $\Nb$:
\begin{equation}
	\psi_T^* P_\Nb = P_\Mb \psi_T^*, \qquad \psi_T^* K^{(p)}_\Nb = K^{(p)}_\Mb \psi_T^*.
\end{equation} 
By uniqueness of Green operators, one easily sees that 
\begin{equation}
	E_{P_\Mb}^\# = \psi_T^* E_{P_\Nb}^\# \psi_{T*}, \qquad E_{K^{(p)}_\Mb}^\# = \psi_T^* E_{K^{(p)}_\Nb}^\# \psi_{T*},
\end{equation}
where $\#$ stands for $+$, $-$ or no symbol and $\psi_{T*}$ is the tensorial pushforward.
At the level of kernels, one has the identity
\begin{equation}
	(\rho_\Mb^{1/2} E_{P_\Mb}^\# \rho_\Mb^{-1/2})^\knl = (\psi\times\psi)^*_T	(\rho_\Nb^{1/2} E_{P_\Nb}^\# \rho_\Nb^{-1/2})^\knl,
\end{equation}
which is a bitensorial pullback. Consequently,
\begin{align}
	\WF_\pol((\rho_\Mb^{1/2} E_{P_\Mb}^\# \rho_\Mb^{-1/2})^\knl) &= 
	\{(x,D\psi|_x^* l; x', -D\psi|_{x'}^*l'; (D\psi|_x^*\otimes  D\psi|_{x'}^{-1})w):\nonumber\\
	&\qquad
	(\psi(x),l;\psi(x'),-l';w)\in
	\WF_\pol((\rho_\Nb^{1/2} E_{P_\Nb}^\# \rho_\Nb^{-1/2})^\knl)
	\}.
\end{align}
In particular, $\WF_\pol((\rho_\Nb^{1/2} E_{P_\Nb}^\# \rho_\Nb^{-1/2})^\knl)|_{\psi(M)\times\psi(M)}$ covariantly determines $\WF_\pol((\rho_\Mb^{1/2} E_{P_\Mb}^\# \rho_\Mb^{-1/2})^\knl)$, and vice versa. Note that this is consistent with the statement of Theorem~\ref{thm:WFpolEProca}.

Putting these facts together with propagation of polarisation, one obtains the following.
\begin{lemma}\label{lem:WFpolProcaprop}
	Let $\Mb,\Mb'\in\Loc$ with Cauchy surfaces that are oriented-diffeomorphic.   If~\eqref{eq:WFpolEProca} fails at some $(x,k;x',-k')\in\Rc_\Mb$, then for every Cauchy surface $\Sigma$ in $\Mb'$ there exists $y\in\Sigma$ so that~\eqref{eq:WFpolEProca} fails
	at $(y,l;y,-l)$ for some nonzero null covector $l$ at $y$.	
\end{lemma}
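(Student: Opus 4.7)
The plan combines a propagation-of-polarisation argument within a single globally hyperbolic spacetime with a transfer argument along the chain of Cauchy morphisms connecting $\Mb$ and $\Mb'$.

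First, within any $\Nb\in\Loc$, I would show that a failure of~\eqref{eq:WFpolEProca} at some $(x,k;x',-k')\in\Rc_\Nb$ forces a failure at a diagonal point $(y,l;y,-l)$ above any prescribed Cauchy surface of $\Nb$. The key observation is the intertwining $K^{(1)}_\Nb E_{P_\Nb}=R=E_{P_\Nb}K^{(1)}_\Nb$ with $R=1-m^{-2}\dd\delta$, which entails that $(K^{(1)}_\Nb\otimes 1)E_{P_\Nb}^\knl$ and $(1\otimes\sadj{K^{(1)}_\Nb})E_{P_\Nb}^\knl$ are both smooth off the diagonal of $N\times N$. Since $K^{(1)}_\Nb$ is normally hyperbolic with Weitzenb\"ock connection the Levi--Civita connection on $\Lambda^1 N$, Theorem~\ref{thm:prop_pol} applies to $E_{P_\Nb}^\knl$ via both $K^{(1)}_\Nb\otimes 1$ and $1\otimes\sadj{K^{(1)}_\Nb}$ away from the diagonal, and polarisations are transported along bicharacteristic strips by Levi--Civita parallel transport on the respective factor. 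Propagating by $K^{(1)}_\Nb\otimes 1$ along the null geodesic $\gamma$ through $x$ and $x'$ moves a failure at $(x,k;x',-k')$ to a failure at $(y,l;x',-k')$ for any $(y,l)$ on the bicharacteristic through $(x,k)$, and a subsequent propagation by $1\otimes\sadj{K^{(1)}_\Nb}$ along $\gamma$ moves the failure to $(y,l;y,-l)$. Because $k$ is $\nabla$-parallel along $\gamma$ with value $l$ at $y$ and $(k')^\sharp$ transports to $l^\sharp$, the composed parallel transport carries $\CC k\otimes (k')^\sharp$ onto $\CC l\otimes l^\sharp$, so a polarisation lies outside the former if and only if its transport lies outside the latter, and the failure condition survives. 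As $\gamma$ is inextendible in the globally hyperbolic spacetime $\Nb$, it meets every Cauchy surface, so $y$ may be placed on any prescribed surface.

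For the second ingredient, I would invoke Proposition~2.4 of~\cite{FewVer:dynloc_theory} to obtain a chain~\eqref{eq:Cauchy_chain} of Cauchy morphisms between $\Mb$ and $\Mb'$. For any Cauchy morphism $\psi:\Nb\to\Nb'$ in the chain, the covariance formula stated just before the lemma shows that failures of~\eqref{eq:WFpolEProca} at points of $\Rc_{\Nb'}$ lying over $\psi(N)\times\psi(N)$ correspond bijectively to failures in $\Rc_\Nb$. Traversing the chain one arrow at a time, I would proceed as follows: given a failure somewhere in the current spacetime, first apply step (i) inside that spacetime to move the failure to a diagonal point over a Cauchy surface contained in the image of the next morphism (such a Cauchy surface exists because the image of every Cauchy morphism contains a Cauchy surface of its target), and then pull back or push forward along that morphism to obtain a failure in the adjacent spacetime. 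After four steps this transports the original failure in $\Mb$ into $\Mb'$, and a final application of step (i) within $\Mb'$ places the resulting diagonal failure over the prescribed Cauchy surface $\Sigma$.

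The main obstacle is the verification in step (i) that the composed parallel transport preserves the one-dimensional ``good'' subspace $\CC k\otimes (k')^\sharp$, so that the failure condition is genuinely stable under propagation. This reduces to the fact that the cotangent to an affinely parametrised null geodesic is Levi--Civita parallel, as is its sharp, so that the right-hand side of~\eqref{eq:WFpolEProca} transforms tensorially in exactly the way the polarisation set of $(\rho^{1/2}E_P\rho^{-1/2})^\knl$ does under the induced parallel transport on $\Lambda^1 M\otimes (\Lambda^1 M)^*$. Once this compatibility is secured, the covariance of polarisation sets under tensor pullback and the abundance of Cauchy surfaces inside images of Cauchy morphisms make the inter-spacetime transfer routine.
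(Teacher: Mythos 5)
Your approach is essentially the one the paper takes: propagate failures of~\eqref{eq:WFpolEProca} to diagonal points over any Cauchy surface using $K^{(1)}$-based operators, then transfer through a chain of Cauchy morphisms. The overall structure is correct, but there is a concrete error in the key intertwining formula that needs fixing for the propagation steps to be justified as stated.

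You write $K^{(1)}_\Nb E_{P_\Nb}=R=E_{P_\Nb}K^{(1)}_\Nb$, but this holds only for the retarded/advanced operators: $K^{(1)} E_P^{\pm} = R = E_P^{\pm} K^{(1)}$. For the advanced-minus-retarded operator $E_P = E_P^{-}-E_P^{+}$ (which is what appears in~\eqref{eq:WFpolEProca}) the two copies of $R$ cancel and one has $K^{(1)}_\Nb E_{P_\Nb}=0=E_{P_\Nb}K^{(1)}_\Nb$. This is not a cosmetic point. Under your stated formula, $(K^{(1)}\otimes 1)E_{P}^\knl$ would have wavefront set on the conormal of the diagonal, and then the bicharacteristic strips you invoke are problematic: in your first step, $c\times\{(x',-k')\}$ may pass through the twisted-diagonal point $(x',k';x',-k')$ if $x'$ lies between $x$ and the target $y$ (and you cannot choose $y$ freely if you also want it on a prescribed Cauchy surface); in your second step, $\{(y,l)\}\times (-c')$ terminates exactly on the twisted diagonal at $(y,l;y,-l)$. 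Theorem~\ref{thm:prop_pol} requires the strip to avoid $\WF(Pu)$, so "smooth off the diagonal" is not adequate here. The correct observation $K^{(1)} E_P = 0 = E_P K^{(1)}$ removes the obstruction cleanly, and is what the paper uses; once that is in place your propagation steps go through. Incidentally, the paper uses $K^{(1)}\otimes 1 - 1\otimes \sadj{K^{(1)}}$ for the second move (directly sliding the twisted-diagonal point along the strip), while you use $1\otimes\sadj{K^{(1)}}$ to return to the diagonal from an off-diagonal point; with the corrected formula both routes are valid. The Cauchy-chain transfer argument and the tensoriality check on $\CC k\otimes (k')^\sharp$ are handled correctly.
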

\begin{proof}
	Suppose that $w\in\WF_\pol((\rho_\Mb^{1/2}E_{P_\Mb}\rho_\Mb^{-1/2})^\knl)|_{(x,k;x',-k')}\notin\CC k\otimes (k')^\sharp$. As $K^{(1)}_\Mb E_{P_\Mb}=0$, it follows by 
	propagation of polarisations for $(\rho_\Mb^{1/2})K^{(1)}_\Mb\rho_\Mb^{-1/2}\otimes 1$ that~\eqref{eq:WFpolEProca} also fails at $(x',k';x',-k')$. As also
	$K^{(1)}_\Mb E_{P_\Mb}-E_{P_\Mb}K^{(1)}_\Mb=0$, by propagation of polarisations for
	$(\rho_\Mb^{1/2})K^{(1)}_\Mb\rho_\Mb^{-1/2}\otimes 1 - 1\otimes \sadj{}(\rho_\Mb^{1/2}K^{(1)}_\Mb\rho_\Mb^{-1/2})$,~\eqref{eq:WFpolEProca} also fails at $(x'',k'';x'',-k'')$ for every $(x'',k'')$ on the bicharacteristic strip (for $K^{(1)}_\Mb$) through $(x',k')$.
	In particular, every Cauchy surface of $\Mb$ contains a point of this type. 
	Now consider a chain of Cauchy morphisms of the form~\eqref{eq:Cauchy_chain} linking $\Mb$ and $\Mb'$ 
	and deduce successively that~\eqref{eq:WFpolEProca} fails at points of the above type on
	every Cauchy surface of $\Mb''$, $\Mb'''$, $\Mb''''$ and $\Mb'$. 
\end{proof}
\begin{proof}[Proof of Theorem~\ref{thm:WFpolEProca}]
	From Lemma~\ref{lem:WFpolEProca_apriori}, it follows that the only way~\eqref{eq:WFpolEProca} can fail is if the fibre of $	\WF_\pol((\rho_\Mb^{1/2}E_{P_\Mb}\rho_\Mb^{-1/2})^\knl)$ above some point of $\Rc_\Mb$ contains elements outside $\CC k\otimes (k')^\sharp$. Suppose this does occur at some point of $\Rc_\Mb$, which -- without loss -- we can take
	to be of the form $(x,k;x,-k)$ for $(x,k)\in\Nc_\Mb$ by Lemma~\ref{lem:WFpolProcaprop}. Let $N$ be a region containing $x$ whose Cauchy surface is a ball, whereupon~\eqref{eq:WFpolEProca} fails for the Proca operator on $\Mb|_N$ and accordingly also fails for any $\Mb'\in\Loc$ with Cauchy surface topology $\RR^3$ by Lemma~\ref{lem:WFpolProcaprop} (and because all oriented manifolds with topology $\RR^3$ are oriented diffeomorphic by either the identity or a reflection; see~\cite{Muellner2009} for a general discussion of the chirality of manifolds). In particular,~\eqref{eq:WFpolEProca} fails for Minkowski spacetime, contradicting Lemma~\ref{lem:WFpolEPMink}. Hence~\eqref{eq:WFpolEProca} holds.
\end{proof}

Theorem~\ref{thm:WFpolEProca} can be contrasted with the following example, which shows that solutions
$A\in\DD'(T^*M)$ to the Proca equation themselves need not have polarisation set with fibres parallel to $k$ at $(x,k)$. We do this in Minkowski spacetime for simplicity, using inertial coordinates $(t,x^1,x^2,x^3)$. Let $\phi\in\DD'(\RR^4)$ solve
$K^{(0)}\phi =0$ and fix real constants $v_i$ ($i=1,2,3$), setting $v=v_i \dd x^i$, i.e., 
$v_\bullet=(0,\vb)$ in components. Then 
\begin{equation}
	A = -(\nabla_{v^\sharp} \phi)\dd t + (\nabla_{\partial/\partial t}\phi)v
\end{equation}
solves $K^{(1)}A = 0$ and $\nabla^\mu A_\mu = -\nabla_{\partial/\partial t}\nabla_{v^\sharp} \phi +
\nabla_{v^\sharp}\nabla_{\partial/\partial t}\phi  = 0$, and therefore solves the Proca equation.
Suppose $(x,k;w)\in\WF_\pol(\rho^{1/2}A)\setminus 0$, which can only occur when $k$ is null, $k_\bullet=(\pm\|\kb\|,\kb)$, because
$\WF(A)\subset \Char K^{(1)}=\Nc$. We have $k^\mu w_\mu=0$ because $\nabla^\mu A_\mu=0$. Additionally, $u^\mu A_\mu=0$ for any fixed $4$-vector annihilating $\dd t$ and $v$, which gives 
$u^\mu w_\mu=0$ for such $u$. Consequently, $w$ lies in the span of $\dd t$ and $v$, leading to the conclusion that $w \in \CC z$, where
\begin{equation}\label{eq:zfromv}
z=- v(k^\sharp)\dd t + (\dd t)(k^\sharp) v,
\end{equation} 
i.e., $z_\bullet=(\vb\cdot\kb,\pm \|\kb\| \vb)$. As $\WF_\pol(A)$ is nontrivial precisely over $\WF(A)$ and the fibres are contained in span of $z$ and therefore at most one-dimensional, this shows that
\begin{equation}
	\WF_\pol(A) = \{(x,k;w):(x,k)\in\WF(A),~w\in \CC z\}\cup 0
\end{equation}
for solutions of the above type. For given null $k$, any covector $z$ such that $k^\mu z_\mu=0$ can be written in the form~\eqref{eq:zfromv} for some $\vb\in\RR^3$, reflecting the fact that Proca solutions can display three physical polarisations. Meanwhile, if $\phi$ has spacelike compact support then so does $A$, from which one may deduce that $A=E_P j$ for some $j\in \EE'(\Lambda^1 M)$. At first sight it seems paradoxical that the operator $E_P$ can produce spacelike polarisations when its own polarisation set consists purely of bitensors $k\otimes (k')^\sharp$ with $k$ and $k'$ null. This illustrates that there is no simple composition rule for polarisation sets; it turns out that $\WF_\pol(E_P)$ is dominated by the contributions from the terms that implement the constraint $\delta A=0$ and masks the lower order terms responsible for the propagating degrees of freedom.

\section{Conclusion}
\label{sec:Conc}

This paper has determined the polarisation set of Green operators of any normally hyperbolic operator $P$
on a finite-rank complex vector bundle over a globally hyperbolic spacetime (Theorem~\ref{thm:WFpolEPpmandEP}). 
In particular the fibres of the polarisation set are given in terms of the parallel propagator for
transport according to the Weitzenb\"ock connection of $(-g)^{-1/2}P (-g)^{1/2}$, i.e., $P$ shifted to act on  
sections of density weight $\tfrac{1}{2}$ lower. This is in agreement with previous investigations of specific operators~\cite{Kratzert:2000,Hollands:2001,Hintz:2017}. The main significance of this result is that it allows the computation of wavefront sets of associated operators via Corollary~\ref{cor:WFpol2}. 

As a particular application of these results, we have identified a gap in a recent paper on the Hadamard condition for the quantised Proca field~\cite{MorettiMurroVolpe:2023}, which is closed in Theorem~\ref{thm:WFEProca}. Although not strictly necessary for that purpose, we have also computed the polarisation set of the advanced-minus-retarded Green operator for the Proca operator on $\tfrac{1}{2}$-densitised $1$-forms. This indicates at once the strength and limitations of the polarisation set, because the very geometrically natural result of Theorem~\ref{thm:WFpolEProca}
does not reflect the existence of three propagating degrees of freedom, but is dominated by the constraint that removes the unphysical ghost modes. A very interesting question is whether there is a more stratified polarisation set that could see behind the constraint to identify the physical degrees of freedom.

As mentioned in the introduction, the results given here are part of a wider project on the Hadamard condition for a class of Green hyperbolic operators beyond the normally hyperbolic type~\cite{Fewster:2025a,FewsterKlein:2025}, where 
Corollary~\ref{cor:WFpol2} finds applications. Among other things, these papers show that the MMV definition of Hadamard states is in fact equivalent to an older version given in~\cite{Few&Pfen03} and also show how the notion of Hadamard states can be extended to charged Proca fields in the presence of an external electromagnetic field and other coupled Proca theories, such as a Proca-scalar system~\cite{FewsterKlein:2025}. 
The generalised Hadamard condition that is developed in~\cite{Fewster:2025a} is well-suited for the discussion of measurement in quantum field theory~\cite{FewVer_QFLM:2018} and will be used in future work.

\paragraph{Acknowledgements} 
The author's work is partly supported by EPSRC Grant EP/Y000099/1 to the University of York. 
It is a pleasure to thank Onirban Islam, Daan Janssen, Valter Moretti, and Alexander Strohmaier for useful conversations on questions relating to this work, and Christiane Klein for a careful reading of the manuscript.

For the purpose of open access, the author has applied a creative commons attribution (CC BY) licence to any author accepted manuscript version arising.

\appendix
\section{Background on pseudodifferential operators}
\label{appx:psido}

For convenience, we summarise the main definitions from~\cite[Sec.~18.1]{Hormander3}, to which the reader is referred for full detail. Certain aspects are elaborated.

\paragraph{Pseudodifferential operators on $\RR^n$}
For $m\in\RR$, let $S^m$ be the symbol class on $\RR^n$ defined in~\cite[Def.~18.1.1]{Hormander3}, i.e., the space of all $a\in C^\infty(\RR^n\times \RR^n)$ so that 
\begin{equation}
	\sup_{(x,\xi)\in\RR^{2n}} (1+|\xi|)^{m-|\alpha|} |\partial_\xi^\alpha\partial_x^\beta a(x,\xi)|<\infty
\end{equation}
for all multi-indices $\alpha, \beta$. The above suprema provide seminorms on $S^m$ making it into a Fr\'echet space.
If $U\subsetneq \RR^n$ is open then $S^m(U\times\RR^n)$ is the set of $a\in C^\infty(U\times\RR^n)$ so that $(\phi\otimes 1)a\in S^m$ for all $\phi\in C_0^\infty(U)$. (Note that this definition would not reduce to that of $S^m$ when $U=\RR^n$ -- see pp.~83-84 in~\cite{Hormander3}.) One also writes $S^{-\infty}=\bigcap_m S^m$, $S^{-\infty}(U\times\RR^n)=\bigcap_m S^m(U\times\RR^n)$.

For $a\in S^m$, the pseudodifferential operator $\Op a$ is defined by
\begin{equation}
	((\Op a)u)(x) = \int \frac{\dd^n\xi}{(2\pi)^n} e^{\ii x\cdot \xi}a(x,\xi) \hat{u}(\xi), \qquad u\in\SS(\RR^n),
\end{equation}
where $\hat{u}(\xi)= \int\dd^n y\, e^{-\ii y\cdot \xi} u(y)$ is the Fourier transform, thus yielding a class $\Op S^m$ of pseudodifferential operators of order $m$. Each $\Op a$ extends to a continuous endomorphism of $\SS'(\RR^n)$ (\cite[Thm.~18.1.7]{Hormander3}). Meanwhile, one obtains $\Op a:\SS'(\RR^n)\to \DD'(U)$ for
$a\in S^m(U\times\RR^n)$ by 
\begin{equation}
	((\Op a)u)(f) = \int_{U\times\RR^n}\dd^n x\,\frac{\dd^n\xi}{(2\pi)^n} f(x) e^{\ii x\cdot \xi} a(x,\xi)  \hat{u}(\xi), \qquad f\in C_0^\infty(U).
\end{equation}
The composition rule is given by~\cite[Thm 18.1.8]{Hormander3}: for $a\in S^m$, $a'\in S^{m'}$ there is $b\in S^{m+m'}$ so that
\begin{equation}\label{eq:composition1}
	(\Op a) (\Op a') = \Op b,
\end{equation}
where 
\begin{equation}\label{eq:composition2}
	b(x,\xi)\sim \sum_\alpha \frac{(-\ii)^{|\alpha|}}{|\alpha|!}(\partial_\xi^\alpha a)(x,\xi) (\partial_x^\alpha a')(x,\xi) =
	a(x,\xi)a'(x,\xi) -\ii \frac{\partial a}{\partial \xi_\mu}(x,\xi) \frac{\partial a'}{\partial x^\mu}(x,\xi) + \cdots
\end{equation}
and the meaning of the asymptotic expansion is explained precisely in \cite[Prop. 18.1.3]{Hormander3}.

\paragraph{Pseudodifferential operators on manifolds}
A pseudodifferential operator of order $m$ on a smooth $n$-manifold $X$ is a continuous linear operator 
$A:C_0^\infty(X)\to C^\infty(X)$ so that for every chart $(X_\kappa,\kappa)$ and every $\phi,\psi\in C_0^\infty(\kappa(X_\kappa))$,  the map
$\SS'(\RR^n)\owns u\mapsto \phi (\kappa^{-1})^* A \kappa^* \psi u$ belongs to $\Op S^m$. In this case $A$ extends
to a continuous linear map from $\EE'(X)\to\DD'(X)$, and we write $A\in\Psi^m(X)$. If $(X_\kappa,\kappa)$ is a chart and $A\in\Psi^m(X)$ then one obtains a continuous linear map $A_\kappa:C_0^\infty(\kappa(X_\kappa))\to C^\infty(\kappa(X_\kappa))$ by $A_\kappa f= (A(f\circ\kappa))\circ\kappa^{-1}$. By~\cite[Prop.~18.1.9]{Hormander3} and the definition of $\Psi^m(X)$ this implies that there is $a_\kappa\in S^m(\kappa(X_\kappa)\times\RR^n)$ (determined modulo $S^{-\infty}(\kappa(X_\kappa)\times\RR^n)$) so that $A_\kappa - \Op a_\kappa$ is an operator with integral kernel in $C^\infty(\kappa(X_\kappa)\times\kappa(X_\kappa))$. We call $\Op a_\kappa$ a chart representative of $A$ for the chart $(X_\kappa,\kappa)$.

The symbol class $S^m(T^*X)$ is defined to be all $a\in C^\infty(T^*X)$ whose chart representative $a_\kappa= a\circ \kappa^*\in C^\infty(T^*\kappa(X_\kappa))$ satisfies $a_\kappa\in S^m(\kappa(X_\kappa)\times\RR^n)$ for every local chart $(X_\kappa,\kappa)$. For $A\in\Psi^m(X)$ the principal symbol $a\in S^m(T^*X)$ is defined up to addition of symbols in $S^{m-1}(T^*X)$ by patching together pullbacks of symbols obtained from chart representatives of $A$ (which are themselves defined modulo smoothing operators). See~\cite[pp.85-86]{Hormander3} for the precise description.

\paragraph{Pseudodifferential operators on bundles and the refined principal symbol}
If $E$ and $F$ are smooth finite-rank complex vector bundles over $X$, then 
$\Psi^m(E,F)$ is defined as the space of $A:\Gamma_0^\infty(E)\to\Gamma^\infty(F)$ so that for any local frames $e_r$, $f_s$ of $E$ and $F$ defined on open $Y\subset X$ there are $A^s_{\phantom{s}r}\in \Psi^m(Y)$ so that
\begin{equation}
	A u = f_s A^s_{\phantom{s}r}u^r
\end{equation}
on $Y$ for all $u=u^r e_r\in \Gamma^\infty(E|_Y)$ (see~\cite[Def.~18.1.32]{Hormander3}). The notation $\Psi^m(E)$ is shorthand for $\Psi^m(E,E)$. By what has already been said, this framing
results in a matrix of principal symbols $a^s_{\phantom{s}r}\in S^{m}(T^*Y)/S^{m-1}(T^*Y)$.

Suppose one changes frame to $e'_{r'}$ and $f'_{s'}$ so that $e_r = e'_{r'}M^{r'}_{\phantom{r'}r}$ and
$f_s = f'_{s'}N^{s'}_{\phantom{s'}s}$ for $M$ and $N$ of appropriate dimensions with entries in $C^\infty(Y)$. Then the matrix representing $A\in\Psi^m(E,F)$ transforms to $(A')^{s'}_{\phantom{s'}r'}$ obeying
\begin{equation}
	N^{s'}_{\phantom{s'}s} A^s_{\phantom{s}r} =(A')^{s'}_{\phantom{s'}r'} M^{r'}_{\phantom{r'}r} .
\end{equation}
To compute the transformed matrix of principal symbols, we use the composition formulae~\eqref{eq:composition1} and~\eqref{eq:composition2} applied to chart representatives
in a chart $(X_\kappa,\kappa)$ with $X_\kappa\subset Y$. 
This yields
\begin{align}\label{eq:symboltrans}
	N^{s'}_{\phantom{s'}s}(\kappa^{-1}(x)) (a^s_{\phantom{s}r})_{\kappa}(x,\xi) &=
	((a')^{s'}_{\phantom{s'}r'})_\kappa(x,\xi) M^{r'}_{\phantom{r'}r}(x) 
	-\ii (\partial_{\xi_\mu} ((a')^{s'}_{\phantom{s'}r'})_\kappa(x,\xi))(\partial_{x^\mu}
	M^{r'}_{\phantom{r'}r}\circ \kappa^{-1})(x) 
\end{align}
modulo $S^{m-2}(T^*\kappa(Y))$, or more compactly
\begin{equation}\label{eq:symboltranscompact}
	N_\kappa a_{\kappa} = a'_\kappa M_\kappa -\ii (\partial_{\xi_\mu}  a'_\kappa) (\partial_{x^\mu} M_\kappa),
\end{equation}
with entries modulo $S^{m-2}(T^*\kappa(Y))$, where $N_\kappa=N\circ\kappa^{-1}$, $M_\kappa=M\circ\kappa^{-1}$. Consequently, the matrices of principal symbols obey
$N^{s'}_{\phantom{s'}s} a^s_{\phantom{s}r} =(a')^{s'}_{\phantom{s'}r'} M^{r'}_{\phantom{r'}r}$ modulo $S^{m-1}(T^*Y)$, which shows that there is a well-defined frame-independent
principal symbol $a\in S^m(\pi^* \Hom(E,F))$, where $\pi^* \Hom(E,F)$ is the pullback of the homomorphism bundle $\Hom(E,F)$, with typical fibre $\Hom(E,F)_x=\Hom(E_x,F_x)$, by the projection $\pi:T^*X\to X$. The second term on the right-hand side of~\eqref{eq:symboltrans} shows that one cannot expect a simple transformation for lower order symbols in general. As mentioned in section~\ref{sec:psdo}, the situation is better for operators $A\in\Psi^m(\Omega^{1/2})$, for which the refined principal symbol $a^\textnormal{r}$ is defined modulo $S^{m-2}(T^*X)$ in $S^m(T^*X)$.

If one now considers $A\in\Psi^m(E\otimes\Omega^{1/2},F\otimes\Omega^{1/2})$, local framings
of $E$ and $F$ result in a matrix with entries $A^s_{\phantom{s}r}\in \Psi^m(\Omega^{1/2})$
and refined principal symbols with chart representatives obeying the matrix equation~\eqref{eq:refinedprincipal}. A change of frame as before produces a modified matrix $(a')^{\textnormal{r}}$. To compute this, note that
\begin{align}
	N_\kappa \frac{\partial^2 a_\kappa}{\partial x^\mu\partial \xi_\mu} &= 
	\frac{\partial^2 (N_\kappa a_\kappa)}{\partial x^\mu\partial \xi_\mu} - \frac{\partial N_\kappa}{\partial x^\mu} \frac{\partial a_\kappa}{\partial \xi_\mu} = 
	\frac{\partial^2 (a'_\kappa M_\kappa)}{\partial x^\mu\partial \xi_\mu} - \frac{\partial N_\kappa}{\partial x^\mu} \frac{\partial a_\kappa}{\partial \xi_\mu} \nonumber\\
	&= 
	\frac{\partial^2 a'_\kappa}{\partial x^\mu\partial \xi_\mu} M_\kappa 
	+\frac{\partial a'_\kappa}{\partial \xi_\mu}\frac{\partial M_\kappa}{\partial x^\mu}
	- \frac{\partial N_\kappa}{\partial x^\mu} \frac{\partial a_\kappa}{\partial \xi_\mu},
\end{align}
with entries agreeing modulo $S^{m-2}(T^*\kappa(Y))$. Consequently, in the same sense,
\begin{equation}\label{eq:reftransf}
	N_\kappa a_\kappa^\textnormal{r} =  (a'_\kappa)^\textnormal{r}M_\kappa
	- \frac{\ii}{2}\left(\frac{\partial a'_\kappa}{\partial \xi_\mu}\frac{\partial M_\kappa}{\partial x^\mu}
	+ \frac{\partial N_\kappa}{\partial x^\mu} \frac{\partial a_\kappa}{\partial \xi_\mu}
	\right) .
\end{equation}
(More precisely, we should write $(M_\kappa)\circ\pi$; we suppress this to simplify the notation.)
Now specialise to the case in which $F=E$, with local framing $e_r$ used for $f_r$ so that $N=M$, and assume the principal symbol is $a = b\id_E$ (modulo $S^{m-1}(E\otimes\Omega)$). In this case,~\eqref{eq:reftransf} becomes
\begin{equation}
	M_\kappa a_\kappa^\textnormal{r} =  (a'_\kappa)^\textnormal{r}M_\kappa
	-  \ii \frac{\partial b_\kappa}{\partial \xi_\mu}\frac{\partial M_\kappa}{\partial x^\mu}\id.
\end{equation}
By comparison, suppose that $\nabla^E$ is any connection on $E$, and define connection $1$-forms with respect to the framing $e_r$ by $\Gamma^E_V e_r = \nabla^E_V e_r$ for any $V\in \Gamma^\infty(TM)$. The pullback connection $\nabla^{\pi^*E}$ on $\pi^*E$ is defined so that $\nabla_W^{\pi^*E}\pi^*s |_{(x,k)}= \nabla_{\pi_* W}^E s|_x$ for $(x,k)\in T^*M$, $s\in \Gamma^\infty(E)$ and $W\in T_{(x,k)}T^*M$ and has connection $1$-forms 
obeying $\Gamma^{\pi^*E}_W = \Gamma^E_{\pi_*W}\circ \pi$ relative to the 
framing $\pi^* e_r$. If an alternative framing $e'_r$ is invoked, with
$e_r = e'_{r'}M^{r'}_{\phantom{r'}r}$, then the 
chart representatives of $\Gamma^E$ and $(\Gamma^E)'$ obey
\begin{equation}
	M_\kappa (\Gamma_V)_\kappa = (\Gamma^E_V)'_\kappa M_\kappa + V^\mu\frac{\partial M_\kappa}{\partial x^\mu}\id.
\end{equation}	
and the connection $1$-forms for $\pi^*E$ transform in a similar way.
It follows that in this case, $a^r+\ii\Gamma^{\pi^*E}_{W}$ is invariantly defined modulo $S^{m-2}(\pi^*\Hom(E))$ for $V^\mu = (\pi_*W)^\mu=\partial b_\kappa/\partial\xi_\mu$, which is the projection via $\pi$ of the Hamiltonian vector field
\begin{equation}
	X_b = \frac{\partial b_\kappa}{\partial \xi_\mu}\frac{\partial}{\partial x^\mu} - \frac{\partial b_\kappa}{\partial \xi_\mu}\frac{\partial}{\partial \xi^\mu}
\end{equation}
determined by $b$. We conclude that $a^r+\ii\Gamma^{\pi^*E}_{X_b}$ is invariantly defined modulo $S^{m-2}(\pi^*\Hom(E))$.

\paragraph{Polyhomogeneous operators and the subprincipal symbol} For $m\in\CC$, a symbol $a\in S^{m}$ is called \emph{polyhomogeneous} (of step size $1$) if
$a$ can be realised as an asymptotic expansion
\begin{equation}
	a \sim \sum_{j=0}^\infty a_{m-j} ,
\end{equation}
where each $a_{k}\in S^k$ is $k$-homogeneous away from the origin in $\xi$, i.e., $a_k(x,t\xi)=t^k a_k(x,\xi)$ for all $x\in \RR^n$, $t>1$, $\xi\in\RR^n\setminus B_1$ where $B_1$ is the closed unit ball. In this case, we write $a\in S^m_\phg$, and describe $\Op a$ as polyhomogeneous. (This notion can be defined when $m\in\CC$ but we only use the real case.)

If $A\in\Psi^m(X)$, and all the chart representatives $A_\kappa$ of $A$ are polyhomogeneous, then we write
$A\in \Psi^m_\phg(X)$. In this case, the principal symbol $a\in S^m(T^*X)$ of $A$ can be defined uniquely by choosing a $m$-homogeneous representative of the principal symbol defined above; similarly, for $A\in \Psi_\phg^m(\Omega^{1/2})$, we again have a unique homogeneous principal symbol $a\in S^m(T^*X)$, but in addition the subprincipal symbol may be defined as the unique $(m-1)$-homogeneous $a^\textnormal{sub}\in S^{m-1}(T^*X)$ so that $a^\textnormal{r} = a + a^\textnormal{sub}$ modulo $S^{m-2}(T^*X)$.

Similarly, for $A\in\Psi^m(E,F)$ (resp., $A\in \Psi^m(E\otimes\Omega^{1/2},F\otimes\Omega^{1/2})$), 
one says that $A$ is polyhomogeneous if all local frames $e_r$ for $E$ and $f_s$ for $F$ defined on open $Y\subset X$ induce matrices $A^r_{\phantom{r}s}\in \Psi_\phg^m(Y)$ (resp., $\Psi_\phg^m(\Omega^{1/2}_Y))$,
whereupon one writes $A\in\Psi_\phg^m(E,F)$ (resp., $A\in \Psi_\phg^m(E\otimes\Omega^{1/2},F\otimes\Omega^{1/2})$). Taking the principal or subprincipal symbols one obtains
matrices with entries that are $m$-homogeneous elements of $S^m(T^*Y)$ or $(m-1)$-homogeneous elements of $S^{m-1}(T^*Y)$. The principal symbol is thus uniquely defined in $S^m(\pi^*\Hom(E,F))$. 
In the case of $A\in\Psi_\phg^m(E\otimes\Omega^{1/2})$ with principal symbol $a=b\id_E$, and
connection $1$-forms $\Gamma^E$ from a connection on $E$ as above, 
$a^\textnormal{r}+\ii \Gamma^{\pi^* E}_{X_b}$ and hence $a^\textnormal{r}-a+\ii \Gamma^{\pi^*E}_{X_b}$ are invariantly defined in $S^{m-1}(\pi^*\Hom(E,F))$. In this sense the subprincipal symbol $a^{\textnormal{sub}} =a^\textnormal{r}-a$ defines connection $1$-forms $\ii a^{\textnormal{sub}}$ for a partial connection along the integral curves of $X_b$.
See also~\cite{IslamStrohmaier:2020}, Proposition 2.4 for a microlocal version of a similar statement.

\paragraph{Classical pseudodifferential operators} 
By $\Psi_\cl^m(X)$ we denote the space of \emph{classical pseudodifferential operators} on $X$, given by all properly supported operators $A\in\Psi^m_\phg(X)$. Each $A\in \Psi_\cl^m(X)$ extends to a map $\DD'(X)\to\DD'(X)$; similarly, $\Psi^m_\cl(E,F)$ denotes the properly supported elements of $\Psi^m_\phg(E,F)$.

{\small

}
%
\end{document}